\theoremstyle{plain}
\newtheorem{theorem}{Theorem}
\newtheorem{corollary}{Corollary}
\newtheorem{lemma}{Lemma}[section]
\newtheorem{open-problem}{Open Problem}
\Crefname{algocf}{Algorithm}{Algorithms}
\Crefname{equation}{Equation}{Equations} 
\theoremstyle{definition}
\newtheorem*{definition}{Definition}
\newcommand{\eqdef}{\triangleq}
\newcommand{\expect}[2]{\mathbb{E}_{#1}\left[#2\right]}
\DeclareMathOperator{\OPT}{OPT}
\DeclareMathOperator{\Rev}{Rev}
\newcommand{\epsfloor}[1]{\lfloor#1\rfloor_{\varepsilon}}
\DeclareMathOperator{\poly}{poly}
\DeclareMathOperator{\supp}{supp}
\title{The Sample Complexity of\texorpdfstring{\\}{ }Up-to-\texorpdfstring{$\varepsilon$}{Epsilon} Multi-Dimensional Revenue Maximization}
\author{Yannai A. Gonczarowski\thanks{Microsoft Research. \emph{E-mail}: \href{mailto:yannai@gonch.name}{yannai@gonch.name}. Research conducted while at the Hebrew University of Jerusalem and Microsoft Research.} \and S. Matthew Weinberg\thanks{Department of Computer Science, Princeton University. \emph{E-mail}: \href{mailto:smweinberg@princeton.edu}{smweinberg@princeton.edu}.}}
\date{March 2, 2021}
\begin{document}

\maketitle

\begin{abstract}
We consider the sample complexity of revenue maximization for multiple bidders in unrestricted multi-dimensional settings. Specifically, we study the standard model of $n$ additive bidders whose values for $m$ heterogeneous items are drawn independently. For any such instance and any $\varepsilon > 0$, we show that it is possible to learn an $\varepsilon$-Bayesian Incentive Compatible auction whose expected revenue is within $\varepsilon$ of the optimal $\varepsilon$-BIC auction from only polynomially many samples. 

Our fully nonparametric approach is based on
ideas that hold quite generally, and completely sidestep the difficulty of characterizing optimal (or near-optimal) auctions for these settings. Therefore, our results easily extend to general multi-dimensional settings, including valuations that are not necessarily even \emph{subadditive}, and arbitrary allocation constraints.
For the cases of a single bidder and many goods, or a single parameter (good) and many bidders, our analysis yields exact incentive compatibility (and for the latter also computational efficiency). Although the single-parameter case is already well-understood, our corollary for this case extends slightly the state-of-the-art.
\end{abstract}

\section{Introduction}

A fundamental question at the heart of the literature on mechanism design is that of \emph{revenue maximization} by a single seller who is offering for sale any number of goods to any number of (potential) bidders. In the classic economic literature, this problem is studied in a \emph{Bayesian} setting: the seller has prior knowledge of (often, independent) distributions from which the valuation of each bidder for each good is drawn, and wishes to devise a truthful mechanism that maximizes her revenue in expectation over these prior distributions. Over the past few years, numerous works at the interface of economics and computation are now studying a more demanding model: that of \emph{mechanism design from samples}. In this model, rather than possessing complete knowledge of the distributions from which the bidders' values for the various items are drawn, the seller more realistically only has access to samples from these distributions (e.g., past data). The goal in this setting is to learn with high probability an auction with good revenue guarantees given polynomially many (in the parameters of the problem) samples.

Revenue maximization from samples is commonly thought of as a ``next step'' beyond Bayesian revenue maximization. That is, existing works so far in this context take settings for which simple auctions in the related Bayesian problem are already well-understood and prove that these simple auctions can be learned efficiently via samples (up to an $\varepsilon$ loss, which will always be lost when optimizing from samples). For example: in single-parameter settings, seminal work of~\citet{m81} completely characterizes a simple and optimal auction in the Bayesian setting, and works such as~\citet{cr14, mr15, dhp16, ht16, rs16, gn17} prove that these simple mechanisms or variants thereof can be learned with polynomially many samples. Similarly, in multi-parameter settings with independent items, works of~\citet{chk07,chms10,cms15,hn12,bilw14,rw15,y15,cdw16,cm16,cz17} prove that simple mechanisms achieve constant-factor approximations in rich multi-dimensional settings, and works of~\citet{mr16,bsv16,bsv17,cd17,s17} prove that simple mechanisms with these guarantees can be learned with polynomially many samples. These analyses rely on a delicate understanding of the structure and/or inherent dimensionality of auctions that give such revenue guarantees to show how to learn such an auction without overfitting the samples. 

It is therefore unsurprising that the problem of learning an up-to-$\varepsilon$ revenue-maximizing multi-item auction from samples has not been previously studied, since the structure/\linebreak dimensionality of optimal (precisely or up-to-$\varepsilon$) multi-item auctions is not understood even when there is only one bidder, and even with independent items. Such auctions are known to be extremely complex, suffering from properties such as randomization~\citep{t04}, uncountable menu complexity~\citep{ddt13}, and non-monotonicity~\citep{hr15}. These domains provably lack the natural starting point of all previous works: a structured/low-dimensional mechanism in the Bayesian setting to learn via samples. 

In this paper we show that despite these challenges, up-to-$\varepsilon$ optimal multi-item auctions can be learned from polynomially many samples from the underlying bidder-item distributions. More formally, in a setting with $n$~bidders and $m$ items where the value of each bidder~$i$ for each item $j$ is drawn independently from a distribution $V_{i,j}$ supported on $[0,H]$ for some $H$ that is known to the seller, we show that polynomially many samples suffice for learning, with probability at least $1-\delta$, an $m$-item almost-truthful auction that maximizes the expected revenue among all possible $m$-item almost-truthful auctions up to an additive~$\varepsilon$. Below, BIC refers to \emph{Bayesian Incentive Compatible}: an auction for which it is in every bidders' interest to bid truthfully, given that all other bidders do so as well.

\begin{theorem}[Main Result --- informal version of \cref{polynomial-bic}]\label{intro-bic}
For $n$ bidders with independent values for $m$ items supported on $[0,H]$, for every $\varepsilon,\delta>0$ and for every $\eta\le\poly(n,m,H,\varepsilon)$, the sample complexity of learning, w.p.\ $1\!-\!\delta$, an $\eta$-BIC auction that maximizes revenue (among all $\eta$-BIC auctions) up to an additive $\varepsilon$ is $\poly(n,m,H,\nicefrac{1}{\varepsilon},\nicefrac{1}{\eta},\log\nicefrac{1}{\delta})$.
\end{theorem}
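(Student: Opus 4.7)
My plan is to bypass all structural characterizations of optimal multi-item auctions by working on a discrete, sampled proxy distribution on which the revenue-optimal BIC auction is the solution of a finite linear program, and then transferring that solution back to the true distribution via a ``robustness of BIC under perturbation'' argument that exploits the polynomial slack between $\eta$ and $\varepsilon$. Concretely, I would (1) fix a discretization width $\xi = \poly(\varepsilon/(nmH))$ and round every bidder-item value down to the nearest multiple of $\xi$, producing a distribution $D'$ on a finite grid of size $(H/\xi)^{nm}$; (2) draw $N$ i.i.d.\ samples and let $\hat D$ be the discretized empirical distribution; (3) output the auction $M^*$ solving the linear program whose variables are randomized allocations and payments for each discretized profile, whose constraints enforce exact BIC and IR on $\hat D$ (with a small interior margin, say $\eta/3$-BIC), and whose objective is expected revenue under $\hat D$, extended to off-grid reports by simply rounding each report down to the grid before querying $M^*$.

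For the sample bound, I would argue that for $N = \poly(n,m,H,1/\varepsilon,1/\eta,\log(1/\delta))$ chosen with sufficient slack, Hoeffding plus a union bound over the $(H/\xi)^{nm}$ grid cells (or a DKW-style bound) gives, with probability $1-\delta$, that $\hat D$ lies within total variation $\xi'$ of $D'$, where $\xi'$ can be driven to any prescribed polynomially small value. The two correctness claims to establish are then: \textbf{(a)}~$M^*$ is $\eta$-BIC on $D$, because $M^*$ is $\eta/3$-BIC on $\hat D$ and because interim allocation and payment quantities under two distributions within total variation $\xi'$ differ by at most $O(H\xi')$, while passing between $D$ and its discretization $D'$ costs at most $O(nm\xi)$ per interim quantity; taking $\xi,\xi'$ polynomially small absorbs both errors into the remaining $2\eta/3$ of slack. \textbf{(b)}~$\Rev_D(M^*) \ge \OPT_D^{\eta\text{-BIC}} - \varepsilon$, via the same perturbation bound applied to the revenue objective combined with the sandwich $\Rev_{\hat D}(M^*) \ge \OPT_D^{\eta\text{-BIC}} - \varepsilon/2$.

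The sandwich in~(b) is where the real technical content lies: given the optimal $\eta$-BIC auction $\tilde M$ for $D$, I need to exhibit some $\eta/3$-BIC (and hence LP-feasible) auction on $\hat D$ inheriting almost all of $\tilde M$'s revenue. Transferring $\tilde M$ to $\hat D$ by the perturbation lemma yields a $(\eta+O(H\xi'))$-BIC auction on $\hat D$ with nearly unchanged revenue, which I still must tighten into $\eta/3$-BIC. A uniform payment discount does not help, because shifting all interim utilities by the same amount leaves the IC slack unchanged; instead I would mix $\tilde M$ with a small weight $\alpha$ of a strictly-BIC benchmark $B$ whose IC margin $\beta$ between any two distinct grid types is bounded below (e.g., the quadratic-payment lottery $\Pr[\text{grant}]=v/H$, $\text{payment}=v^2/(2H)$, which on the $\xi$-grid achieves $\beta = \Omega(\xi^2/H)$). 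By linearity of utility in the mixture, the margin of $\alpha B + (1-\alpha)\tilde M$ is at least $\alpha\beta - (1-\alpha)\bigl(\eta+O(H\xi')\bigr)$, which exceeds $\eta/3$ once $\alpha$ is a polynomially small multiple of $\eta/\beta$; the revenue cost is $O(\alpha\cdot nH) = O(\eta\cdot nH/\beta)$, which is $\le \varepsilon/2$ as soon as $\eta \le \varepsilon\cdot\poly(n,m,H)^{-1}$, matching the theorem's hypothesis on $\eta$.

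The main obstacle is precisely this conversion: one must quantify the IC-margin-versus-revenue tradeoff of the mixing step uniformly in the unknown auction $\tilde M$, and in particular construct a benchmark with a uniformly bounded-below margin on the $\xi$-grid. A secondary subtlety is that the perturbation lemma for interim quantities relies on independence of values across bidders, which lets interim expectations factor into products of marginals so that a small shift in each marginal accumulates only additively, rather than exponentially, into the product distribution seen by any single bidder. Crucially, no part of the argument invokes any structural property of $\tilde M$ or $M^*$ beyond LP feasibility, which is what allows the same proof to extend verbatim to arbitrary (even non-subadditive) valuations and arbitrary allocation constraints.
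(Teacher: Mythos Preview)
Your high-level plan (discretize, solve an LP on the empirical proxy, transfer back via perturbation) matches the paper's, but there is a genuine gap in the sample-complexity step, and the IC-tightening step differs substantively from the paper's.

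\textbf{The sample-complexity gap.} You claim that ``Hoeffding plus a union bound over the $(H/\xi)^{nm}$ grid cells'' yields $\mathrm{TV}(\hat D,D')\le\xi'$ for polynomially small~$\xi'$ with polynomially many samples. As written this fails: even if every one of the $(H/\xi)^{nm}$ joint-cell frequencies is estimated to accuracy~$\gamma$, the resulting TV bound is $(H/\xi)^{nm}\gamma$, so driving TV below~$\xi'$ forces $\gamma\le\xi'(H/\xi)^{-nm}$ and hence $N\ge(H/\xi)^{2nm}/\poly$ --- exponential. The fix is to take $\hat D$ to be the \emph{product of the $nm$ empirical marginals} $\hat D_{i,j}$ (not the joint empirical) and use TV-subadditivity for products, $\mathrm{TV}(\hat D,D')\le\sum_{i,j}\mathrm{TV}(\hat D_{i,j},D'_{i,j})$, each summand being $O\bigl(\sqrt{(H/\xi)/N}\bigr)$ over a support of size~$H/\xi$. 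Your later remark that marginals ``accumulate additively'' is exactly this idea, but you invoke only independence \emph{across bidders}; you need (and the hypothesis grants) independence across all bidder--item pairs, otherwise the bound is still exponential in~$m$.

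\textbf{Comparison with the paper.} With that fix your route is legitimate but differs from the paper's in two places. First, the paper never controls total variation: it observes that the oracle's output lies in a set~$M$ of cardinality at most $(S{+}1)^{nm\lceil H/\varepsilon\rceil}$ (each marginal empirical histogram has at most $(S{+}1)^{\lceil H/\varepsilon\rceil}$ possibilities), applies a product-distribution concentration inequality (their Theorem~2) to each fixed $\mu\in M\cup\{\OPT_\varepsilon\}$ and each interim-utility functional, and union-bounds over that family of size $\exp(\poly\cdot\log S)$. Your TV route is arguably more direct (TV-closeness controls \emph{all} bounded functionals at once, obviating the counting), and both are polynomial. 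Second, for tightening approximate IC the paper does \emph{not} mix with a strictly-IC benchmark: it has the oracle return the \emph{exactly}-BIC optimum on $\hat D$ and invokes the $\varepsilon$-BIC-to-BIC revenue gap of Rubinstein--Weinberg (their Theorem~1) non-constructively, concluding that this BIC optimum is within $O(n\sqrt{mH\varepsilon})$ of the $O(m\varepsilon)$-BIC optimum on~$\hat D$, which in turn dominates $\Rev_{\hat D}(\OPT_\varepsilon)$. Your mixing construction is a reasonable substitute in the basic additive multi-item setting, but the benchmark~$B$ (the per-item quadratic lottery) requires~$X$ to admit independent per-item fractional grants; hence your final claim that the argument ``extends verbatim to arbitrary allocation constraints'' is not supported, whereas the paper's use of Theorem~1 is oblivious to both~$X$ and the valuation class.
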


The above \lcnamecref{intro-bic} is informal mostly because we have not specified exactly how bidders value bundles of items. Essentially the bidders may have arbitrary (i.e., not necessarily additive, not even necessarily subadditive) valuations subject to some Lipschitz condition (i.e., changing the value of bidder $i$ for item $j$ by $\varepsilon$ only changes the bidder's value for any outcome by at most $L\varepsilon$ for some absolute constant $L$).\footnote{Our results in fact hold even more generally: to arbitrary outcomes that do not even correspond to bundles of items. See \cref{model} for the full details.} We defer a formal definition to Section~\ref{model}, but only note here that commonly studied classes of valuations such as additive, unit-demand, or ``additive subject to constraints'' with independent items (as well as several natural subadditive and superadditive valuation classes, and even certain valuation classes with externalities) all satisfy our definition with Lipschitz constant $1$.

The main challenge in proving our result for $m>1$ items is noted above: the structure of (up-to-$\varepsilon$) optimal mechanisms for such settings is not understood, even for additive valuations. In particular, there is no known low-dimensional class of mechanisms that is guaranteed to contain an (up-to-$\varepsilon$) optimal mechanism for any product distribution, thus barring the use of many learning-theoretic arguments. Our result relies on a succinct structured argument, allowing us to reduce revenue maximization from samples to related problems of revenue maximization from given discrete distributions.

As the corresponding Bayesian question remains open (i.e., whether one can find, given the distributions explicitly, an up-to-$\varepsilon$ optimal mechanism in poly-time), our result is of course information-theoretic: it shows that polynomially many samples suffice for a computationally unbounded seller, but provides no computationally efficient learning algorithm. Concretely, the algorithm that we give uses as a black box an oracle that can perform (optimal or almost-optimal) multi-item Bayesian revenue maximization given (the full description of) finite prior distributions.\footnote{Note however that if computationally efficient algorithms were to be developed for up-to-$\varepsilon$ optimal mechanisms given an explicit prior, then our approach would immediately become computationally efficient as well.}

\subsection{Brief Overview of Techniques}
Most prior works (for single- as well as multi-dimensional settings) take the following approach: first, define a class $\mathcal{C}_\varepsilon$ of auctions as a function of $\varepsilon$. Second, prove that, for all possible distributions $\mathcal{D}$, the class $\mathcal{C}_\varepsilon$ contains an up-to-$\varepsilon$ optimal mechanism for $\mathcal{D}$. Finally, prove that the best-in-class (up to $\varepsilon$) for $\mathcal{C}_\varepsilon$ can be learned with polynomially many samples. In prior works, ingenuity is required for both steps: $\mathcal{C}_\varepsilon$ is explicitly defined, proved to contain up-to-$\varepsilon$ optimal auctions, and proved to have some low-dimensional structure allowing efficient learnability.

Our approach indeed follows this rough outline, with two notable simplifying exceptions. First is our approach to defining $\mathcal{C}_\varepsilon$. Here, we first define $\mathcal{C}_{\varepsilon,S}$ be the space of all auctions that are optimal for an empirical distribution over $S$-many $\poly(\varepsilon)$-rounded samples (that is, optimal for any discrete product distribution where each marginal is:\ \ a)~only supported on multiples of $\poly(\varepsilon)$ and\ \ b)~uniform over a multiset of size $S$). 
This fully nonparametric approach stands in contrast with the popular existing approach of taking $\mathcal{C}_\varepsilon$ to be some parametric family of auctions.
Unlike in such existing approaches where $C_{\varepsilon}$ is fixed, in our approach the set $C_{\varepsilon,S}$ grows with the number of samples $S$. Nonetheless, we show that the rate of its growth is moderate enough so that there exists a ``sweet-spot'' number of samples $S=\poly(\nicefrac{1}{\varepsilon})$ such that on the one hand $C_{\varepsilon,S}$ contains an auction that is up-to-$\varepsilon$ optimal for the ``true distribution'' $\mathcal{D}$ and on the other hand, the best-in-class from $C_{\varepsilon,S}$ can be learned from $S$ samples. So in the language of prior work, one could say that we set $\mathcal{C}_{\varepsilon}=\mathcal{C}_{\varepsilon,S}$ for this $S=\poly(\nicefrac{1}{\varepsilon})$.

To show that $\mathcal{C}_\varepsilon$ does in fact contain, for all distributions $\mathcal{D}$, an auction that is up-to-$\varepsilon$ optimal for $\mathcal{D}$, we simply take enough samples to guarantee uniform convergence (of the revenue) over $\mathcal{C}_{\varepsilon}$ \emph{and additionally} the optimal auction for $\mathcal{D}$. It's far from obvious why this should suffice, as the optimal auction for $\mathcal{D}$ is not an element of $\mathcal{C}_{\varepsilon}$, nor even of the same format.\footnote{That is, the optimal auction for $\mathcal{D}$ is a mapping from the the support of $\mathcal{D}$ to outcomes, whereas the elements of $\mathcal{C}_{\varepsilon}$ are mappings from a finite space to outcomes. Furthermore, notions of Bayesian incentive compatibility for $\mathcal{D}$ do not imply nor are implied by these notions for the various discrete distributions defining $\mathcal{C}_{\varepsilon}$.} Still existing tools (namely, the $\varepsilon$-BIC-to-BIC reduction of~\citet{dw12, rw15}), when applied correctly, suffice to complete the argument. This part of our proof is conceptually much simpler than prior works (despite making use of a big technical hammer), as this approach holds quite generally and is robust due to not requiring the analysis of any specific class of mechanisms.

Second, our argument that the best-in-class can be in fact learned (up to $\varepsilon$) with $S=\poly(\varepsilon)$ samples is simply a counting argument, and does not require any notions of a learning dimension. This is indeed in the spirit of some recent \emph{single-dimensional results}, however in those results the counting argument is highly dependent on the structure of auctions in $C_{\varepsilon}$. As discussed above, such dependence is damning for multi-dimensional settings where such structure provably doesn't exist. Again, the proof does require some hammers (notably, arguments originally developed for \emph{reduced forms} via samples in~\citet{cdw12}, and a concentration inequality of~\citet{bbp17,dhp16}), but they are applied in a fairly transparent manner.

The above approach should help explain how we are able to extend far beyond prior works, which relied on a detailed analysis of specific structured mechanisms: The key tools we use are applicable quite generally, whereas the specific mechanisms analyzed in prior work are only known to maintain guarantees only in restricted settings. For example, \cref{intro-bic} already constitutes the first up-to-$\varepsilon$ optimal-mechanism learning result for any multi-parameter setting \emph{even if it held only for additive valuations (and one bidder)}. But the approach is so general that extending it to arbitrary Lipschitz valuations with independent items is simply a matter of updating notation.

\subsection{Applications and Extensions}
Specialized to a single-bidder setting, our construction in fact yields exact truthfulness (more on that in \cref{exact}), showing that an $\varepsilon$-optimal mechanism can be found for a single bidder with independent item values (with Lipschitz valuations) using only polynomially many samples. This should be contrasted with a result of \citet{dhn14}, which shows that achieving this is \emph{not} possible for correlated distributions, even for a bidder with additive valuations.

\begin{corollary}[Single Bidder --- informal version of \cref{polynomial-single-bidder}]\label{intro-single-bidder}
For one bidder with independent values for $m$ items supported on $[0,H]$, for every $\varepsilon,\delta>0$, the sample complexity of learning, w.p.\ ${1\!-\!\delta}$, an IC auction that maximizes revenue (among all IC auctions) up to an additive $\varepsilon$ is $\poly(m,H,\nicefrac{1}{\varepsilon},\log\nicefrac{1}{\delta})$.
\end{corollary}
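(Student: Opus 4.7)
The plan is to obtain \cref{intro-single-bidder} as a direct corollary of \cref{intro-bic} by leveraging two features special to $n=1$: (a)~BIC coincides with dominant-strategy IC, since there are no other bidders' reports over which to take expectations; and (b)~every single-bidder auction that is optimal for a discrete distribution is representable as a finite menu of (outcome, price) pairs, and deploying such a menu on an arbitrary type space by letting the bidder pick her favourite entry is automatically exactly IC regardless of the underlying type distribution.

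First, I would invoke \cref{intro-bic} with $n=1$ and $\eta=\varepsilon$: from $\poly(m,H,\nicefrac{1}{\varepsilon},\log\nicefrac{1}{\delta})$ samples we obtain, with probability at least $1-\delta$, a mechanism $M$ satisfying
\[
\Rev_{\mathcal{D}}(M)\;\ge\;\sup_{\eta\text{-BIC }A}\Rev_{\mathcal{D}}(A)\;-\;\varepsilon,
\]
where $\mathcal{D}$ is the true product distribution. By inspecting the construction behind \cref{intro-bic}, $M$ is the (approximately) revenue-optimal mechanism for some rounded empirical product distribution $\widehat{\mathcal D}$, and for $n=1$ this admits a finite menu representation $\{(x_k,p_k)\}_k$ (augmented, if necessary, with the null entry $(\varnothing,0)$ to preserve individual rationality). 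Deploying $M$ on $\mathcal D$ via take-the-favourite is then exactly IC in dominant strategies, independent of what $\mathcal D$ is.

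It remains to replace the supremum over $\eta$-BIC auctions by a supremum over exactly IC auctions. Since every IC auction is trivially $\eta$-IC, the supremum above is at least $\sup_{\text{IC }A}\Rev_{\mathcal D}(A)$, and we conclude
\[
\Rev_{\mathcal{D}}(M)\;\ge\;\sup_{\text{IC }A}\Rev_{\mathcal{D}}(A)-\varepsilon,
\]
which is the promised guarantee. The delicate point is verifying that the ``revenue of $M$'' referenced in \cref{intro-bic} really is the revenue of the take-the-favourite deployment on $\mathcal D$, as opposed to some alternative interpretation that forces the bidder to round her type to the support of $\widehat{\mathcal D}$; unwrapping the proof of \cref{intro-bic}, the $\eta$-BIC-to-BIC machinery employed there allows each bidder to pick her utility-maximising report, and for $n=1$ this choice collapses $\eta$-BIC into exact IC and collapses the entire argument to the plan above.
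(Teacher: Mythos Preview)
Your plan has a genuine gap at exactly the point you flag as ``delicate.'' The mechanism produced by the construction behind \cref{intro-bic} is \emph{not} the take-the-favourite deployment $\tilde\mu$ you describe: the algorithm rounds each reported type $v$ down to $w=\lfloor v\rfloor_\varepsilon$ and then serves the menu entry $\mu(w)$. The best-response step inside the optimization oracle replaces an \emph{out-of-support rounded} type $w$ by the in-support $w'$ maximising $u(w,\mu(w'))$ --- it never lets the bidder best-respond at her \emph{true} type $v$. Consequently the revenue guarantee of \cref{intro-bic} attaches to the rounding-based $\mu^\varepsilon$, not to $\tilde\mu$, and these two mechanisms can differ. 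For a concrete illustration with $m=2$ additive and $\varepsilon=1$: the menu $\{(\text{item }1,\ \text{price }2),\ (\text{item }2,\ \text{price }0.5)\}$ is IC on the grid; a bidder with $v=(2,0.9)$ rounds to $(2,0)$ and is charged $2$ by $\mu^\varepsilon$, but under take-the-favourite she selects item~$2$ and pays only $0.5$. Nothing in your argument bounds the resulting revenue loss.

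The paper closes the gap differently: it accepts that $\mu^\varepsilon$ is only $\eta$-IC (with $\eta=O(mL\varepsilon)$) and then applies the single-bidder ``nudge'' (\cref{one-bidder-nudge}, due to Nisan): multiply every menu price by $1-\sqrt{\eta}$ \emph{before} letting the bidder pick her favourite. The discount is graded so that any menu entry whose price is more than $\sqrt{\eta}$ below the originally-assigned one receives at least $\eta$ less discount, which cancels the at-most-$\eta$ utility gain from switching; hence the bidder never moves to an entry priced more than $\sqrt{\eta}$ cheaper, and the revenue of the resulting exactly-IC mechanism is at least $(1-\sqrt{\eta})(\Rev(\mu^\varepsilon)-\sqrt{\eta})$. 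Plugging in the guarantee on $\Rev(\mu^\varepsilon)$ from \cref{intro-bic} and rescaling $\varepsilon$ yields the corollary. Your observation that for $n=1$ the oracle's output is already exactly IC on the discrete support is correct and useful intuition, but it does not survive the final rounding step without the price discount.
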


Specialized to single-dimensional settings, our analysis once again yields a strengthened result, both in giving exact Dominant Strategy Incentive Compatibility (DSIC)\footnote{A mechanism is DSIC if it is a dominant strategy for each bidder to bid truthfully. For a single good, \citet{m81} shows that the maximal revenue attainable by a BIC mechanism and by a DSIC mechanism is the same.}, and in providing a computationally efficient algorithm (due to known efficient solutions \citep{e07,m81} to single-parameter revenue maximization from given discrete distributions):

\begin{corollary}[Single-Parameter --- informal version of \cref{polynomial-single-parameter}]\label{intro-single-parameter}
For $n$ single-parameter bidders with independent values in $[0,H]$, for every $\varepsilon,\delta>0$, the sample complexity of \mbox{\textbf{efficiently}} learning, w.p.\ $1\!-\!\delta$, a DSIC auction that maximizes revenue (among all DSIC auctions) up to an additive $\varepsilon$ is $\poly(n,H,\nicefrac{1}{\varepsilon},\log\nicefrac{1}{\delta})$.
\end{corollary}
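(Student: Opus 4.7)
The plan is to run the sample-then-optimize framework underlying \cref{polynomial-bic}, specialized to single-parameter settings in two ways. First, once the empirical distribution is formed from $S$ samples whose coordinates are rounded to multiples of $\poly(\varepsilon)$, invoke Myerson's algorithm rather than the generic near-optimal Bayesian oracle: for any discrete single-parameter product distribution, Myerson's algorithm \citep{m81} (see also \cite{e07}) runs in polynomial time and returns an \emph{exactly} DSIC mechanism. Second, and crucially, DSIC is a property of the mechanism in isolation, so any auction returned by the oracle is DSIC on the true distribution $\mathcal{D}$ as well --- the $\varepsilon$-BIC-to-BIC reduction invoked in the multi-parameter proof is not needed here, and no slack is introduced on the incentive-compatibility side.

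Let $\mathcal{C}_{\varepsilon,S}$ denote the family of Myerson-optimal auctions on $\poly(\varepsilon)$-rounded empirical distributions of $S$ samples, and carry out the same two-step argument as in the overview. Step (a), approximation: choose $S=\poly(n,H,\nicefrac{1}{\varepsilon},\log\nicefrac{1}{\delta})$ large enough that the revenue achieved by Myerson on the empirical distribution is within $\varepsilon$ (measured on $\mathcal{D}$) of the optimal DSIC revenue on $\mathcal{D}$; this follows by combining rounding-robustness of Myerson's revenue with concentration of the empirical revenue of the true optimum, exactly as in the proof of the main theorem. Step (b), uniform convergence: in the single-parameter case each element of $\mathcal{C}_{\varepsilon,S}$ is determined by a monotone allocation rule on the discrete value grid, so the counting argument from \cref{polynomial-bic}, combined with the concentration inequality of \cite{bbp17,dhp16}, gives uniform convergence of empirical to true revenue over $\mathcal{C}_{\varepsilon,S}$ at the claimed sample complexity. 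Combining Steps (a)--(b) with the classical equivalence of optimal BIC and DSIC revenues for single-parameter settings~\citep{m81} then yields a mechanism that is exactly DSIC and within $\varepsilon$ of the optimum.

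The main technical obstacle, as in the main theorem, is Step (a): the true distribution's optimal auction lies outside $\mathcal{C}_{\varepsilon,S}$, yet one must nevertheless bound its revenue in terms of the best empirical-Myerson auction. Here the obstacle is much milder than in the multi-parameter case, because Myerson's revenue is Lipschitz in the marginal CDFs (via virtual values) and because DSIC is prior-independent, so the uniform-convergence bound transfers directly from $\mathcal{C}_{\varepsilon,S}$ to the true optimum without any BIC-to-BIC conversion. Overall computational efficiency then follows from Myerson's polynomial-time algorithm on discrete distributions: the learner draws $S$ samples, rounds, invokes Myerson once on the rounded empirical product distribution, and outputs the resulting auction in $\poly(n,H,\nicefrac{1}{\varepsilon},\log\nicefrac{1}{\delta})$ time.
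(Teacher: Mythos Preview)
Your overall plan matches the paper's: plug the polynomial-time Myerson/Elkind optimizer into the sample--round--optimize framework of \cref{bic-alg}, use the counting-plus-concentration argument for uniform convergence, and observe that in the single-parameter case the output is exactly DSIC. There is, however, a gap in your justification for exact incentive compatibility.

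You write that ``DSIC is a property of the mechanism in isolation, so any auction returned by the oracle is DSIC on the true distribution $\mathcal{D}$ as well,'' and conclude that ``no slack is introduced on the incentive-compatibility side.'' Prior-independence of DSIC does ensure that the oracle's output $\mu$ is DSIC wherever it is defined. But the learner does not output $\mu$: in the framework of \cref{bic-alg} the learner outputs $\mu^{\varepsilon}$, which first \emph{rounds the incoming bid down} to the nearest multiple of $\varepsilon$ and then applies $\mu$. For a generic DSIC $\mu$, this composition is only $O(m\varepsilon)$-DSIC --- precisely the slack you claim is absent. The paper confronts this explicitly: it notes that $\mu^{\varepsilon}$ ``seems'' to be only $O(m\varepsilon)$-DSIC, and then argues that in the single-parameter Myersonian case $\mu^{\varepsilon}$ still has a monotone allocation rule and still satisfies Myerson's payment identity, so it is in fact exactly DSIC. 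That structural observation, not prior-independence of DSIC, is the operative reason; your argument as written does not supply it.

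A smaller imprecision: the $\varepsilon$-BIC-to-BIC reduction (\cref{eps-bic-bic}) in the multi-parameter proof is invoked for the \emph{revenue} comparison (bounding $\Rev_W(\mu)$ against $\Rev_W(\OPT_\varepsilon)$, which is only approximately BIC for $W$), not to establish the output mechanism's incentive guarantee. Your phrasing suggests it plays the latter role, which obscures where the actual difficulty lies.
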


\cref{intro-single-parameter} nicely complements the existing literature on single-parameter sample complexity in the following ways. First, our algorithm/analysis immediately follows as a special case of \cref{intro-bic} (without referencing structural results about optimal single-parameter auctions), so it is in some sense more principled. Second, our analysis holds even for arbitrary constraints on the allocations (putting it in the same class as the state-of-the-art\footnote{Here and throughout the paper when we refer to ``state-of-the-art'' for single-parameter settings, we are specifically referring to allocation constraints that can be accommodated.} single-parameter results \citep{ht16, gn17}, and even slightly beyond\footnote{Note that for the single-parameter setting, our algorithm in fact coincides with that of \citet{dhp16}. However our analysis, unlike theirs, extends to arbitrary allocation constraints. Our approach also transparently handles mild extensions of constraints \emph{beyond} those considered in~\citet{ht16,gn17}. \citet{gn17} explicitly state that their techniques cannot handle such extensions and leave this question (which we successfully resolve) open.}).

Finally, portions of our approach are specific to Bayesian Incentive Compatible auctions (versus Dominant Strategy Incentive Compatible auctions), but portions are not. We're therefore able to use the same techniques to conclude similar, albeit qualitatively weaker, results for $\varepsilon$-DSIC auctions in~\cref{polynomial-dsic}. See~\cref{dsic} for further details.

\subsection{Related Work and Brief Discussion}
Two active lines of work are directly related to the present paper. First are papers that study rich multi-dimensional settings, and aim to show that mechanisms with good approximation guarantees can be learned with few samples, such as~\citet{mr16, bsv16, bsv17, cd17, s17}. The main approach in each of these works is to show that specific classes of structured mechanisms (e.g., classes that are known to allow for constant-factor revenue maximization) are inherently low-dimensional with respect to some notion of dimensionality. Our results are stronger than these in some regards and weaker in others. More specifically, our results are stronger in the sense that with comparably many samples, our mechanisms guarantee an up-to-$\varepsilon$ approximation to the optimal mechanism instead of a constant-factor. Our results are weaker in the sense that our learning algorithms are information-theoretic (do not run in poly-time), and our mechanisms are not ``as simple.'' As discussed earlier, both weaknesses are necessary in order to possibly surpass the constant-factor barrier (at least, barring the resolution of major open questions, such as a computationally efficient up-to-$\varepsilon$ approximation even when all distributions are explicitly known. Again, note that should this question be resolved affirmatively, our results would immediately become computationally efficient as well). 

Most related to our work, at least in terms of techniques, is the rich line of works on \emph{single-dimensional settings}~\citep{dry10, cr14, hmr15, mr15, dhp16, ht16, rs16, gn17}. These works show that up-to-$\varepsilon$ optimal mechanisms can be learned in richer and richer settings. In comparison to these works, our single-dimensional results slightly extend the state-of-the-art \citep{ht16, gn17} \emph{as a corollary} of a more general theorem that applies to multi-dimensional settings. Even restricted to single-dimensional settings, our proof is perhaps more transparent. 

We conclude with a brief discussion and an open problem. \cref{intro-single-bidder,intro-single-parameter} are both deduced from \cref{intro-bic} by use of an argument as to why the resulting $\varepsilon$-BIC auction is in fact BIC, or by using an $\varepsilon$-BIC to BIC reduction that loses negligible revenue. Given that we have explicitly referenced the existence of a quite general $\varepsilon$-BIC-to-BIC reduction, the reader may be wondering why this reduction does not in fact allow our general results to be exactly BIC as well.

The main barrier is the following: in order to actually \emph{run} the $\varepsilon$-BIC-to-BIC reduction as part of our auction for $n > 1$ bidders, one must take samples \emph{exponential in the number of items} from each bidders' value distribution. This means that even though we can learn an $\varepsilon$-BIC mechanism with few samples, plugging it through the reduction to remove the $\varepsilon$ would cost us exponentially many samples in addition. Note that our current use of these theorems is non-constructive: we only use them to claim that the revenues achievable by the optimal BIC and $\varepsilon$-BIC mechanisms are not far off. This conclusion does not actually require \emph{running} the reduction, but rather simply observing that it could be run (more details in \cref{bic}).  

When bidder valuations are drawn from a product distribution, it seems conceivable (especially given our results), that sample complexity polynomial in the number of items should suffice. Indeed, if each bidders' values are drawn i.i.d., this is known due to exploitations of symmetry~\citep{dw12}. But subexponential sample complexity is not known to suffice for any other restricted class of distributions, despite remarkable recent progress in developing connections to combinatorial Bernoulli factories~\citep{dhkn17}. We state below what we consider to be the main open problem left by our work in the context of this paper, but readers familiar with black-box reductions in Bayesian mechanism design will immediately recognize a corresponding open problem for the original welfare-maximization setting studied in~\citet{hkm11, bh11} that is equally enticing.

\begin{open-problem}\label{eps-bic-bic-samples}
Given an $\varepsilon$-BIC auction for some product distribution over $n\!>\!1$ bidders and $m\!>\!1$ items, even with additive bidders, is it possible to transform it into a (precisely) BIC auction with negligible ($\poly(\varepsilon)\!\cdot\!\poly(n,m,H)$) revenue loss using polynomially many samples from this product distribution?
\end{open-problem}

\subsubsection{Subsequent Related Work}
Since the this work came out in FOCS 2018 \citep{gw18}, progress has been made on three fronts which merits brief discussion. First is recent work of \citet{KothariMSSW19}, which gives a QPTAS for a single unit-demand buyer over $m$ independent items. As referenced in the previous section, our results now immediately extend to provide an up-to-$\varepsilon$ approximation to the optimal revenue for a single unit-demand buyer using polynomially-many samples with quasi-polynomial runtime. 

Second is recent work of \citet{GergatsouliLT19}, which identifies barriers towards a problem related to \cref{eps-bic-bic-samples}. Specifically, one can view \cref{eps-bic-bic-samples} as asking for an algorithm that takes as input a \emph{fully-known auction}, and has \emph{sample access to $D$}, and outputs a BIC auction. \citet{GergatsouliLT19} prove an impossibility result in a setting where the algorithm is given as input a \emph{fully-known $D$}, but only has \emph{query access to the auction} (there are also a few other additional differences). To best see why this difference is significant, consider a single bidder and $m$ items. With a (fully-known) $\varepsilon^2$-IC auction $A$, multiplying all prices by a multiplicative $(1-\varepsilon)$ results in a IC auction that up-to-$\varepsilon$ approximates the revenue of $A$, using no samples from $D$ (i.e., this resolves the single-bidder version of  \cref{eps-bic-bic-samples}; see \cref{exact-one-bidder}). However, the auction~$A$ may offer $\text{exp}(m)$ different outcomes, so actually learning $A$ via query access to figure out which outcome of $A$ a given buyer might choose (even when $D$ is explicitly known) might require $\text{exp}(m)$ queries to $A$. Indeed, this reduction positively resolves the single-bidder version of \cref{eps-bic-bic-samples}, but breaks down for the question studied by \citet{GergatsouliLT19}, which they instead \emph{negatively} resolve. \citet{GergatsouliLT19} therefore identify barriers to a positive resolution of a problem related to \cref{eps-bic-bic-samples}, but in a model that is different enough that one should not make too strong a conjecture regarding \cref{eps-bic-bic-samples} based on their results.

Finally is recent work of~\citet{GuoHTZ19}, which applies techniques similar to ours to study the precise sample complexity of several domains, including single-buyer revenue maximization (the scenario we discuss in~\cref{exact-one-bidder}). While we are most interested in establishing that the sample complexity is polynomial (rather than nailing it down precisely), the latter work establishes that certain parts of our analysis are in fact tight up to logarithmic factors. Still, it is not known whether our full results are tight, and it is a fascinating open problem to determine this (also posed explicitly in~\citet{GuoHTZ19}). Specifically,~\citet{GuoHTZ19} also establish a lower bound on learning an up-to-$\varepsilon$ optimal hypothesis over a product distribution, as a function of the size of support of the marginals. Put another way, their lower bound implies that improving our sample complexity (beyond logarithmic factors) would require either\ \ a) discretizing differently, and/or quantitatively improving state-of-the-art $\varepsilon$-IC to IC reductions, or\ \ b) exploiting some (currently unknown) structure of revenue-maximizing multi-item auctions.\footnote{As already noted, understanding the structure of such auctions is a major open problem independently of auction learning.} Both directions are important for future~work.

\vspace{1em}

The remainder of this paper is structured as follows. In  \cref{model}, we formally present the model and setting. In \cref{results}, we formally state our results, which are informally stated above as \cref{intro-bic,intro-single-bidder,intro-single-parameter}. In \cref{overview}, we overview the main ideas behind the proof of \cref{intro-bic}, a proof that we give in full detail in \cref{bic}.
In \cref{exact}, we derive \cref{intro-single-bidder,intro-single-parameter}. We present some extensions in \cref{extensions}.
In \cref{dsic}, we state and prove a result analogous to \cref{intro-bic} for DSIC auctions, using similar proof techniques. Parts of certain proofs are relegated to \cref{proofs}.

\section{Model and Preliminaries}\label{model}

\paragraph{The Decision Maker (Seller), Bidders, and Outcomes.} A single decision maker has the power to choose a social outcome, such as who gets which good that is for sale, or such as which pastime activities are offered in which of the weekends of the upcoming year. There are $n$ bidders who have stakes in this outcome. (The decision maker will be able to charge the bidders and will wish to maximize her revenue.) The possible set of allowed outcomes is denoted by $X$ and can be completely arbitrary. A central example is that of an $m$-parameter auction: the decision maker is a seller who has $m$ items for sale, and the set of outcomes/allocations is $X\subseteq[0,1]^{n\cdot m}$, where an allocation $\vec{x}=(x_{i,j})_{i\in[n],j\in[m]}\in X$ specifies for each bidder $i$ and good~$j$ the amount of good $j$ that bidder $i$ wins. The traditional multi-item setting is the special case with $X={X_{\text{multi-item}}\eqdef\bigl\{(x_{i,j})_{i,j}\in\{0,1\}^{n\cdot m}\mid\forall j:\sum_{i=1}^n x_{i,j}\le1\bigr\}}$, while outcomes with fractional coordinates occur for example in the canonical model of position auctions, where smaller coordinates denote smaller click-through rates.

\begin{sloppypar}
\paragraph{Values.} Bidder $i\in[n]$ has a valuation function $v_i(\cdot)$ over the set of possible outcomes $X$. This function is parametrized by $m$ values $v_{i,1},\ldots, v_{i,m}$ (we will not explicitly write $v_{i,\vec{v}_i}(\cdot)$, but refer to the parameters implicitly for ease of notation. Moreover, as $v_i(\cdot)$ is completely determined by $v_{i,1},\ldots, v_{i,m}$, we will sometimes simply refer to $\vec{v}_i$ as bidder $i$'s value, and to~$v_{i,j}$ as bidder $i$'s value for parameter~$j$) and drawn from a given distribution such that:
\begin{itemize}
\item (Independent Parameters)
The $v_{i,j}$s are independent random variables, drawn from distributions $V_{i,j}$ which are all supported in $[0,H]$.
\item (Lipschitz) There exists an absolute constant $L$, such that if $v'_{i,1},\ldots,v'_{i,m}$ is obtained from $v_{i,1},\ldots,v_{i,m}$ by modifying one of the $v_{i,j}$s by at most an additive $\varepsilon$, then ${|v_i(x) - v'_i(x)|}\leq
L\varepsilon$ for all $x\in X$.
\end{itemize}
For example, in the multi-item setting described above, $v_i(x)=\sum_{j=1}^m x_{i,j}\cdot v_{i,j}$ (and $L=1$).\footnote{This case is actually Lipschitz in a stronger sense: if $v'_{i,1},\ldots,v'_{i,m}$ is obtained from $v_{i,1},\ldots,v_{i,m}$ by modifying one of the $v_{i,j}$s by at most an additive $\varepsilon$, then $|v_i(\vec{x}) - v'_i(\vec{x})|\leq|\vec{x}|_1\cdot L\varepsilon$ for $L=1$. We note that using this stronger property (as well as other properties of the multi-item setting such as monotonicity), our analysis (mutatis mutandis) can be used to quantitatively improve the polynomial dependency of our sample complexity on the parameters of the problems, however we do not follow this direction in this paper. In general, in this paper we always choose generality of results over tighter polynomials.} An additive-up-to-$k$-items setting may also be easily captured using this setting (again with $L=1$), and so can even settings with complementarities, such as a setting in which good~$2j$ is worth $v_{i,j}$ to bidder $i$ iff bidder $i$ also gets good $2j+1$ (and is otherwise worthless to bidder $i$). Since $X$ can be completely arbitrary (in particular does not have to be a subset of~$[0,1]^{n\cdot m}$), we can most generally think of the $v_{i,j}$s as parameters that capture the ``relevant attributes'' of each bidder, such as affinity to action films, affinity to winter sports, willingness to spend a lot of time in a pastime activity, etc. The only requirement is that these attributes are a ``natural'' parametrization in the sense that the utility of the bidder from any given outcome $x\in X$ smoothly depends on (i.e., is Lipschitz in) each of them,\footnote{This rules out such ``tricks'' as using bit-interleaving to condense the $m$ parameters into a single parameter.} and that they are independently drawn.
\end{sloppypar}

We note that both properties above (independent items and Lipschitz) together imply that the valuation of each bidder for each outcome is bounded in $[0,mLH]$.

\paragraph{Payments, Priced Outcomes, and Mechanisms.} A payment specification $p=(p_i)_{i\in[n]}$ specifies for each bidder $i$ to be charged $p_i$. A priced outcome is a pair $(x,p)$ of an allocation and a payment specification. The utility of bidder $i$ with value $v_i(\cdot)$ from priced outcome $(x,p)$ is $u_i\bigl(v_i,(x,p)\bigr)=v_i(x) - p_i$.
An auction/mechanism is a function that maps each valuation profile $(v_{i,j})_{i \in [n], j \in [m]}$ to a distribution over priced outcomes.
The seller's expected revenue from a mechanism $\mu$ is\footnote{This expectation is both over the draw of valuation profile $v$ and over the draw of priced outcome from $\mu(v)$; to avoid clutter, we will not explicitly mention the latter in our notation throughout this paper.} $\expect{v\sim\bigtimes_{i,j}V_{i,j}}{\sum_{i\in[n]}p_i(v)}$, where $p(v)$ is the payment specification chosen by the mechanism for the valuation profile $v$.

\paragraph{Truthfulness.} An auction $\mu$ is individually rational (IR) if the expected utility of a truthful bidder is nonnegative at any valuation profile, i.e.:
$\expect{}{u_k\bigl(v_k,\mu(v)\bigr)}\ge0 \text{~for every~} k\in[n] \text{~and~} v\in[0,H]^{n\cdot m},$
where the expectation is over the randomness of the auction.
For $\varepsilon>0$, an auction $\mu$ is $\varepsilon$-dominant-strategy incentive compatible ($\varepsilon$-DSIC) if truthful bidding maximizes a bidder's expected utility at any valuation profile up to an additive $\varepsilon$, i.e.:\footnote{As is standard in the Game Theory literature, given an $n$-bidder valuation profile $v=(v_1,\ldots,v_n)$, we use $v_{-k}$ for any $k\in[n]$ to denote this valuation profile without the valuation of bidder $k$, and use $(v'_k,v_{-k})$ for any $v'_k\in[0,H]^m$ to denote the valuation profile obtained from $v$ by replacing bidder $k$'s valuation with $v'_k$.}
$\expect{}{u_k\bigl(v_k,\mu(v)\bigr)}\ge\expect{}{u_k\bigl(v_k,\mu(v'_k,v_{-k})\bigr)}-\varepsilon$ for every $k\in[n]$, $v\in[0,H]^{n\cdot m}$, and $v'_k\in[0,H]^m$,
where the expectation is once again over the randomness of the auction.
An auction is DSIC if it is $0$-DSIC.
An auction $\mu$ is $\varepsilon$-Bayesian incentive compatible ($\varepsilon$-BIC) if truthful bidding maximizes a bidder's utility in expectation over all valuations of the other bidders, up to an additive $\varepsilon$, i.e.:
$\expect{v_{-k}\sim\bigtimes_{\substack{i,j\\i\ne k}}V_{i,j}}{u_k\bigl(v_k,\mu(v_k,v_{-k})\bigr)}\ge\expect{v_{-k}\sim\bigtimes_{\substack{i,j\\i\ne k}}V_{i,j}}{u_k\bigl(v_k,\mu(v'_k,v_{-k})\bigr)}-\varepsilon$ for every $k\in[n]$ and $v_k,v'_k\in[0,H]^m$,
where the expectation is both over the valuations of the bidders other than $k$ and over the randomness of the auction.
An auction is BIC if it is $0$-BIC.

\paragraph{Additional Notation.} We will use the following additional notation in our analysis, where $\varepsilon>0$:
\begin{itemize}
\item
For $v\in[0,H]$, we denote by $\epsfloor{v}$ the value of $v$, rounded down to the nearest integer multiple of $\varepsilon$.
\item
We use $[0,H]_{\varepsilon}=\bigl\{\epsfloor{v}\mid v\in [0,H]\bigr\}$ to denote the set of integer multiples of $\varepsilon$ in $[0,H]$.
\item
For every $i,j$, we denote by $\epsfloor{V_{i,j}}$ the distribution of $\epsfloor{v_{i,j}}$ for $v_{i,j}\sim V_{i,j}$.
\end{itemize}

\paragraph{Existing Tools.} In our analysis, we will make use of the following two theorems, which we state below in a way that is adapted to the notation of our paper. The first shows the optimal revenue over all $\varepsilon$-BIC auctions and the optimal revenue over all BIC auctions are close (while this is stated in \citet{rw15} with respect to multi-parameter settings with allocations in $\{0,1\}^{n\cdot m}$, the same proof holds verbatim for arbitrary outcome sets $X$):

\begin{theorem}[\citealp{rw15};\footnote{If we denote by $R_{\varepsilon}$ the maximum expected revenue attainable by any IR and $\varepsilon$-BIC auction for the bidders' product distribution, and by $R$ the maximum revenue attainable by any IR and BIC auction for the same distribution, then the result of \citet{rw15} is that for any $\eta>0$, it is the case that $R\ge(1-\eta)\cdot(R_{\varepsilon}-\frac{n\varepsilon}{\eta})$. Choosing $\eta=\sqrt{\frac{\varepsilon}{mLH}}$ yields \cref{eps-bic-bic} as stated above, since $R_{\varepsilon}$ is trivially bounded from above by the maximum possible sum of valuations, i.e., by $nmLH$.} see also \citealp{dw12}]\label{eps-bic-bic}Let $\mathcal{D}$ be any joint distribution over arbitrary valuations, where the valuations of different bidders are independent. The maximum revenue attainable by any IR and $\varepsilon$-BIC auction for~$\mathcal{D}$ is at most $2n\sqrt{mLH\varepsilon}$ greater than the maximum revenue attainable by any IR and BIC auction for that distribution.
\end{theorem}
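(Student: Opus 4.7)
The plan is to invoke the Rubinstein–Weinberg black-box reduction in its quantitative form and optimize a single trade-off parameter. Let $R$ denote the maximum expected revenue over IR and BIC mechanisms and $R_{\varepsilon}$ the maximum over IR and $\varepsilon$-BIC mechanisms. As noted in the footnote, the reduction's proof actually delivers a sharper statement than a fixed additive loss: for every $\eta \in (0,1]$, any IR and $\varepsilon$-BIC mechanism with revenue $R_{\varepsilon}$ can be converted into an IR and BIC mechanism whose expected revenue is at least $(1-\eta)(R_{\varepsilon} - n\varepsilon/\eta)$. Morally, one mixes the given mechanism (with probability $1-\eta$) against a ``replica–surrogate'' style reallocation (with probability $\eta$) that disincentivizes misreports, turning approximate truthfulness into exact truthfulness at a cost of the $O(n\varepsilon/\eta)$ penalty per bidder and a $1-\eta$ scaling of the base revenue. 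Unpacking this inequality gives
\[
R_{\varepsilon} - R \;\le\; \eta\,R_{\varepsilon} + \tfrac{n\varepsilon}{\eta}.
\]

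Next, I would plug in the a-priori bound $R_{\varepsilon} \le nmLH$, which follows directly from the earlier observation that under the independent-parameters and Lipschitz assumptions each bidder's value for any outcome lies in $[0,mLH]$. The bound becomes $R_{\varepsilon} - R \le \eta \cdot nmLH + n\varepsilon/\eta$, a textbook $a\eta + b/\eta$ optimization. Differentiating and setting the derivative to zero yields the balanced choice $\eta = \sqrt{\varepsilon/(mLH)}$, at which both terms equal $n\sqrt{mLH\varepsilon}$, producing the claimed bound $R_{\varepsilon} - R \le 2n\sqrt{mLH\varepsilon}$. I would also verify the side condition $\eta \le 1$, i.e.\ $\varepsilon \le mLH$; if it fails, the target inequality is automatically vacuous since $R_{\varepsilon} \le nmLH \le 2n\sqrt{mLH\varepsilon}$ already.

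The only real work beyond this short optimization is checking that the Rubinstein–Weinberg reduction indeed applies to the generality in which the theorem is stated here, namely arbitrary outcome sets $X$ and arbitrary Lipschitz valuations rather than specifically $X_{\text{multi-item}}\subseteq\{0,1\}^{n\cdot m}$ with additive values. I would go back to their construction and confirm that it operates purely as a coupling on the reported type space, manipulating distributions over reports and priced outcomes without exploiting any item-level structure of allocations or valuations; the $\varepsilon$-BIC inequalities and the boundedness of utilities are all that should be needed. I expect this verification — essentially the sentence ``the same proof holds verbatim'' in the excerpt — to be the main nontrivial content, with the quantitative optimization above being an easy closing step.
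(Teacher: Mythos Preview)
Your proposal is correct and matches the paper's own derivation essentially verbatim: the paper also invokes the \cite{rw15} inequality $R \ge (1-\eta)(R_{\varepsilon} - n\varepsilon/\eta)$, plugs in the trivial bound $R_{\varepsilon}\le nmLH$, and balances the two terms with $\eta=\sqrt{\varepsilon/(mLH)}$ to obtain $2n\sqrt{mLH\varepsilon}$; your additional check of the side condition $\eta\le 1$ and your remark that the reduction must be verified to hold for arbitrary outcome sets $X$ mirror exactly the paper's parenthetical ``the same proof holds verbatim for arbitrary outcome sets $X$.''
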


\noindent The second is a Chernoff-style concentration inequality for product distributions:

\begin{theorem}[\citealp{bbp17}; see also \citealp{dhp16}]\label{concentration}
Let $D_1,\ldots,D_\ell$ be discrete distributions. Let $S\in\mathbb{N}$. For every $i$, draw $S$ independent samples from $D_i$, and let $D_i^{(S)}$ be the uniform distribution over these samples.
For every $\varepsilon>0$ and $f:\bigtimes_{i=1}^\ell\supp D_i\rightarrow[0,H]$, we have that $\Pr\Bigl(\bigl|\expect{\bigtimes_{i=1}^\ell D_i^{(S)}}{f}-\expect{\bigtimes_{i=1}^\ell D_i}{f}\bigr|>\varepsilon\Bigl)\le\frac{4H}{\varepsilon}\exp\bigl(-\frac{\varepsilon^2S}{8H^2}\bigr).$
\end{theorem}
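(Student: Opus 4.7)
The plan is to reduce this product-empirical concentration statement to a one-dimensional Hoeffding bound via an exchangeability/coupling trick. Let $X_{i,s}$ (for $i\in[\ell]$, $s\in[S]$) denote the $\ell S$ independent samples, and write
$$Z\eqdef\frac{1}{S^\ell}\sum_{j_1,\ldots,j_\ell\in[S]} f(X_{1,j_1},\ldots,X_{\ell,j_\ell})=\mathbb{E}_{\bigtimes_i D_i^{(S)}}[f],\qquad\mu\eqdef\mathbb{E}_{\bigtimes_i D_i}[f].$$
A naive bounded-differences (McDiarmid) argument that treats $Z$ as a function of $\ell S$ independent samples with bounded-difference constant $H/S$ per sample would give only $\Pr(|Z-\mu|>\varepsilon)\le 2\exp(-2\varepsilon^2 S/(\ell H^2))$, so the whole point of the argument will be to remove the $\ell$ from the exponent.

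The key step is the following coupling. Draw $\ell$ independent, uniformly random permutations $\sigma_1,\ldots,\sigma_\ell$ of $[S]$, independently of the samples, and define the ``diagonal'' estimator $\tilde Z(\sigma)\eqdef\frac{1}{S}\sum_{s=1}^S f\bigl(X_{1,\sigma_1(s)},\ldots,X_{\ell,\sigma_\ell(s)}\bigr)$. A short calculation shows $\mathbb{E}_\sigma[\tilde Z(\sigma)]=Z$: for each $s$, the tuple $(\sigma_1(s),\ldots,\sigma_\ell(s))$ is uniform on $[S]^\ell$, so averaging each summand over $\sigma$ reproduces $Z$. Crucially, for every \emph{fixed} $\sigma$, the $S$ tuples $\bigl(X_{1,\sigma_1(s)},\ldots,X_{\ell,\sigma_\ell(s)}\bigr)_{s=1}^S$ use pairwise disjoint samples in each coordinate (because each $\sigma_i$ is a permutation), so, as random variables in $X$, they are i.i.d.\ with distribution $\bigtimes_i D_i$. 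Hence, conditional on $\sigma$, $\tilde Z(\sigma)$ is a genuine $S$-sample i.i.d.\ average of $[0,H]$-valued evaluations of $f$, with mean $\mu$.

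I would then bound the moment generating function by combining Jensen's inequality (applied to the convex function $x\mapsto e^{\lambda x}$) with Hoeffding's lemma on the i.i.d.\ average: for every $\lambda>0$,
$$\mathbb{E}_X\bigl[e^{\lambda(Z-\mu)}\bigr]=\mathbb{E}_X\bigl[e^{\lambda\,\mathbb{E}_\sigma[\tilde Z(\sigma)-\mu]}\bigr]\le\mathbb{E}_\sigma\mathbb{E}_X\bigl[e^{\lambda(\tilde Z(\sigma)-\mu)}\bigr]\le e^{\lambda^2 H^2/(8S)}.$$
A standard Chernoff optimization ($\lambda=4St/H^2$) yields the two-sided tail $\Pr(|Z-\mu|>t)\le 2e^{-2St^2/H^2}$, which already implies the stated bound. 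The $\tfrac{4H}{\varepsilon}$ polynomial prefactor in the theorem is most naturally recovered by an alternative discretization route: partition $[0,H]$ into $O(H/\varepsilon)$ level sets of $f$, apply the same permutation-coupling argument to each $\{0,1\}$-valued level-set indicator, and union-bound across the $O(H/\varepsilon)$ events. The only non-routine ingredient is the permutation-coupling step; everything else is Jensen, Hoeffding, and (optionally) a union bound.
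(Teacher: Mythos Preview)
The paper does not prove this statement: it appears under ``Existing Tools'' and is simply cited to \cite{bbp17,dhp16}. So there is no in-paper proof to compare against.

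Your argument is correct, and the permutation-coupling step is exactly the idea behind the cited references. Two checks worth making explicit: (i) the identity $\mathbb{E}_\sigma[\tilde Z(\sigma)]=Z$ holds \emph{pointwise} in the sample array $X$, because for each fixed $s$ the coordinates $\sigma_1(s),\ldots,\sigma_\ell(s)$ are independent and each uniform on $[S]$; (ii) for each fixed $\sigma$ the $S$ diagonal tuples are i.i.d.\ from $\bigtimes_i D_i$ precisely because each $\sigma_i$ is a bijection, so no sample is reused within a coordinate. With those in hand, Jensen on the inner $\sigma$-expectation followed by Hoeffding's lemma on the $S$-term i.i.d.\ average gives $\mathbb{E}_X\bigl[e^{\lambda(Z-\mu)}\bigr]\le e^{\lambda^2H^2/(8S)}$ and hence $\Pr(|Z-\mu|>\varepsilon)\le 2e^{-2S\varepsilon^2/H^2}$.

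This is strictly stronger than the bound recorded in the theorem (both a better constant in the exponent and no polynomial prefactor), so your final paragraph about recovering the $\tfrac{4H}{\varepsilon}$ factor via level-set discretization is accurate but superfluous: the stated inequality follows immediately from your sharper one. The looser form in the paper simply reflects the constants as they appear in the cited sources, where a layer-cake presentation is used.
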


Unlike standard Chernoff bounds, which state that the expectation over the empirically sampled distribution (over $\ell$-tuples) well approximates the expectation over the true distribution, this  concentration inequality states that the expectation over the \emph{product of the marginals of the empirically sampled distribution} well approximates the expectation over the true distribution. This difference is crucial to us since, as we will see, there are far less such possible products of empirical marginals than there are empirical distributions, so a far smaller number of auctions can guarantee revenue optimization over the former, and this concentration inequality lets us generalize this revenue optimization to be over the true distribution.

\section{Main Results}\label{results}

In this \lcnamecref{results}, we formally state our main results, which were informally presented as \cref{intro-bic,intro-single-bidder,intro-single-parameter} in the introduction. We start with our main result.

\begin{sloppypar}
\begin{theorem}[Main Result]\label{polynomial-bic}
For every $\varepsilon,\delta>0$ and for every $\eta\le\poly(n,m,L,H,\varepsilon)$, the sample complexity of learning an up-to-$\varepsilon$ optimal IR and $\eta$-BIC auction is $\poly(n,m,L,H,\nicefrac{1}{\varepsilon},\nicefrac{1}{\eta},\log\nicefrac{1}{\delta})$. That is, there exists a deterministic algorithm\footnote{\label{oracle}Recall that this result is information-theoretic and not computationally efficient (by necessity, without resolving major open problems), so our decision maker (seller) is computationally unbounded, and we allow the algorithm to make calls to any deterministic oracle that has no access to any $V_{i,j}$. In particular, we assume access to an oracle that can solve the revenue maximization problem on any precisely given $V'_{i,j}$ of finite support.} that given $\poly(n,m,L,H,\nicefrac{1}{\varepsilon},\nicefrac{1}{\eta},\log\nicefrac{1}{\delta})$ samples from each $V_{i,j}$, with probability $1\!-\!\delta$ outputs an IR and $\eta$-BIC auction that attains from $\bigtimes_{i,j} V_{i,j}$ expected revenue at most an additive~$\varepsilon$ smaller than any IR and $\eta$-BIC auction.
\end{theorem}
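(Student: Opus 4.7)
My plan is to draw $S=\poly(n,m,L,H,\nicefrac{1}{\varepsilon},\nicefrac{1}{\eta},\log\nicefrac{1}{\delta})$ i.i.d.\ samples from each $V_{i,j}$, round each one down to the nearest integer multiple of $\gamma=\poly(\nicefrac{\varepsilon}{nmLH})$ to form an empirical product distribution $\hat{\mathcal{D}}$ supported on $[0,H]_\gamma^{nm}$, and invoke the oracle on $\hat{\mathcal{D}}$ to obtain $\hat{\mu}$, the revenue-optimal $\eta/2$-BIC auction there, which we return after extending to $[0,H]^{nm}$ by rounding inputs (and shifting every payment down by $O(mL\gamma)$ to recover exact IR). The analysis centers on a single uniform-concentration argument over the auction class $\mathcal{C}_{\varepsilon,S}$ of all possible oracle outputs across all rounded $S$-sample empirical distributions. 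A ``multisets of size $S$ from $H/\gamma$ items'' count gives $\log|\mathcal{C}_{\varepsilon,S}|=\poly(n,m,\nicefrac{H}{\gamma},S)$; crucially, although the number of rounded type pairs $(\hat{v}_k,\hat{v}'_k)$ per bidder is $(\nicefrac{H}{\gamma})^{2m}$ (exponential in $m$), its logarithm $2m\log(\nicefrac{H}{\gamma})$ is polynomial. Applying \cref{concentration} and union-bounding over each $\mu\in\mathcal{C}_{\varepsilon,S}$ on (i)~its revenue and (ii)~for each bidder $k$ and each rounded pair $(\hat{v}_k,\hat{v}'_k)$ the interim-utility function $f(v_{-k})=u_k\bigl(\hat{v}_k,\mu(\hat{v}'_k,\epsfloor{v_{-k}})\bigr)$, a polynomial sample size $S$ suffices for simultaneous $\varepsilon'$-agreement between $\hat{\mathcal{D}}$ and $\epsfloor{\mathcal{D}}$ across all these quantities, with probability $1-\delta/2$ and $\varepsilon'=\poly(\nicefrac{\varepsilon}{nmLH})$.

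For the $\eta$-BIC conclusion, $\hat{\mu}$ is $\eta/2$-BIC for $\hat{\mathcal{D}}$ by construction; swapping $\hat{\mathcal{D}}_{-k}$ for $\epsfloor{\mathcal{D}}_{-k}$ in the interim via the concentration above (on $\hat{\mu}\in\mathcal{C}_{\varepsilon,S}$), together with two $mL\gamma$ Lipschitz corrections (one for replacing $v_k$ with $\epsfloor{v_k}$ inside $u_k(\cdot,\cdot)$, one for the rounding of the reported value fed into $\hat{\mu}$), yields $\eta$-BIC for $\mathcal{D}$ in the arbitrary-types sense, provided $\eta/2+2mL\gamma+2\varepsilon'\le\eta$; this is arranged by the choice of $\gamma$ and $\varepsilon'$.

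For revenue, this is where \cref{eps-bic-bic} enters nonconstructively as a ``witness lemma.'' Let $\mu^\ast$ be the revenue-optimal IR and BIC auction for $\mathcal{D}$. Define a companion auction $\mu^{\ast\ast}$ on the rounded grid that, on input $\hat{v}$, independently draws for each $i$ a sample $v_i\sim V_i\mid\epsfloor{v_i}=\hat{v}_i$ and returns $\mu^\ast(v)$. A short Lipschitz-plus-BIC-of-$\mu^\ast$ computation shows $\mu^{\ast\ast}$ is $O(mL\gamma)$-BIC for $\epsfloor{\mathcal{D}}$ while $\Rev_{\epsfloor{\mathcal{D}}}(\mu^{\ast\ast})=\Rev_{\mathcal{D}}(\mu^\ast)$. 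Applying \cref{eps-bic-bic} to $\epsfloor{\mathcal{D}}$ then yields an IR and BIC auction $\mu^{\ast\ast\ast}$ for $\epsfloor{\mathcal{D}}$ whose revenue there is at least $\Rev_{\mathcal{D}}(\mu^\ast)-O(nmL\sqrt{H\gamma})$. Because $\mu^{\ast\ast\ast}$ depends only on $\mathcal{D}$ (not on our samples), we add it to the union bound of the first paragraph at no asymptotic cost, so that both its revenue and its polynomially-many relevant interim utilities concentrate from $\epsfloor{\mathcal{D}}$ to $\hat{\mathcal{D}}$; in particular $\mu^{\ast\ast\ast}$ is $\eta/2$-BIC for $\hat{\mathcal{D}}$, and therefore
\[
\Rev_{\hat{\mathcal{D}}}(\hat{\mu})\;\ge\;\Rev_{\hat{\mathcal{D}}}(\mu^{\ast\ast\ast})\;\ge\;\Rev_{\mathcal{D}}(\mu^\ast)\;-\;O(nmL\sqrt{H\gamma})\;-\;\varepsilon'.
\]
Transferring $\Rev_{\hat{\mathcal{D}}}(\hat{\mu})$ back to $\Rev_{\mathcal{D}}(\hat{\mu})$ via the same concentration, and bounding the benchmark gap $\OPT_{\eta\text{-BIC}}(\mathcal{D})-\Rev_{\mathcal{D}}(\mu^\ast)\le 2n\sqrt{mLH\eta}$ by a final application of \cref{eps-bic-bic} to $\mathcal{D}$, closes the proof upon choosing $\gamma,\eta,\varepsilon'$ so that all accumulated slacks sum to at most $\varepsilon$. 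The main obstacle the plan must overcome is juggling three distributions ($\mathcal{D},\epsfloor{\mathcal{D}},\hat{\mathcal{D}}$) and two type spaces within one polynomial-size union bound; the twin facts that $\log|\mathcal{C}_{\varepsilon,S}|$ and the logarithm of the exponential-in-$m$ type-pair count are both polynomial are what keep the sample complexity polynomial.
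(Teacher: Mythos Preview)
Your proposal is correct and follows essentially the same route as the paper: round samples, call the oracle on the rounded empirical product distribution, count the possible oracle outputs, union-bound the concentration of revenues and of all interim utilities (via \cref{concentration}) over that class together with one fixed witness auction, and invoke \cref{eps-bic-bic} to bridge an $\varepsilon$-BIC/BIC gap. The one notable variation is in the witness and in where \cref{eps-bic-bic} is applied. The paper's witness is $\OPT_\varepsilon$, the resampled version of the $\eta$-BIC-optimal auction for $\mathcal{D}$ itself (so the target benchmark is baked in); concentration makes $\OPT_\varepsilon$ $O(m\varepsilon)$-BIC for the \emph{empirical} $\hat{\mathcal{D}}$, and \cref{eps-bic-bic} is applied there to compare it with the oracle's exactly-BIC output. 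You instead start from the BIC-optimal $\mu^\ast$ for $\mathcal{D}$, apply \cref{eps-bic-bic} on the fixed \emph{rounded true} distribution $\epsfloor{\mathcal{D}}$ to manufacture an exactly-BIC witness $\mu^{\ast\ast\ast}$, concentrate it down to $\eta/2$-BIC for $\hat{\mathcal{D}}$, and invoke \cref{eps-bic-bic} once more on $\mathcal{D}$ at the end to pass from the BIC benchmark to the $\eta$-BIC one. Both arguments work; the paper's needs a single application of \cref{eps-bic-bic} and lets the oracle be exactly BIC, while yours trades a second application for a witness defined directly on the grid. Two small cleanups: your $\mu^{\ast\ast}$ needs the same $O(mL\gamma)$ payment rebate that you give the output auction, so that it is IR before \cref{eps-bic-bic} is invoked (hence $\Rev_{\epsfloor{\mathcal{D}}}(\mu^{\ast\ast})=\Rev_{\mathcal{D}}(\mu^\ast)-O(nmL\gamma)$ rather than exact equality); and $\log|\mathcal{C}_{\varepsilon,S}|=\poly(n,m,H/\gamma,S)$ should read $\poly(n,m,H/\gamma,\log S)$ --- the paper stresses precisely that $S$ appears only in the base of $|\mathcal{C}_{\varepsilon,S}|$, since a genuinely polynomial dependence on $S$ would prevent the self-referential sample-size inequality from closing.
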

\end{sloppypar}

The following corollary of our main result should be contrasted with a result of \citet{dhn14}, which shows that finding an $\varepsilon$-optimal mechanism for a single additive bidder with \emph{correlated} item distributions requires exponentially many samples.

\begin{theorem}[Single Bidder]\label{polynomial-single-bidder}
When there is $n=1$ bidder, for every $\varepsilon,\delta>0$, the sample complexity of learning an up-to-$\varepsilon$ optimal IR and IC\footnote{Recall that for a single bidder, the notions of BIC and DSIC coincide.} auction is $\poly(n,m,L,H,\nicefrac{1}{\varepsilon},\log\nicefrac{1}{\delta})$. That is, there exists a deterministic algorithm\footnote{See \cref{oracle}.}
that given $\poly(n,m,L,H,\nicefrac{1}{\varepsilon},\log\nicefrac{1}{\delta})$ samples from each $V_{i,j}$, with probability $1\!-\!\delta$ outputs an IR and IC auction that attains from $\bigtimes_{i,j} V_{i,j}$ expected revenue at most an additive~$\varepsilon$ smaller than any IR and IC auction.
\end{theorem}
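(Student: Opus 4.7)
The plan is to specialize the algorithm underlying \cref{polynomial-bic} to a single bidder, exploiting the fact that for $n=1$ any exact-IC mechanism on a finite-support distribution is equivalent to a finite menu of priced outcomes, and that menu mechanisms are exactly IC on the full continuous type space.

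First I would invoke \cref{polynomial-bic} with target suboptimality $\varepsilon/2$ and BIC slack $\eta=\Theta(\varepsilon^{2}/(mLH))$, chosen so that $2\sqrt{mLH\eta}\le\varepsilon/2$ (which lies in the allowed polynomial range). Using $\poly(m,L,H,1/\varepsilon,\log(1/\delta))$ samples, this produces an $\eta$-BIC mechanism $\mu$ satisfying $\Rev_{V}(\mu)\ge\OPT_{\eta\text{-BIC}}(V)-\varepsilon/2\ge\OPT_{\text{IC}}(V)-\varepsilon/2$, where the last inequality uses that $\eta$-BIC relaxes IC and that $\text{BIC}=\text{IC}$ for $n=1$. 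Internally, $\mu$ is constructed from an exact-IC mechanism $\mu^{\star}$ that is optimal on a product empirical $D^{(S)}$ over a fine grid $[0,H]_{\varepsilon_{0}}^{m}$. Since $\mu^{\star}$ is IC on the grid and $n=1$, it is equivalent to a finite menu $M$ of priced outcomes; adding the null IR option $(0,0)$, I would output the menu mechanism $\hat{\mu}$ on $[0,H]^{m}$: on input $v$, allocate $\arg\max_{(x,p)\in M\cup\{(0,0)\}}[v(x)-p]$ and charge its price. By construction $\hat{\mu}$ is exactly IR and IC.

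The remaining task is to show $\Rev_{V}(\hat{\mu})\ge\Rev_{V}(\mu)-\varepsilon/2$, which combined with the above yields the target revenue bound. The two mechanisms agree on the grid (both apply $\mu^{\star}$) and differ only in how they extend to the continuum: $\mu$ rounds $v$ to $\epsfloor{v}$ while $\hat{\mu}$ lets the bidder re-optimize over $M$. Since $\mu$ is $\eta$-IC, this re-optimization can raise the bidder's utility by at most $\eta+O(mL\varepsilon_{0})$, and the goal is to convert this utility slack into an equal-order bound on the expected revenue loss.

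The main obstacle is this final revenue comparison. Pointwise, $\hat{\mu}$'s revenue may drop below $\mu$'s when the bidder picks a cheaper, higher-utility item, and the revenue function of a menu mechanism is a step function across cell boundaries, so a naive Lipschitz argument fails. To control the expected loss I would use the identity $\Rev_{\epsfloor{V}}(\hat{\mu})=\Rev_{V}(\mu)$ (which holds because $\hat{\mu}(\epsfloor{v})=\mu^{\star}(\epsfloor{v})=\mu(v)$ once $\mu^{\star}$ is extended to the full grid via its menu), reducing the task to the distributional-shift estimate $|\Rev_{V}(\hat{\mu})-\Rev_{\epsfloor{V}}(\hat{\mu})|=O(mL\varepsilon_{0})$. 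I would prove this using the convexity and per-coordinate $L$-Lipschitzness of the bidder's indirect-utility function $u(v)=\max_{(x,p)\in M}[v(x)-p]$ together with an envelope-style identity expressing $R^{\hat{\mu}}(v)$ via $u$ and its subgradient, leveraging the product structure of $V$. If this direct argument proves too delicate, \cref{eps-bic-bic} specialized to $n=1$ provides a fallback, non-constructively certifying the existence of an IC mechanism with revenue $\ge\Rev_{V}(\mu)-2\sqrt{mLH\eta}$, against which $\hat{\mu}$ can be calibrated by shrinking $\eta$ and $\varepsilon_{0}$.
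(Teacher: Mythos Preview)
Your plan correctly identifies where the difficulty lies, but the step you flag as ``the main obstacle'' is a genuine gap that neither of your proposed remedies closes.

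Concretely, the claim $|\Rev_{V}(\hat{\mu})-\Rev_{\epsfloor{V}}(\hat{\mu})|=O(mL\varepsilon_{0})$ does not follow from the Lipschitzness and convexity of the indirect utility $u(v)=\max_{(x,p)\in M}[v(x)-p]$. Writing $p(v)=v(x(v))-u(v)$, the term $u(v)$ is indeed $mL$-Lipschitz, but $v(x(v))$ is not: when the bidder crosses a cell boundary the chosen outcome $x(v)$ jumps, and $v(x(v))$ can jump by as much as $mLH$. An envelope identity lets you express $u$ as a path integral of $x$, but it gives you no pointwise or expected control over $v(x(v))$ without further structure (monotone allocations plus integration by parts, say), and in the general Lipschitz model of the paper---where valuations need not even be monotone in the parameters---that structure is unavailable. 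So letting the bidder re-optimize over the undiscounted menu can, on a set of types of nonnegligible mass, move her from an expensive option to a much cheaper one while changing her utility by only $O(m\varepsilon_{0})$.

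Your fallback via \cref{eps-bic-bic} does not help either: that theorem only certifies that \emph{some} IR and IC mechanism exists with revenue close to $\Rev_{V}(\mu)$; it says nothing about your particular $\hat{\mu}$, so there is nothing to ``calibrate $\hat{\mu}$ against.''

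The paper fixes exactly this issue with a one-line trick (\cref{one-bidder-nudge}, due to Nisan): before letting the bidder re-optimize, multiply every menu price by $1-\sqrt{\eta}$. Because expensive options are discounted more, an $\eta$-IC bidder can never switch to an option that is more than $\sqrt{\eta}$ cheaper, so the revenue loss is at most $(1-\sqrt{\eta})\sqrt{\eta}+\sqrt{\eta}\cdot\Rev$. Applying this nudge to the $\eta$-IC output of \cref{bic-alg} immediately yields an exactly IR and IC mechanism with the required revenue guarantee. This replaces your unproven distributional-shift bound with a pointwise argument that needs no envelope machinery and no assumptions on the form of $V$.
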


The following corollary of our main result unifies and even somewhat extends the state-of-the-art results for single-parameter ($m=1$) revenue maximization. To state it we restrict ourselves to the setting where revenue maximization has been solved by \citet{m81}: assume that $X\subseteq[0,1]^n$ and that $v_i(x)=v_i\cdot x_i$.\footnote{This setting is slightly more general than the state-of-the-art single-parameter results, which assume $X\subseteq\{0,1\}^n$ and/or the special setting of position auctions \citep{ht16,gn17}. As noted in the introduction, \citet{gn17} explicitly state that their techniques cannot accommodate arbitrary $X\subseteq[0,1]^n$ and leave this question (which we successfully resolve) open.}

\begin{sloppypar}
\begin{theorem}[Single-Parameter]\label{polynomial-single-parameter}
In an \mbox{$m\!=\!1$}-parameter setting with $X\subseteq[0,1]^n$ and ${v_i(x)=v_i\cdot x_i}$, for every ${\varepsilon,\delta>0}$, the sample complexity of \textbf{efficiently} learning an \mbox{up-to-$\varepsilon$} IR and DSIC auction is $\poly(n,m,L,H,\nicefrac{1}{\varepsilon},\log\nicefrac{1}{\delta})$. That is, there exists a deterministic algorithm with running time $\poly(n,m,L,H,\nicefrac{1}{\varepsilon},\log\nicefrac{1}{\delta})$ that given $\poly(n,m,L,H,\nicefrac{1}{\varepsilon},\log\nicefrac{1}{\delta})$ samples from each $V_{i,j}$, with probability $1\!-\!\delta$ outputs an IR and DSIC auction that attains from $\bigtimes_{i,j} V_{i,j}$ expected revenue at most an additive~$\varepsilon$ smaller than any IR and BIC auction.
\end{theorem}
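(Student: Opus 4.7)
The plan is to derive \cref{polynomial-single-parameter} from \cref{polynomial-bic} through two observations specific to the single-parameter setting: instantiating the revenue-maximization oracle called by the algorithm of \cref{polynomial-bic} with Myerson's optimal auction produces an output that is in fact ex-post DSIC, and this oracle admits a polynomial-time implementation when $m=1$.

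For the DSIC claim, I would argue that the oracle used by \cref{polynomial-bic} is applied to product distributions with finite support, and that in the single-parameter setting Myerson's theorem \citep{m81} yields, for any such distribution, a revenue-optimal mechanism expressible as allocation by ironed virtual values together with the standard threshold-payment rule. Crucially, this mechanism is DSIC \emph{ex post}: its DSIC property reduces to pointwise monotonicity of the allocation in each bidder's reported value, a property that holds independently of what distribution was used to derive the virtual values. Thus, when the oracle-returned mechanism is evaluated on valuations drawn from the true distributions $\bigtimes_{i} V_{i,1}$, it remains exactly DSIC. Provided the algorithm of \cref{polynomial-bic} forms its final output from oracle returns by DSIC-preserving operations (selection by empirical revenue on held-out samples, and at worst convex combinations thereof), the overall auction is exactly DSIC. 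The revenue comparison is then essentially free: by \cref{polynomial-bic}, the output attains revenue within $\varepsilon$ of the best IR $\eta$-BIC auction; since every BIC auction is automatically $\eta$-BIC, the best IR $\eta$-BIC revenue is at least the best IR BIC revenue, so the output's revenue is within $\varepsilon$ of the best IR BIC auction as required.

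For computational efficiency, each oracle call reduces to computing Myerson's optimal mechanism on a product of $n$ discrete distributions, each supported on $O(\nicefrac{H}{\varepsilon})$ points. Ironed virtual values are computable in polynomial time in this representation; given an allocation set $X \subseteq [0,1]^n$, the resulting expected virtual-welfare maximization is a linear program of polynomial size (or, equivalently, can be handled by the algorithm of \cite{e07}). Combined with the polynomial overhead of the algorithm of \cref{polynomial-bic} itself, this yields a polynomial-time overall procedure. I expect the main obstacle to be verifying that the aggregation rule of \cref{polynomial-bic} is truly DSIC-preserving: the claim is immediate if the algorithm simply returns the oracle output maximizing empirical revenue on a held-out batch, and it extends to convex combinations of oracle outputs; the precise check should follow directly from the construction presented in \cref{bic}.
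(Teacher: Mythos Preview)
Your high-level strategy matches the paper's: instantiate the oracle in the algorithm underlying \cref{polynomial-bic} with Myerson's auction (efficiently computable via \cite{e07}), and argue that the resulting output is exactly DSIC rather than merely $\eta$-BIC. The gap is in your final step. The algorithm of \cref{bic} (\cref{bic-alg}) does \emph{not} form its output by ``selection on a held-out batch'' or ``convex combinations of oracle outputs.'' It makes a \emph{single} oracle call to obtain a mechanism $\mu$ on the rounded empirical product distribution, and then returns the mechanism $\mu^{\varepsilon}$ that rounds each input bid down to the nearest multiple of $\varepsilon$ and applies $\mu$. Composition with bid-rounding is not a DSIC-preserving operation in general---indeed, the analysis in \cref{bic} only guarantees that $\mu^{\varepsilon}$ is $O(m\varepsilon)$-BIC even when $\mu$ is exactly BIC/DSIC---so your proposed verification (``immediate if the algorithm simply returns the oracle output\ldots'') does not close.

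The paper fills this gap with an argument specific to Myerson auctions: when $\mu$ is Myersonian, the composition $\mu^{\varepsilon}(v)=\mu(\lfloor v\rfloor_\varepsilon)$ has an allocation rule that is still monotone in each $v_i$ (floor preserves monotonicity), and one checks directly that the inherited payments satisfy Myerson's payment identity for that step-function allocation rule; hence $\mu^{\varepsilon}$ is exactly DSIC. There is a second subtlety you miss: the oracle's mechanism is defined only on the empirical support $\bigtimes_i\supp W_i$ and must be extended to all of $[0,H]_\varepsilon^{n}$. The generic extension in \cref{bic-alg} (interim best-response for out-of-support bidders) need not yield a monotone allocation, so for this corollary the paper replaces each out-of-support $v_i$ by $\max\{w\in\supp W_i:w\le v_i\}$, which does preserve monotonicity. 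With these two points in place, your revenue comparison via ``BIC $\Rightarrow$ $\eta$-BIC'' is fine.
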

\end{sloppypar}

\section{Main Result Proof Overview}\label{overview}

In this \lcnamecref{overview} we roughly sketch our learning algorithm and present each of the main ideas behind its analysis, by presenting a proof overview structured to present each of these ideas separately. The proof overview is given in this \lcnamecref{overview} only for an additive multi-item setting, and some elements of the proof are omitted or glossed over for readability. The full proof, which contains all omitted details and applies to a general arbitrary Lipschitz setting, and in which the main ideas that are surveyed in this \lcnamecref{overview} separately are quite intermingled, is given in \cref{bic}.\footnote{The somewhat less involved proof for DSIC auctions that is sketched in this \lcnamecref{overview} as an intermediary proof is given in \cref{dsic} (in that \lcnamecref{dsic}, though, the analysis is more general as it refers to general Lipschitz valuations and not only to additive valuations as in this \lcnamecref{overview}), both since we find that result interesting in its own right, and to allow interested readers to familiarize themselves with that proof before diving into the more involved proof for BIC auctions in \cref{bic}.}

Our learning algorithm is similar in nature to the one presented in \citet{dhp16} for certain single-parameter environments, however the analysis that we will use to show that it does not overfit the samples is completely different (even for single-parameter environments, where our analysis holds for arbitrary allocation constraints).
Recall that our result is (necessarily) information-theoretic and not computationally efficient. Therefore, some steps in the algorithm perform operations that are not known to be performable in poly-time (but can certainly be performed without access to any $V_{i,j}$). In particular, our algorithm will solve an instance of a Bayesian revenue maximization problem for a precisely given input of finite support (step 2). 

\paragraph{Algorithm.} We start with $S$ (to be determined later) independent samples from each $V_{i,j}$. Our algorithm roughly proceeds as follows:
\begin{enumerate}
\item
For each bidder $i$ and good $j$, round all samples from $V_{i,j}$ down to the nearest multiple of $\varepsilon$. Denote the uniform distribution over these rounded samples by $W_{i,j}$.
\item
Find an IR and $O(\varepsilon)$-IC (see below) multi-item auction that maximizes the revenue from the product of the rounded empirical distributions $W_{i,j}$. Denote this auction by~$\mu$. \item Return the auction $\mu^{\varepsilon}$, which on input $\vec{v}$, rounds down all actual bids to the nearest multiple of $\varepsilon$, $\lfloor\vec{v}\rfloor_\varepsilon$, and allocates and charges payments according to the output of $\mu(\lfloor\vec{v}\rfloor_\varepsilon)$ when run on these rounded bids.
\end{enumerate}

We start with a simpler scenario, namely that of learning DSIC auctions, and only toward the end of this \lcnamecref{overview} introduce the additional issues that arise when learning BIC auctions. We therefore start by showing that if in step~2 of our algorithm we interpret ``IC'' as ``DSIC,'' that is, that if in that step we take an IR and $O(\varepsilon)$-DSIC auction that maximizes the revenue from the product of the rounded empirical distributions, then there exists $S=\poly(n,m,H,\nicefrac{1}{\varepsilon},\log\nicefrac{1}{\delta})$ such that the auction~$\mu^{\varepsilon}$ output by our algorithm is $O(\varepsilon)$-DSIC and its revenue from $\bigtimes_{i,j}V_{i,j}$ is, with probability at least $1\!-\!\delta$, up-to-$O(\varepsilon)$-close to the maximum revenue attainable from $\bigtimes_{i,j}V_{i,j}$ by any DSIC auction. (The formal statement and full proof are given in \cref{dsic}.) We note that the auction output by the algorithm is indeed $O(\varepsilon)$-DSIC, since the output $\mu$ in step~2 is $O(\varepsilon)$-DSIC, and the rounding of the actual bids as defined in step~3 only loses another~$m\varepsilon$.\footnote{\label{big-O-gloss}The astute reader will notice that $m\varepsilon \notin O(\varepsilon)$. As all our bounds are polynomial in $m, \nicefrac{1}{\varepsilon}$ anyway, this is immaterial, and one example of a detail that we glossed over in this \lcnamecref{overview} in the interest of cleanliness, as promised.}

\paragraph{Uniform Convergence of the Revenue of all Possible Output Mechanisms.} Note that for every $i,j$, each rounded sample from step~1 of the algorithm is independently distributed according to $\epsfloor{V_{i,j}}$. The main challenge is in showing that the resulting auction gives up-to-$O(\varepsilon)$ optimal revenue not only on the rounded empirical distributions $\bigtimes_{i,j}W_{i,j}$, but also on the rounded true distributions $\bigtimes_{i,j}\epsfloor{V_{i,j}}$. That is, the main challenge is in showing that no overfitting occurs, in the absence of any structural properties that we can exploit for the mechanisms that are optimal (or up-to-$O(\varepsilon)$ optimal) for $\bigtimes_{i,j}W_{i,j}$. 

This is the point where our approach makes a sharp departure from prior works. Prior work deems this task to be hopeless, and proceeds by proving structural results on optimal mechanisms for restricted domains. We circumvent this by instead simply counting the number possible inputs we will ever query in step~2, and observing that the number of mechanisms over which we have to obtain uniform convergence is at most this number. A crucial observation is that while we do have to consider more and more mechanisms as the number of samples~$S$ grows, the number of mechanisms that we have to consider grows moderately enough so as to not eclipse our gains from increasing the number of samples that we take. For this argument to hold, it is \emph{essential} that our distributions are product distributions.

Let $\mathcal{V}$ be the set of all product distributions $\bigtimes_{i,j} W'_{i,j}$ where each $W'_{i,j}$ is the uniform distribution over some multiset of $S$ values from $[0,H]_{\varepsilon}$. Let $M$ be the set of all mechanisms returned by step~2 of the algorithm for some distribution $\bigtimes_{i,j} W'_{i,j} \in \mathcal{V}$.
\textbf{\boldmath At the heart of our analysis, and of this part of our analysis in particular, is the observation that $|\mathcal{V}|\leq(S+1)^{n\cdot m\cdot\lceil\nicefrac{H}{\varepsilon}\rceil}$. Crucially, this expression has $S$ only in the base and not in the exponent.} Indeed, for every $\bigtimes_{i,j} W'_{i,j} \in \mathcal{V}$, for every $i,j$, and for every integer multiple of $\varepsilon$ in $[0,H]$ (there are $\lceil\nicefrac{H}{\varepsilon}\rceil$ many such values), the probability of this value in $W'_{i,j}$ can be any of the $S+1$ values $0,\nicefrac{1}{S},\ldots,1$. Therefore, $|M|\leq(S+1)^{n\cdot m\cdot\lceil\nicefrac{H}{\varepsilon}\rceil}$. 

We will choose $S$ so that with probability at least $1\!-\!\delta$, it simultaneously holds for all mechanisms $\mu\in M$ that
\begin{equation}\label{concentrate}
|\Rev_{\bigtimes_{i,j}W_{i,j}}(\mu)-\Rev_{\bigtimes_{i,j}\epsfloor{V_{i,j}}}(\mu)|\le \varepsilon.
\end{equation}

To this end, we will use a Chernoff-style concentration bound (\cref{concentration}) for product distributions, which when applied to our setting shows that for each auction separately \cref{concentrate} is violated with probability exponentially small in $\frac{\varepsilon^2S}{m^2H^2}$. So, to have \cref{concentrate} hold with probability at most $1\!-\!\delta$ for all auctions in $M$ simultaneously, we choose $S$ so that the violation probability for each auction separately is at most $\nicefrac{\delta}{|M|}$, and use a union bound.
Since $|M|\leq(S+1)^{n\cdot m\cdot\lceil\nicefrac{H}{\varepsilon}\rceil}$, we have that it is enough to take $S$ such that $\frac{\varepsilon^2S}{m^2H^2}$ is of order of magnitude at least $\log\nicefrac{|M|}{\delta}=\log\frac{(S+1)^{n\cdot m\cdot\lceil\nicefrac{H}{\varepsilon}\rceil}}{\delta}\approx\log\nicefrac{1}{\delta}+n\cdot m\cdot \nicefrac{H}{\varepsilon} \log S$, which is clearly possible by taking a suitable $S$ that is polynomial in $n$, $m$, $H$, $\nicefrac{1}{\varepsilon}$, and $\log\nicefrac{1}{\delta}$.
So, taking a number of sample of this magnitude gives that with probability at least $1\!-\!\delta$, \cref{concentrate} simultaneously holds for all mechanisms in $M$ and so the mechanism output by step~2 of the algorithm gets up-to-$O(\varepsilon)$ the same revenue on the product of the rounded empirical distributions $\bigtimes_{i,j} W_{i,j}$ as it does on the product of the rounded true distributions $\bigtimes_{i,j}\epsfloor{V_{i,j}}$. So, the revenue that the mechanism $\mu^{\varepsilon}$ output by (step~3 of) the algorithm attains from $\bigtimes_{i,j}V_{i,j}$ is identical to the revenue that the mechanism $\mu$ output by step~2 of the algorithm attains from $\bigtimes_{i,j}\epsfloor{V_{i,j}}$, which is up-to-$O(\varepsilon)$ the optimal revenue attainable from $\bigtimes_{i,j} W_{i,j}$.

\paragraph{Revenue Close to Optimal.} Our next task is to show that with high probability the optimal revenue attainable from $\bigtimes_{i,j} W_{i,j}$ by any $O(\varepsilon)$-DSIC auction is up-to-$O(\varepsilon)$ the same as the optimal revenue attainable from $\bigtimes_{i,j}V_{i,j}$ by any DSIC auction, which would imply that the revenue that $\mu^\varepsilon$ attains from $\bigtimes_{i,j}V_{i,j}$ is close to optimal, as required. Let $\OPT$ be the DSIC auction that maximizes the revenue (among such auctions) in expectation over $\bigtimes_{i,j}V_{i,j}$. \textbf{\boldmath At the heart of this part of our analysis is the fact that while our algorithm cannot hope to find $\OPT$, we can nonetheless carefully reason about it in our analysis, as it is nonetheless fixed and well-defined (in particular, it does not depend on the drawn samples).} Let $\OPT_{\varepsilon}$ be the mechanism defined over $\bigtimes_{i,j}\epsfloor{V_{i,j}}$ as follows: for each bidder $i$ and item $j$, let $w_{i,j}$ be the input bid of bidder $i$ for item $j$ (a multiple of~$\varepsilon$), and replace it by a bid $v_{i,j}$ independently drawn from the distribution~$V_{i,j}$ conditioned upon being in the interval $[w_{i,j},w_{i,j}+\varepsilon)$; the auction $\OPT_{\varepsilon}$ allocates and charges payments according to the output of $\OPT$  when run on these drawn replacement bids. Obviously, the auction $\OPT_{\varepsilon}$ is an $O(\varepsilon)$-DSIC auction\footnote{Or more accurately, $O(m\varepsilon)$-DSIC; see \cref{big-O-gloss}.} whose revenue from $\bigtimes_{i,j}\epsfloor{V_{i,j}}$ is identical to that of the auction $\OPT$ from $\bigtimes_{i,j}V_{i,j}$, i.e., to the optimal revenue from $\bigtimes_{i,j}V_{i,j}$, so it is enough to show that the revenue of the auction $\OPT_{\varepsilon}$ from $\bigtimes_{i,j}\epsfloor{V_{i,j}}$ and from $\bigtimes_{i,j}W_{i,j}$ is the same up-to-$O(\varepsilon)$ with high probability, that is, that \cref{concentrate} also holds for the mechanism $\OPT_{\varepsilon}$ with high probability. To do so, we modify the definition of the set $M$ to also include the (well-defined even prior to sampling, despite being unknown to our algorithm) mechanism $\OPT_{\varepsilon}$ --- since the order of magnitude of $|M|$ does not change, the order of magnitude of the number of samples required to guarantee that \cref{concentrate} holds for all auctions in $M$ (including $\OPT_{\varepsilon}$) does not change.

\paragraph{Bayesian Incentive Compatibility.} We conclude our proof overview 
by adapting the proof to the more delicate BIC notion of incentive compatibility, thus showing that if in step~2 of our algorithm we take an $O(\varepsilon)$-BIC (rather than $O(\varepsilon)$-DSIC) and IR auction that maximizes the revenue from the product of the rounded empirical distributions, then there exists $S=\poly(n,m,H,\nicefrac{1}{\varepsilon},\log\nicefrac{1}{\delta})$ such that the auction $\mu^{\varepsilon}$ output by our algorithm is, with probability at least $1-\delta$, an $O(\varepsilon)$-BIC auction whose revenue from $\bigtimes_{i,j}V_{i,j}$ is up-to-$O(\varepsilon)$-close to the maximum revenue attainable from $\bigtimes_{i,j}V_{i,j}$ by any BIC auction (and therefore, by \cref{eps-bic-bic}, up-to-$O(\sqrt{\varepsilon})$-close to the revenue attainable from this distribution by any $O(\varepsilon)$-BIC auction)\footnote{\cref{eps-bic-bic} in fact allows us to also reduce to a cleaner oracle, which finds an optimal BIC auction rather than an optimal $O(\varepsilon)$-BIC auction, in \cref{bic}.}.
The challenge here is that (approximate) BIC is a distribution-dependent property of a mechanism (as opposed to DSIC, which is a distribution-agnostic incentive compatibility notion). Indeed, examining our analysis above with \mbox{($\varepsilon$-)DSIC} replaced by \mbox{($\varepsilon$-)BIC}, we note that the resulting analysis falls short of carrying through in two points: it is unclear why $\OPT_{\varepsilon}$ is $O(\varepsilon)$-BIC not only with respect to $\bigtimes_{i,j}V_{i,j}$ but also with respect to $\bigtimes_{i,j} W_{i,j}$, and it is unclear why any mechanism that can be output by step~2 of our algorithm is $O(\varepsilon)$-BIC not only with respect to $\bigtimes_{i,j} W_{i,j}$ but only with respect to $\bigtimes_{i,j}V_{i,j}$.
\textbf{At the heart of this part of our analysis is the observation that the set of all interim expected utilities, of all bidders' possible types, from all possible reported types, in all mechanisms\footnote{The set of all such interim expected utilities for a single mechanism is sometimes referred to as the \emph{reduced form} of the mechanism.} on the one hand is comprised of a small-enough number of random variables to still enable uniform convergence, and on the other hand contains sufficient information to show that incentive constraints do not deteriorate much.} Concretely, we will choose $S$ so that with probability at least~$1-\delta$, simultaneously for all mechanisms in $M$ (including $\OPT_{\varepsilon}$) not only does \cref{concentrate} hold, but also the following holds for every bidder $k\in[n]$ and values $v_k,v'_k\in[0,H]^m_{\varepsilon}$:
\begin{equation}\label{ic-concentrate}
|\expect{v_{-k}\sim\bigtimes_{\substack{i,j\\i\ne k}}W_{i,j}}{u_k\bigl(v_k,\mu(v'_k,v_{-k})\bigr)}-\expect{v_{-k}\sim\bigtimes_{\substack{i,j\\i\ne k}}\epsfloor{V_{i,j}}}{u_k\bigl(v_k,\mu(v'_k,v_{-k})\bigr)}|\le \varepsilon.
\end{equation}
We note that for every mechanism $\mu$, we require that \cref{ic-concentrate} hold for $n\!\cdot\!\lceil\nicefrac{H}{\varepsilon}\rceil^{2m}$ distinct combinations of of $k\in[n]$ and $v_k,v'_k\in[0,H]_{\varepsilon}^m$. Crucially, this number does not depend on~$S$. So, the number of instances of \cref{ic-concentrate} that we would like to hold simultaneously with high probability is $|M|\cdot n\!\cdot\!\lceil\nicefrac{H}{\varepsilon}\rceil^{2m}$,
and so we have $|M|\cdot(1+n\!\cdot\!\lceil\nicefrac{H}{\varepsilon}\rceil^{2m})\le(1+n\!\cdot\!\lceil\nicefrac{H}{\varepsilon}\rceil^{2m})\cdot(S+1)^{n\cdot m\cdot\lceil\nicefrac{H}{\varepsilon}\rceil}$ instances of either \cref{concentrate} or \cref{ic-concentrate} that we would like to hold simultaneously with high probability.\footnote{A somewhat similar idea appeared in~\citet{cdw12}, albeit without exploiting independence across items.} As this number still has $S$ only in the base and not in the exponent, we can proceed as above to guarantee this with high probability using only a polynomial number of samples.

\section{Proof of Main Result}\label{bic}

In this \lcnamecref{bic}, we give the full details of the proof of our main result, \cref{polynomial-bic}. The proofs of supporting \lcnamecrefs{bic-opteps-opt} are relegated to \cref{proofs}.

\begin{proof}[Proof of \cref{polynomial-bic}]
We assume that for every $i\in[n]$ and $j\in[m]$, we have~$S$ (to be determined later) independent samples $(v_{i,j}^s)_{s=1}^S$ from $V_{i,j}$. \cref{bic-alg} presents our learning \lcnamecref{bic-alg}, which is similar in nature to the one presented in \citet{dhp16} for certain single-parameter environments, however the analysis that we will use to show that it does not overfit the samples is completely different (even for single-parameter environments, where our analysis holds for arbitrary allocation constraints).
\begin{algorithm}[t]
\small
\DontPrintSemicolon
\SetAlgoHangIndent{3.65em}
\SetKwProg{Fn}{Function}{:}{}
\SetKwFunction{EmpOptBIC}{EmpiricalOptimize}
\SetKwFunction{OptOrcBIC}{OptimizationOracle}

\Fn{\EmpOptBIC{$H,X,\varepsilon,\delta,(v^s_{i,j})_{i\in[n],j\in[m]}^{s\in[S]}$}}{
\KwIn{For every $i\in[n],j\in[m]$, $(v^s_{i,j})_{s=1}$ is a sequence of $S=\tilde{O}\left(\tfrac{H^2}{\varepsilon^2}\cdot\bigl(\log\nicefrac{1}{\delta}+\tfrac{nmH}{\varepsilon}\bigr)\right)$ samples from $V_{i,j}$}
\KwOut{With probability $1\!-\!\delta$, an IR and $(4mL\varepsilon)$-BIC mechanism for $\bigtimes_{i,j}V_{i,j}$, defined over $[0,H]^{n\cdot m}$ with allocations in $X$, whose expected revenue from $\bigtimes_{i,j}V_{i,j}$ is up to an additive $\bigl(4nmL(\varepsilon+\sqrt{2H\varepsilon})\bigr)$ smaller than that of any IR and $(4mL\varepsilon)$-BIC mechanism for $\bigtimes_{i,j}V_{i,j}$ with allocations in $X$}
\For{$i\in[n],j\in[m]$}{
\For{$s\in[S]$}{
$w^s_{i,j}\longleftarrow\epsfloor{v^s_{i,j}}$\;
}
$W_{i,j}\longleftarrow$ \emph{the uniform distribution over $(w^s_{i,j})_{s=1}^S$}\;
}
$\mu\longleftarrow\OptOrcBIC{$H,\varepsilon,(W_{i,j})_{i\in[n],j\in[m]}$}$\tcp*{See definition below}
\KwRet{The mechanism that for input $(v_{i,j})_{i\in[n],j\in[m]}$ outputs $\mu\Bigl(\bigl(\epsfloor{v_{i,j}}\bigr)_{i\in[n],j\in[m]}\Bigr)$, modified to charge each bidder $mL\varepsilon$ less.}
}

\vspace{1em}

\Fn{\OptOrcBIC{$H,X,\varepsilon,(W_{i,j})_{i\in[n],j\in[m]}$}}{
\KwIn{For every $i\in[n],j\in[m]$, $W_{i,j}$ is a distribution over $[0,H]_{\varepsilon}$}
\KwOut{An IR and BIC mechanism for $\bigtimes_{i,j}W_{i,j}$, defined over $[0,H]_{\varepsilon}^{n\cdot m}$ with allocations in $X$, which maximizes the expected revenue from $\bigtimes_{i,j}W_{i,j}$ among all IR and BIC mechanisms for $\bigtimes_{i,j}W_{i,j}$ with allocations in $X$}
$\mu\longleftarrow$ \emph{an IR and BIC mechanism for $\bigtimes_{i,j}W_{i,j}$ (defined over $\bigtimes_{i,j}\supp W_{i,j}$) with \linebreak \vphantom{~}\hspace{-.8em}allocations in $X$, which maximizes the expected revenue from $\bigtimes_{i,j}W_{i,j}$ among \linebreak \vphantom{~}\hspace{-.8em}all such mechanisms}\;
\KwRet{The mechanism obtained by extending $\mu$ to be defined over $(v_{i,j})_{i,j}\in [0,H]_{\varepsilon}^{n\cdot m}$ as follows: for every $k\in[n]$ s.t.\ $v_{k,j}\notin\supp W_{k,j}$ for some $j\in[m]$, replace the entire bid vector $v_k=(v_{k,j})_{j\in[m]}$ of bidder $k$ with a bid vector $v'_k=(v'_{k,j})_{j\in[m]}\in\supp\bigtimes_jW_{k,j}$ that maximizes $\expect{v'_{-k}\sim\bigtimes_{i,j:i\ne k}W_{i,j}}{u_k\bigl(v_k,\mu(v'_k,v'_{-k})\bigr)}$}
}

\caption{Empirical Multi-Parameter Up-to-$\varepsilon$ BIC Revenue Maximization.}\label{bic-alg}
\end{algorithm}%
We now analyze \cref{bic-alg}. Note that for every $i,j,s$, we have that $w_{i,j}^s\sim\epsfloor{V_{i,j}}$ independently.

Let $\mathcal{V}$ be the set of all product distributions $\bigtimes_{i,j} W'_{i,j}$ where each $W'_{i,j}$ is the uniform distribution over some multiset of $S$ values from $[0,H]_{\varepsilon}$. Let $M$ be the set of all mechanisms of the form \OptOrcBIC{$H,X,\varepsilon,\bigtimes_{i,j} W'_{i,j}$} for all $\bigtimes_{i,j} W'_{i,j} \in \mathcal{V}$.
At the heart of our analysis is the observation that $|\mathcal{V}|<(S+1)^{n\cdot m\cdot\lceil\nicefrac{H}{\varepsilon}\rceil}$. (Crucially, this expression has~$S$ only in the base and not in the exponent!) Indeed, for every $\bigtimes_{i,j} W'_{i,j} \in \mathcal{V}$, for every~$i,j$, and for every integer multiple of $\varepsilon$ in $[0,H]$ (there are $\lceil\nicefrac{H}{\varepsilon}\rceil$ many such values), the probability of this value in $W'_{i,j}$ can be any of the $S+1$ values $0,\nicefrac{1}{S},\ldots,1$. (The inequality on $|\mathcal{V}|$ is strict since, for example, not all probabilities can be $0$ simultaneously.)
Therefore, $|M|<(S+1)^{n\cdot m\cdot\lceil\nicefrac{H}{\varepsilon}\rceil}$. 

Let $\OPT$ be the IR and $(4mL\varepsilon)$-BIC auction for $\bigtimes_{i,j}V_{i,j}$ that maximizes the expected revenue (among such auctions) from $\bigtimes_{i,j}V_{i,j}$ (our learning \lcnamecref{bic-alg} cannot hope to find $\OPT$, but in our analysis we may carefully reason about it, as it is nonetheless well defined; in particular, it does not depend on $S$). Let $\OPT_{\varepsilon}$ be the (randomized) mechanism defined over $\bigtimes_{i,j}\epsfloor{V_{i,j}}$ as follows: let $w=(w_{i,j})_{i\in[n],j\in[m]}$ be an input valuation; for each $i,j$, independently draw $v_{i,j}\sim V_{i,j}|_{[w_{i,j},w_{i,j}+\varepsilon)}$; let $(x,p)=\OPT\bigl((v_{i,j})_{i\in[n],j\in[m]}\bigr)$; the allocation of $\OPT_{\varepsilon}(w)$ is $x$, and the payment of each bidder $i$ is
$p_i-mL\varepsilon$.

\begin{lemma}\label{bic-opteps-opt}
$\OPT_{\varepsilon}$ is an IR and $(6mL\varepsilon)$-BIC mechanism for $\bigtimes_{i,j}\epsfloor{V_{i,j}}$, whose expected revenue from $\bigtimes_{i,j}\epsfloor{V_{i,j}}$ is
$nmL\varepsilon$ smaller than the expected revenue of $\OPT$ from $\bigtimes_{i,j}V_{i,j}$.
\end{lemma}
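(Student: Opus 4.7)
The plan is to establish the three claimed properties---revenue, IR, and $(5m\varepsilon)$-BIC---in turn. The key coupling underlying all three is this: if $w \sim \bigtimes_{i,j}\epsfloor{V_{i,j}}$ and then each $v_{i,j}$ is drawn (conditionally) from $V_{i,j}|_{[w_{i,j},w_{i,j}+\varepsilon)}$, then marginally $v \sim \bigtimes_{i,j}V_{i,j}$. For the revenue claim this is almost immediate: the expected total payment of $\OPT$ on the drawn $v$ equals $\OPT$'s expected revenue on $\bigtimes_{i,j}V_{i,j}$, and the per-bidder $m\varepsilon$ discount in $\OPT_{\varepsilon}$ subtracts exactly $nm\varepsilon$ in aggregate.

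For IR, I would fix any input $w$ and any bidder $k$ and show $\expect{}{w_k(x) - p_k + m\varepsilon} \ge 0$, where the expectation is over the internal randomness of $\OPT_\varepsilon$ (the replacement draws together with $\OPT$'s own randomness). Since each $v_{k,j} \in [w_{k,j},w_{k,j}+\varepsilon)$, the Lipschitz hypothesis gives $w_k(x) \ge v_k(x) - m\varepsilon$ pointwise in $x$ (picking up one $\varepsilon$ per coordinate). Combining this with IR of $\OPT$ applied pointwise in $v$ (namely $\expect{}{v_k(x) - p_k \mid v} \ge 0$) and then taking expectation over $v$ yields $\expect{}{w_k(x) - p_k + m\varepsilon} \ge \expect{}{v_k(x) - p_k} \ge 0$.

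For $(5m\varepsilon)$-BIC, I would compare bidder $k$'s truthful interim utility (true value $w_k$, reported value $w_k$) with the deviation interim utility (true value $w_k$, reported value $w'_k$). Reporting $w_k$ or $w'_k$ causes $\OPT_\varepsilon$ to feed $\OPT$ a replacement value $\tilde v_k$ drawn from $V_k|_{[w_k,w_k+\varepsilon)}$ or $\tilde v'_k$ from $V_k|_{[w'_k,w'_k+\varepsilon)}$ respectively; by the coupling above, the induced $v_{-k}$ is distributed as $\bigtimes_{i\ne k,j}V_{i,j}$ in either case. Conditioning on the realizations $\tilde v_k,\tilde v'_k$ of these replacements, I would invoke the $(3m\varepsilon)$-BIC guarantee of $\OPT$ to bound the interim gain, under the $\tilde v_k$-valuation, of reporting $\tilde v'_k$ instead of $\tilde v_k$. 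Then apply Lipschitz twice---once on each side of the resulting inequality---to swap $\tilde v_k(\cdot)$ for the bidder's actual valuation $w_k(\cdot)$, each swap costing at most $m\varepsilon$. Finally, integrate over $\tilde v_k$, $\tilde v'_k$, and $w_{-k}$; the bound is preserved, and the total incentive slack becomes $3m\varepsilon + m\varepsilon + m\varepsilon = 5m\varepsilon$ (the $-m\varepsilon$ payment discount cancels on both sides).

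The main obstacle will be carrying out this double Lipschitz substitution cleanly: the bidder's true valuation function in $\OPT_\varepsilon$ is $w_k(\cdot)$, but the $(3m\varepsilon)$-BIC guarantee available from $\OPT$ is phrased in terms of the random replacement $\tilde v_k(\cdot)$. Translating that guarantee back to the $w_k$-valuation before integrating is precisely what absorbs the two additional $m\varepsilon$'s of slack, and one must be careful that each Lipschitz substitution is performed \emph{inside} the expectation conditional on $\tilde v_k,\tilde v'_k$ (where the per-realization coordinatewise bound $|\tilde v_{k,j} - w_{k,j}| < \varepsilon$ actually holds) rather than after integration over these draws.
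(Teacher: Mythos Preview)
Your proposal is correct and follows essentially the same approach as the paper: the same coupling between $w$ and the replacement draws $v$, the same use of Lipschitz on both sides of the BIC inequality, and the same revenue computation. The only cosmetic difference is in the bookkeeping for $(5m\varepsilon)$-BIC: you observe that the $m\varepsilon$ payment discount appears identically on the truthful and deviation sides and hence cancels, leaving $3m\varepsilon + 2\cdot m\varepsilon$ from BIC-of-$\OPT$ plus the two Lipschitz swaps; the paper instead uses the truthful-side discount to absorb the first Lipschitz loss and then re-incurs the deviation-side discount as a separate $m\varepsilon$ at the end---the arithmetic is identical.
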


We will choose $S$ so that with probability at least $1\!-\!\delta$, both of the following simultaneously hold for all mechanisms $\mu\in M \cup \{\OPT_{\varepsilon}\}$:
\begin{itemize}
\item
\hfill
$\bigl|\Rev_{\bigtimes_{i,j}W_{i,j}}(\mu)-\Rev_{\bigtimes_{i,j}\epsfloor{V_{i,j}}}(\mu)\bigr|\le nmL\varepsilon,$
\hfill\refstepcounter{equation}\label{bic-concentrate}\textup{(\theequation)}
\item
For every agent $k\in[n]$ and types $v_k,v'_k\in[0,H]_{\varepsilon}^m$:
\begin{equation}\label{bic-ic-concentrate}
\bigl|\expect{v_{-k}\sim\bigtimes_{i,j:i\ne k}W_{i,j}}{u_k\bigl(v_k,\mu(v'_k,v_{-k})\bigr)}-\expect{v_{-k}\sim\bigtimes_{i,j:i\ne k}\epsfloor{V_{i,j}}}{u_k\bigl(v_k,\mu(v'_k,v_{-k})\bigr)}\bigr|\le mL\varepsilon.
\end{equation}
\end{itemize}

We note that for every mechanism $\mu$, we require that \cref{bic-ic-concentrate} hold for $n\!\cdot\!\lceil\nicefrac{H}{\varepsilon}\rceil^{2m}$ distinct combinations of of $k\in[n]$ and $v_k,v'_k\in[0,H]_{\varepsilon}^m$. Crucially, this number does not depend on $S$.

By \cref{concentration} (with $\ell\eqdef n\cdot m$, and note that any mechanism's revenue is bounded by
$nmLH$), we have that for each mechanism $\mu\in M \cup \{\OPT_{\varepsilon}\}$ separately \cref{bic-concentrate} holds with probability at least $1-\frac{4He^{-\varepsilon^2S/(8H^2)}}{\varepsilon}$, and for each combination of $(\mu,k,v_k,v'_k)$ separately \cref{bic-ic-concentrate} holds with probability at least $1-\frac{4He^{-\varepsilon^2S/(8H^2)}}{\varepsilon}$.

Choosing $S$ so that each of these probabilities is at least $1-\frac{\delta}{(|M|+1)\cdot n\cdot\lceil\nicefrac{H}{\varepsilon}\rceil^{2m}}$, we obtain that both \cref{bic-concentrate} holds simultaneously for all mechanisms $\mu\in M \cup \{\OPT_{\varepsilon}\}$ and \cref{bic-ic-concentrate} holds simultaneously for all combinations $(\mu,k,v_k,v'_k)$ with probability at least~$1\!-\!\delta$. We now estimate~$S$. Since $|M\cup\{OPT_{\varepsilon}\}|\leq(S+1)^{n\cdot m\cdot\lceil\nicefrac{H}{\varepsilon}\rceil}$, we have that it is enough to take $S$ such that
\[
S\ge\tfrac{8H^2}{\varepsilon^2}\cdot\bigl(\log\tfrac{4H}{\varepsilon}+\log\nicefrac{1}{\delta}+\log n+2m\log\lceil\nicefrac{H}{\varepsilon}\rceil+nm\lceil\nicefrac{H}{\varepsilon}\rceil\log(S+1)\bigr).
\]
Therefore,\footnote{\label{recursion-solution}The requirement is of the form $S\ge a+b\log S$, so the tight solution is of order $a+b\log b$.} there exists an appropriate
\[
S=O\left(\tfrac{H^2}{\varepsilon^2}\log\nicefrac{1}{\delta}+\tfrac{nmH^3}{\varepsilon^3}\log\tfrac{nmH^3}{\varepsilon^3}\right)=\tilde{O}\left(\tfrac{H^2}{\varepsilon^2}\cdot\bigl(\log\nicefrac{1}{\delta}+\tfrac{nmH}{\varepsilon}\bigr)\right).
\]

Let $\mu$ be the output of the call to $\OptOrcBIC$ in \cref{bic-alg}, and let $\mu^{\varepsilon}$ be the final output of the \lcnamecref{bic-alg} (the output of $\EmpOptBIC$).

\begin{lemma}\label{bic-mu-opteps}
If \cref{bic-ic-concentrate} holds for every mechanism in $M \cup \{\OPT_{\varepsilon}\}$ and for every $(k,v_k,v'_k)$, then:
\begin{itemize}
\item
$\OPT_{\varepsilon}$ is a $(8mL\varepsilon)$-BIC mechanism for $\bigtimes_{i,j}W_{i,j}$.
\item
The expected revenue of $\mu$ from $\bigtimes_{i,j}W_{i,j}$ is at most $4nmL\sqrt{2H\varepsilon}$ smaller than the expected revenue of $\OPT_{\varepsilon}$ from $\bigtimes_{i,j}W_{i,j}$.
\item
$\mu$ is a $(2mL\varepsilon)$-BIC mechanism for $\bigtimes_{i,j}\epsfloor{V_{i,j}}$.
\end{itemize}
\end{lemma}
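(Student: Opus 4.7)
My plan is to derive all three bullets from a single observation about the hypothesis \cref{bic-ic-concentrate}: since it bounds the discrepancy between interim utilities under $\bigtimes_{i,j} W_{i,j}$ and under $\bigtimes_{i,j}\epsfloor{V_{i,j}}$ by $m\varepsilon$, any $\alpha$-BIC guarantee under one of these distributions automatically transfers to an $(\alpha+2m\varepsilon)$-BIC guarantee under the other (one $m\varepsilon$ loss for the truthful side and one for the misreport side of the incentive inequality). The revenue comparison in the second bullet will then follow from \cref{eps-bic-bic}.

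I would first address the first and third bullets by two instantiations of this transfer principle. For the first bullet, I would take the $(5m\varepsilon)$-BIC guarantee of $\OPT_{\varepsilon}$ for $\bigtimes_{i,j}\epsfloor{V_{i,j}}$ provided by \cref{bic-opteps-opt} and apply \cref{bic-ic-concentrate} on each side of the incentive inequality to obtain the $(7m\varepsilon)$-BIC property for $\bigtimes_{i,j} W_{i,j}$. For the third bullet, I would apply the same transfer to $\mu$, which is exactly BIC for $\bigtimes_{i,j} W_{i,j}$ by construction of the optimization oracle---on types in $\bigtimes_j \supp W_{k,j}$ this is immediate, and the extension in \cref{bic-alg} preserves BIC globally on $[0,H]_{\varepsilon}^{n\cdot m}$ because any out-of-support report is routed to the in-support profile maximizing the bidder's utility, which by BIC on the support cannot exceed the utility from truthful reporting---starting from $\alpha=0$ to obtain $(2m\varepsilon)$-BIC for $\bigtimes_{i,j}\epsfloor{V_{i,j}}$.

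For the second bullet, I would combine the first bullet with \cref{eps-bic-bic}. By construction $\mu$ maximizes expected revenue from $\bigtimes_{i,j} W_{i,j}$ among IR and BIC mechanisms, while $\OPT_{\varepsilon}$ is IR (a per-profile property, hence distribution-independent) and, by the first bullet, is $(7m\varepsilon)$-BIC for $\bigtimes_{i,j} W_{i,j}$. Since the $W_{i,j}$ are independent across bidders, \cref{eps-bic-bic} applied with $7m\varepsilon$ in place of $\varepsilon$ gives that the optimal BIC revenue for $\bigtimes_{i,j} W_{i,j}$ is at most $2n\sqrt{mLH\cdot 7m\varepsilon}=2nm\sqrt{7LH\varepsilon}$ below the optimal $(7m\varepsilon)$-BIC revenue, and $\OPT_{\varepsilon}$ belongs to the latter class.

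The main obstacle I anticipate is conceptual rather than technical: BIC (unlike DSIC or IR) is distribution-dependent, so one must be meticulous about which mechanism is being evaluated against which of $\bigtimes_{i,j} W_{i,j}$ and $\bigtimes_{i,j}\epsfloor{V_{i,j}}$, and about how the BIC-class of $\OPT_{\varepsilon}$ shifts when the underlying distribution changes. Once interim utilities are uniformly controlled via \cref{bic-ic-concentrate}, however, each of the three claims reduces to a short chain of inequalities.
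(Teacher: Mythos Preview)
Your proposal is correct and follows essentially the same approach as the paper: for the first and third bullets you apply \cref{bic-ic-concentrate} once to each side of the incentive inequality to transfer an $\alpha$-BIC guarantee between $\bigtimes_{i,j}\epsfloor{V_{i,j}}$ and $\bigtimes_{i,j}W_{i,j}$ at a cost of $2m\varepsilon$ (starting from $\alpha=5m\varepsilon$ for $\OPT_\varepsilon$ via \cref{bic-opteps-opt}, and from $\alpha=0$ for $\mu$ by construction), and for the second bullet you invoke \cref{eps-bic-bic} with $7m\varepsilon$ in place of $\varepsilon$ exactly as the paper does. Your explicit discussion of why the oracle's extension of $\mu$ to out-of-support types preserves BIC on all of $[0,H]_\varepsilon^{n\cdot m}$ is a point the paper leaves implicit.
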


\begin{lemma}\label{bic-mueps-mu}
$\mu^{\varepsilon}$ is an IR mechanism whose expected revenue from $\bigtimes_{i,j}V_{i,j}$ is $nmL\varepsilon$ smaller than the expected revenue of $\mu$ from $\bigtimes_{i,j}\epsfloor{V_{i,j}}$. Furthermore, if $\mu$ is $(2mL\varepsilon)$-BIC for $\bigtimes_{i,j}\epsfloor{V_{i,j}}$, then $\mu^{\varepsilon}$ is $(4mL\varepsilon)$-BIC for $\bigtimes_{i,j}V_{i,j}$.
\end{lemma}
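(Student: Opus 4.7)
The plan is to verify each of the three claims in turn, using two ingredients: the construction-time identity $\mu^{\varepsilon}(v) = \mu(\epsfloor{v})$, and the Lipschitz assumption of the model, which (since $\epsfloor{v_k}$ differs from $v_k$ by at most $\varepsilon$ in each of its $m$ coordinates) yields the pointwise bound $\bigl|u_k(v_k, (x,p)) - u_k(\epsfloor{v_k}, (x,p))\bigr| \le mL\varepsilon$ for every priced outcome $(x,p)$.

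The revenue claim is essentially immediate: the payments collected by $\mu^{\varepsilon}$ on an input $v$ equal those collected by $\mu$ on $\epsfloor{v}$, and the pushforward of the independent product $\bigtimes_{i,j}V_{i,j}$ under coordinate-wise rounding is exactly $\bigtimes_{i,j}\epsfloor{V_{i,j}}$; taking expectations gives the equality. For IR, fix any profile $v$ and bidder $k$ and note that $\mathbb{E}[u_k(v_k, \mu^{\varepsilon}(v))] = \mathbb{E}[v_k(x) - p_k]$, where $(x,p) \sim \mu(\epsfloor{v})$. The IR property of $\mu$ gives $\mathbb{E}[\epsfloor{v_k}(x) - p_k] \ge 0$, and in the additive setting (and more generally whenever each valuation is monotone in its value parameters) $v_k(x) \ge \epsfloor{v_k}(x)$ holds pointwise since allocations are non-negative and rounding only decreases each coordinate, so IR of $\mu^{\varepsilon}$ follows.

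For the $(4m\varepsilon)$-BIC claim, fix bidder $k$ and types $v_k, v'_k \in [0,H]^m$, and observe that the interim utility of truthfully reporting $v_k$ under $\mu^{\varepsilon}$ rewrites as $\mathbb{E}_{w_{-k}}\bigl[u_k\bigl(v_k, \mu(\epsfloor{v_k}, w_{-k})\bigr)\bigr]$, where $w_{-k} \sim \bigtimes_{i\neq k, j}\epsfloor{V_{i,j}}$ by independence of the $v_{i,j}$'s (and analogously for the misreport $v'_k$). I would then chain three inequalities: Lipschitz to replace the outer $v_k$ by $\epsfloor{v_k}$ in the utility evaluation (losing $mL\varepsilon$), the hypothesized $(2m\varepsilon)$-BIC of $\mu$ on $\bigtimes_{i,j}\epsfloor{V_{i,j}}$ to swap the report $\epsfloor{v_k}$ for $\epsfloor{v'_k}$ (losing $2m\varepsilon$), and Lipschitz once more to restore the outer $v_k$ (losing another $mL\varepsilon$), for a total of $2m\varepsilon + 2mL\varepsilon = 4m\varepsilon$ when $L=1$. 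The main step requiring care is this chain: each Lipschitz bound must be applied to interim utilities under the rounded-true distribution, which goes through automatically because the bound is pointwise in $w_{-k}$ and therefore survives the expectation. The whole lemma is best viewed as the observation that precomposing a BIC mechanism defined on rounded inputs with coordinate-wise rounding preserves IR and BIC up to the Lipschitz slack incurred by the rounding.
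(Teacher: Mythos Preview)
Your proposal is correct and follows essentially the same route as the paper's proof: the revenue identity via the pushforward of $\bigtimes_{i,j}V_{i,j}$ under rounding, IR by comparing $u_k(v_k,\cdot)$ to $u_k(\epsfloor{v_k},\cdot)$ and invoking IR of $\mu$, and the $(4m\varepsilon)$-BIC claim via the three-step chain (Lipschitz, $(2m\varepsilon)$-BIC of $\mu$, Lipschitz) that the paper writes out line by line. Your explicit remark that the IR step relies on monotonicity of $v_k(\cdot)$ in its parameters is a point the paper leaves implicit; otherwise the arguments coincide (with the paper, like you, taking $L=1$ when tallying the $4m\varepsilon$).
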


So, with probability at least $1\!-\!\delta$, we have both that $\mu^{\varepsilon}$ is an IR and $(4mL\varepsilon)$-BIC mechanism (by \cref{bic-mu-opteps,bic-mueps-mu}) and that:
\begin{align*}
&\Rev_{\bigtimes_{i,j}V_{i,j}}(\mu^{\varepsilon})=\tag{by \cref{bic-mueps-mu}}\\
=&\Rev_{\bigtimes_{i,j}\epsfloor{V_{i,j}}}(\mu)-nmL\varepsilon\ge\tag{by \cref{bic-concentrate} and since $\mu\in M$}\\
\ge&\Rev_{\bigtimes_{i,j}W_{i,j}}(\mu)-2nmL\varepsilon\tag{by \cref{bic-mu-opteps}}\ge\\
\ge&\Rev_{\bigtimes_{i,j}W_{i,j}}(\OPT_{\varepsilon})-4nmL\sqrt{2H\varepsilon}-2nmL\varepsilon\tag{by \cref{bic-concentrate} for $\OPT_{\varepsilon}$}\ge\\
\ge&\Rev_{\bigtimes_{i,j}\epsfloor{V_{i,j}}}(\OPT_{\varepsilon})-3nmL\varepsilon-4nmL\sqrt{2H\varepsilon}=\tag{by \cref{bic-opteps-opt}}\\
=&\Rev_{\bigtimes_{i,j}V_{i,j}}(\OPT)-4nmL\varepsilon-4nmL\sqrt{2H\varepsilon}=\\
=&\Rev_{\bigtimes_{i,j}V_{i,j}}(\OPT)-4nmL(\varepsilon+\sqrt{2H\varepsilon}).\tag*{\qedhere}
\end{align*}
\end{proof}

\section{From Approximate to Exact Incentive Compatibility}\label{exact}

In this \lcnamecref{exact}, we derive sample complexity results for exact incentive compatibility for the special cases of a single bidder (\cref{polynomial-single-bidder}) or a single good / single-parameter setting (\cref{polynomial-single-parameter}). As mentioned in the introduction, whether this can also be done for more general settings remains an open question.

\subsection{One Bidder}\label{exact-one-bidder}

In this \lcnamecref{exact}, we will prove \cref{polynomial-single-bidder}.
For a single bidder, the following \lcnamecref{one-bidder-nudge}, which to the best of our knowledge first implicitly appeared in \citet{bbhm05}, where it is attributed to Nisan,\footnote{%
It appears there and in following papers \citep{chk07,hn13,dhn14,bgn17,g18} as part of a two-step reduction sometimes called ``nudge and round'' (this is the ``nudge'' part), which reduces the menu size of a single-bidder auction with negligible revenue loss. To the best of our knowledge, the first reference to this argument as a general $\varepsilon$-IC to IC reduction rather than as part of a ``nudge and round'' operation (where it fixes IC issues resulting from rounding) is in \citet{dw12}, who also attribute it to Nisan following \citet{chk07}, who in turn attribute it to Nisan following \citet{bbhm05}. A similar technique appears in \citet{mp17} within the context of correcting for model misspecification. In this context, our rounding of the empirical distributions can be viewed as a deliberately introduced model misspecification that on the one hand novelly serves as a tool against overfitting and on the other hand is carefully controlled so that its adverse effect is limited to at most an $\varepsilon$ loss in incentive compatibility. In the one bidder case, after guaranteeing that no overfitting occurs, we correct for this loss in incentive compatibility using this technique as in the model misspecification literature.} provides an $\varepsilon$-IC to IC reduction with negligible revenue loss.

\begin{theorem}[Nisan, circa 2005]\label{one-bidder-nudge}
Let $\mu$ be an IR and $\varepsilon$-IC\footnote{Recall once again that for a single bidder, the notions of BIC and DSIC coincide.} mechanism for a single bidder. Modifying each possible priced outcome by multiplying the payment in that priced outcome by~$1\!-\!\sqrt{\varepsilon}$ and letting the bidder choose the (modified) priced outcome that maximizes her utility yields an IR and IC mechanism $\mu'$ with expected revenue at least $(1\!-\!\sqrt{\varepsilon})\cdot(\Rev(\mu)-\sqrt{\varepsilon})$.
\end{theorem}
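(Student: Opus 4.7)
The plan is to analyze $\mu'$ as a \emph{menu} mechanism. Define $\mathcal{M}'$ to be the collection of all priced outcomes $(x,(1-\sqrt{\varepsilon})p)$ where $(x,p)$ arises as a possible output of $\mu$ on some input; the mechanism $\mu'$ simply posts $\mathcal{M}'$ and lets the bidder pick any option maximizing her utility. IC of $\mu'$ is then automatic, since the menu offered does not depend on the reported type. IR holds because $\mathcal{M}'$ always contains the scaled version of the outcome $\mu(v)$ itself; since payments are nonnegative and $1-\sqrt{\varepsilon}\in(0,1)$, scaling the payment down only increases the bidder's utility relative to the nonnegative quantity she obtained under $\mu$, so her optimal menu choice has nonnegative utility as well.

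The substantive content is the revenue bound. Fix any type $v$, let $(x^*,p^*)=\mu(v)$, and let $(x',p')$ be the (unscaled) priced outcome whose scaled version the bidder of type $v$ picks from $\mathcal{M}'$; note that $(x',p')$ is the output of $\mu$ at some type $v''$. Two inequalities are then available: by the bidder's optimal choice in $\mu'$,
\[v(x')-(1-\sqrt{\varepsilon})p'\;\ge\;v(x^*)-(1-\sqrt{\varepsilon})p^*,\]
and by the $\varepsilon$-IC of $\mu$ applied to the deviation $v\mapsto v''$,
\[v(x^*)-p^*\;\ge\;v(x')-p'-\varepsilon.\]
Adding these causes the valuation terms $v(x^*),v(x')$ to cancel and yields $\sqrt{\varepsilon}(p'-p^*)\ge-\varepsilon$, i.e.\ $p'\ge p^*-\sqrt{\varepsilon}$. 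The bidder therefore pays $(1-\sqrt{\varepsilon})p'\ge(1-\sqrt{\varepsilon})(p^*-\sqrt{\varepsilon})$ under $\mu'$; taking expectation over $v$ delivers $\Rev(\mu')\ge(1-\sqrt{\varepsilon})(\Rev(\mu)-\sqrt{\varepsilon})$.

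The argument is essentially a purely algebraic manipulation of two inequalities, so I do not anticipate a substantive obstacle; the only point worth noting is that the nudge factor $1-\sqrt{\varepsilon}$ is precisely tuned so that the $\varepsilon$ slack in IC and the $\sqrt{\varepsilon}$ discount on the payment balance one another to yield an overall revenue loss of order $\sqrt{\varepsilon}$ rather than $\varepsilon$. Extending to randomized $\mu$ is routine: one applies the same two inequalities in expectation over the randomness of $\mu$ (replacing $v(x^*),p^*,v(x'),p'$ by their expected counterparts under the respective output distributions) and uses linearity of expectation.
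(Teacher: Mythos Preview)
Your proposal is correct and follows essentially the same approach as the paper: both arguments fix a type $v$, consider the truthful outcome $(x^*,p^*)$ of $\mu$ together with the option $(x',p')$ that the bidder actually selects in $\mu'$, and combine the bidder's optimality condition in $\mu'$ with the $\varepsilon$-IC inequality of $\mu$ to conclude $p'\ge p^*-\sqrt{\varepsilon}$. The only cosmetic difference is that the paper argues by contraposition (assuming $p_{f'}<(1-\sqrt{\varepsilon})(p_e-\sqrt{\varepsilon})$ and showing $f'$ is then dominated by the scaled truthful option), whereas you add the two inequalities directly; the underlying algebra and the role of the $\sqrt{\varepsilon}$ discount are identical.
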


For completeness, we provide a proof of this \lcnamecref{one-bidder-nudge}.
The idea is that discounting more expensive priced outcomes more heavily makes sure that incentives do not drive the bidder toward a much cheaper priced outcome. More concretely, due to the auction being only $\varepsilon$-IC, the utility of a bidder from choosing a cheaper priced outcome can be higher by at most~$\varepsilon$. Since for any priced outcome whose price is cheaper by more than a $\sqrt{\varepsilon}$ compared to the bidder's original priced outcome, the given discount is smaller by more than $\sqrt{\varepsilon}^2=\varepsilon$, this smaller discount more than eliminates any potential utility gain due to choosing the cheaper priced outcome, so such a cheaper priced outcome would not become the most-preferred one.

\begin{proof}
Fix a type $v\in[0,H]^m$ for the bidder. Let $e$ be the priced outcome (a distribution over priced outcomes, i.e., a random variable, if $\mu$ is randomized) according to $\mu$ when the bidder has type $v$. It is enough to show that the bidder pays at least $(1\!-\!\sqrt{\varepsilon})(p_e-\sqrt{\varepsilon})$ in expectation according to $\mu'$ when he has type $v$. (We denote the expected price of, e.g., $e$ by $p_e$.) Let $f'$ be a possible priced outcome of $\mu'$, and let $f$ be the priced outcome of $\mu$ that corresponds to it. We will show that if $p_{f'}<(1\!-\!\sqrt{\varepsilon})(p_e-\sqrt{\varepsilon})$, then the bidder strictly prefers the priced outcome $e'$ of $\mu'$ that corresponds to $e$ over~$f'$ (and so does not choose $f'$ in $\mu'$, completing the proof). Indeed, since in this case $p_f<p_e-\sqrt{\varepsilon}$, we have that
\begin{multline*}
\expect{}{u(v,e')}=
\expect{}{u(v,e)}+\sqrt{\varepsilon}\cdot p_e \ge
\expect{}{u(v,f)}-\varepsilon+\sqrt{\varepsilon}\cdot p_e=
\expect{}{u(v,f')}-\sqrt{\varepsilon}\cdot p_f-\varepsilon+\sqrt{\varepsilon}\cdot p_e=\\=
\expect{}{u(v,f')}-\varepsilon+\sqrt{\varepsilon}\cdot (p_e-p_f)>
\expect{}{u(v,f')}-\varepsilon+\sqrt{\varepsilon}\cdot \sqrt{\varepsilon}=
\expect{}{u(v,f')},
\end{multline*}
as required.
\end{proof}

\noindent
Applying \cref{one-bidder-nudge} to the auction output by\footnote{Or, for a somewhat simpler analysis, to the auction output by \cref{dsic-alg} in \cref{dsic}. For readers interested in precise polynomial dependencies, in this case the allowed revenue deviation in \cref{dsic-concentrate} could also be loosened to $\sqrt{mL\varepsilon}$ to save on samples, since the revenue loss from running the reduction of \cref{one-bidder-nudge} would be far greater anyway.} \cref{bic-alg} yields \cref{polynomial-single-bidder}.

\subsection{Single-Parameter Settings}

In this \lcnamecref{exact}, we will prove \cref{polynomial-single-parameter}.
The algorithm presented in \cref{bic} constitutes a black-box reductions from $\varepsilon$-BIC revenue maximization from samples to BIC revenue maximization from given distributions. As noted in the introduction, the latter are mostly unsolved for more than one good. For a single good, however, the problem of DSIC/BIC revenue maximization was completely resolved by the seminal work of \citet{m81} (who, in particular, showed that the optimal BIC mechanism is DSIC), and the computation complexity of the solution for discrete distributions was shown by \citet{e07} to be polynomial.

\begin{definition}[Myersonian Auction, \citealp{m81}]
An $n$-bidder \emph{Myersonian auction} (for valuations in $[0,H]$) is a tuple $(\phi_i)_{i\in N}$, where for every $i\in[n]$,\ \ $\phi_i:[0,H]\rightarrow\mathbb{R}$ is a nondecreasing function called the \emph{ironed virtual valuation} of bidder $i$.
The chosen outcome is $x\in X$ that maximizes $\sum_{i\in[n]}x_i\cdot\phi_i(v_i)$, with ties broken in a consistent manner. The payment is defined by the payment identity of \citet{m81}, which guarantees that the auction is IR and DSIC.
\end{definition}

\begin{theorem}[\citealp{m81}]
For every product distribution $\bigtimes_{i=1}^nW_i$, there exists a (DSIC) Myersonian auction $(\phi_i)_{i=1}^n$ that attains maximum revenue from $\bigtimes_{i=1}^nW_i$ among all IR and BIC auctions. Moreover, for every $i\in[n]$, the ironed virtual valuation $\phi_i$ depends only on~$W_i$.
\end{theorem}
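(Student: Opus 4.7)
The plan is to follow Myerson's classical approach via interim allocation rules, virtual valuations, and ironing. First I would invoke the revelation principle to restrict attention to direct BIC mechanisms $(x,p)$ and pass to the interim quantities $\bar{x}_i(v_i) \eqdef \expect{v_{-i}}{x_i(v_i,v_{-i})}$ and $\bar{p}_i(v_i) \eqdef \expect{v_{-i}}{p_i(v_i,v_{-i})}$. A standard envelope / first-order argument shows that BIC together with IR is equivalent to each $\bar{x}_i$ being non-decreasing in $v_i$, combined with Myerson's payment identity $\bar{p}_i(v_i) = v_i \bar{x}_i(v_i) - \int_0^{v_i} \bar{x}_i(t)\,dt$, with $\bar{p}_i(0)=0$ the revenue-maximizing IR boundary.

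Substituting the payment identity into expected revenue and applying integration by parts (or summation by parts for discrete $W_i$) yields
\[
\expect{v\sim\bigtimes_i W_i}{\sum_{i\in[n]} p_i(v)} \;=\; \expect{v\sim\bigtimes_i W_i}{\sum_{i\in[n]} \psi_i(v_i)\, x_i(v)},
\]
where $\psi_i$ is the \emph{raw} virtual valuation of $W_i$ (e.g.\ $\psi_i(v_i) = v_i - (1-F_i(v_i))/f_i(v_i)$ in the absolutely continuous case). The problem thus reduces to: over all monotone interim allocation rules with $x(v)\in X$, maximize expected virtual surplus. If each $\psi_i$ is already non-decreasing (the regular case), pointwise maximization of $\sum_i \psi_i(v_i)\,x_i$ over $X$ with a consistent tie-breaker gives a monotone allocation; pairing it with Myerson payments yields an IR and DSIC Myersonian auction attaining this upper bound, with $\phi_i \eqdef \psi_i$ depending only on $W_i$.

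The main obstacle, as usual, is the non-regular case, where pointwise maximization of raw virtual surplus need not be monotone in each $v_i$. I would resolve it via the standard ironing construction: define $G_i(q) \eqdef \int_0^q \psi_i(F_i^{-1}(s))\,ds$, let $\bar{G}_i$ be its concave envelope on $[0,1]$, and set $\phi_i(v_i) \eqdef \bar{G}_i'(F_i(v_i))$; concavity of $\bar{G}_i$ makes each $\phi_i$ non-decreasing. An Abel-summation / exchange argument then shows that, restricted to monotone allocations, expected ironed virtual surplus dominates expected raw virtual surplus, with equality whenever the allocation is constant on each flat interval of $\bar{G}_i$ (i.e.\ exactly where $\phi_i$ and $\psi_i$ differ). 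Pointwise maximization of $\sum_i \phi_i(v_i)\, x_i$ over $X$ is both monotone (since each $\phi_i$ is) and automatically constant on those flat intervals, so the induced Myersonian auction is IR, DSIC, attains this upper bound, and is therefore optimal among all BIC mechanisms. Since each $\phi_i$ is built entirely from $W_i$, the ``moreover'' clause follows immediately.
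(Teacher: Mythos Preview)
The paper does not prove this theorem; it is stated as a citation of \cite{m81} and used as a black box. So there is no ``paper's own proof'' to compare against.

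Your proposal is the standard Myerson argument and is essentially correct as a proof sketch of the cited result. One point you might make explicit, since the paper allows an arbitrary $X\subseteq[0,1]^n$ rather than just single-item or matroid constraints: the claim that pointwise maximization of $\sum_i \phi_i(v_i)\,x_i$ over $X$ yields an allocation monotone in each $v_i$ is not entirely automatic from ``each $\phi_i$ is non-decreasing.'' It follows from the linearity of the objective via the usual two-point exchange: if $x^*$ maximizes at coefficient vector $c$ and $x^{**}$ at $c'$ with $c'_i>c_i$ and $c'_j=c_j$ otherwise, adding the two optimality inequalities gives $(c_i-c'_i)(x^*_i-x^{**}_i)\ge 0$, hence $x^{**}_i\ge x^*_i$. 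With a consistent tie-breaking rule this upgrades to a well-defined monotone rule, which is what makes the Myersonian auction DSIC for general $X\subseteq[0,1]^n$. Everything else in your outline (payment identity, virtual-surplus reformulation, ironing via the concave envelope, and the observation that $\phi_i$ is constructed solely from $W_i$) is the classical route and is fine.
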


\begin{theorem}[\citealp{e07}]\label{one-good-efficient}
Let $S\in\mathbb{N}$. There exists an algorithm that runs in time $\poly(S)$, such that given
a discrete distribution $W$ with support of size at most $S$, outputs a nondecreasing function $\phi:\supp W\rightarrow\mathbb{R}$, such that for every product $\bigtimes_{i=1}^nW_i$ of discrete distributions each having support of size at most $S$, the (DSIC) Myersonian auction $(\phi_i)_{i\in N}$ (where $\phi_i$ is the output of the algorithm given $W_i$) maximizes the expected revenue from $\bigtimes_{i=1}^nW_i$ among all IR and BIC auctions.
\end{theorem}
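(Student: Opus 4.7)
The plan is to construct an explicit algorithm that implements Myerson's ironing procedure in the discrete setting, and then to appeal to the Myerson theorem cited immediately above the statement for optimality. Given a discrete distribution $W$ with support $\{v_1 < v_2 < \ldots < v_S\}$ and probability masses $p_k = \Pr_{v\sim W}[v=v_k]$, the algorithm first computes the upper-tail quantiles $Q_k = \sum_{j\geq k} p_j$ and the posted-price revenues $R_k = v_k \cdot Q_k$, then regards $\{(0,0)\} \cup \{(Q_k,R_k) : k \in [S]\}$ as $S+1$ points in $\mathbb{R}^2$, and computes their upper concave envelope using any standard $O(S\log S)$ convex-hull routine (e.g.\ Andrew's monotone chain). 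It finally defines $\phi(v_k)$ to be the slope of the segment of this envelope immediately to the left of quantile $Q_k$, with a fixed tie-breaking convention at breakpoints.

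The easy structural properties I would verify first: $\phi$ is nondecreasing in $v_k$, because the concave envelope has monotonically nonincreasing slopes in quantile and higher values $v_k$ correspond to smaller tail-quantiles $Q_k$; $\phi$ depends only on $W$, by construction; and the total running time is $\poly(S)$, since every step is either sorting, $O(S)$ bookkeeping, or an $O(S\log S)$ convex-hull computation.

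The substantive step is to argue optimality of the resulting Myersonian auction. By Myerson's theorem (cited in the excerpt just above the statement), the revenue-optimal IR and BIC auction for any product distribution $\bigtimes_{i=1}^n W_i$ is a Myersonian auction whose ironed virtual valuation functions depend only on the respective marginals. It therefore suffices to exhibit that the $\phi$ produced by our algorithm coincides with Myerson's ironed virtual valuation for $W$. A direct computation shows that the discrete \emph{unironed} virtual value at $v_k$ equals the slope of the raw revenue curve between consecutive points $(Q_{k+1},R_{k+1})$ and $(Q_k,R_k)$; Myerson's ironing replaces the raw revenue curve by its concave envelope in quantile space, and the ironed virtual value at $v_k$ is by definition the slope of that envelope at $Q_k$, which is exactly what the algorithm outputs. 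The main obstacle is the bookkeeping required to match the piecewise-linear envelope segments with Myerson's ironed intervals and to verify that $\phi$ is constant on each such interval (so that the ties among bidders sharing the same ironed virtual value can be broken consistently, as demanded by the Myersonian-auction definition). Once this identification is made, the optimality claim is immediate from Myerson's theorem, and the theorem follows.
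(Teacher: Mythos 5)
The paper does not actually prove this statement: it is imported, adapted to the paper's notation, from \cite{e07} (just as \cref{eps-bic-bic} and \cref{concentration} are imported), so there is no in-paper proof to compare against. Judged on its own, the algorithmic half of your proposal is the right construction and is unproblematic: the points $(0,0)$ and $(Q_k,R_k)$, the upper concave envelope in quantile space, and reading $\phi(v_k)$ off as the slope of the envelope segment covering $(Q_{k+1},Q_k]$ give a nondecreasing $\phi$ that depends only on $W$ and is computable in $\poly(S)$ time, and your identification of the raw segment slopes with the discrete unironed virtual values $v_k-(v_{k+1}-v_k)Q_{k+1}/p_k$ is correct.

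The gap is in the optimality step. The Myerson theorem quoted just above the statement is purely existential: it asserts that \emph{some} Myersonian auction $(\phi_i)_{i\in[n]}$, with each $\phi_i$ depending only on $W_i$, is revenue-optimal among IR and BIC auctions, but it does not identify those $\phi_i$. So the reduction ``it suffices to show my $\phi$ coincides with Myerson's ironed virtual valuation'' has no content until you say what the ironed virtual valuation of an atomic distribution \emph{is} and why the corresponding Myersonian auction is optimal; declaring it ``by definition the slope of the concave envelope'' and calling optimality ``immediate'' assumes exactly what is to be proved. What is needed is a discrete Myerson lemma: for finite-support marginals, every IR and BIC auction's expected revenue is at most its expected ironed virtual surplus computed with your envelope slopes, and an allocation rule that maximizes ironed virtual surplus pointwise with consistent tie-breaking (hence is constant across each ironed interval), coupled with the induced discrete payments, attains this bound. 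This does not transfer verbatim from the continuous case: with atoms the incentive constraints bind only at support points, the payment identity degrades to an inequality (payments are pinned down only up to the gaps between consecutive support values), and tightness of the bound for the constructed auction must be checked. Establishing these facts is precisely the content of \cite{e07} (alternatively one could perturb the atoms into small intervals and invoke the continuous theorem, but that limiting argument also has to be spelled out). With such a lemma supplied, the rest of your proposal goes through.
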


Plugging\footnote{Extending the mechanism $\mu$ returned by the optimizer in $\OptOrcBIC$ to be defined over all of $[0,H]_{\varepsilon}^{n\cdot m}$ as follows, though \citep[following][]{gn17}: for every $i\in[n]$ and $j\in[m]$ s.t.\ $v_{i,j}\notin\supp W_{i,j}$, replace $v_{i,j}$ with $\max\{w_{i,j}\in\supp W_{i,j} \mid w_{i,j}\le v_{i,j}\}$.} \cref{one-good-efficient} into \cref{bic-alg} brings us closer (by making the \lcnamecref{bic-alg} efficient) to proving \cref{polynomial-single-parameter}, however seems to still result in an $\varepsilon$-DSIC (rather than precisely DSIC) auction. Indeed, in the notation of \cref{bic}, while $\mu$ is DSIC, it seems that $\mu^{\varepsilon}$ is only $O(m\varepsilon)$-DSIC. To complete the proof of \cref{polynomial-single-parameter}, we note that the latter is in fact exactly DSIC in this case. Indeed, its allocation rule is monotone, and
it satisfies the payment identity of \citet{m81} for every bidder.

\section{Extensions}\label{extensions}

\subsection{On Computational Efficiency}

As mentioned above, it is currently not known how to efficiently implement the optimization oracle, outputting an up-to-$O(\varepsilon)$-optimal auction, as used in our algorithm. Nonetheless, there has been quite some work on efficiently finding auctions with weaker revenue guarantees with respect to optimal, such as guaranteeing some constant fraction of the optimal revenue. As the structured argument of our analysis provides a black-box reduction from BIC-revenue-maximization given a full distributions description to $\varepsilon$-BIC-revenue-maximization from samples, we can plug in any such algorithm into our analysis (in lieu of the optimization oracle) to obtain a learning algorithm with matching computational complexity and revenue degradation:

\begin{theorem}[``Meta Theorem'': Black-Box Reduction for Efficient Up-to-Constant Guarantees]
If there exists a polynomial-time algorithm for Bayesian revenue-maximization up to a constant factor $C$ given an explicitly specified finite product distribution, then for every $\varepsilon,\delta>0$ and for every $\eta\le\poly(n,m,L,H,\varepsilon)$, there exists a polynomial-time algorithm that given $\poly(n,m,L,H,\nicefrac{1}{\varepsilon},\nicefrac{1}{\eta},\log\nicefrac{1}{\delta})$ samples from each $V_{i,j}$, with probability at least $1\!-\!\delta$ outputs an IR and $\eta$-BIC auction that attains from $\bigtimes_{i,j} V_{i,j}$ expected revenue at most an additive~$\varepsilon$ smaller than a $C$ fraction of that of any IR and $\eta$-BIC mechanism.
\end{theorem}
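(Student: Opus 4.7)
The plan is to adapt the proof of \cref{polynomial-bic} almost verbatim, substituting the exact optimization oracle used in \cref{bic-alg} with the assumed polynomial-time $C$-approximation algorithm. Concretely, in \cref{bic-alg}, I would replace the call that returns an IR and BIC revenue-maximizing mechanism for $\bigtimes_{i,j}W_{i,j}$ with a call to the assumed algorithm, which outputs an IR and BIC mechanism for $\bigtimes_{i,j}W_{i,j}$ whose expected revenue from that distribution is at least a $1/C$ fraction of the optimal such revenue. All other steps---the rounding of samples in step~1 and the bid-rounding in the returned mechanism $\mu^{\varepsilon}$---remain unchanged; since the assumed algorithm runs in polynomial time, so does the full learning algorithm.

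Next I would revisit the uniform-convergence portion of the analysis. The set $\mathcal{V}$ of possible rounded empirical product distributions is unaffected by the change and still satisfies $|\mathcal{V}|\le(S+1)^{n\cdot m\cdot\lceil\nicefrac{H}{\varepsilon}\rceil}$. Correspondingly, the set $M$ of possible outputs of the (deterministic) approximation algorithm on such distributions still satisfies $|M|\le|\mathcal{V}|$, which crucially has $S$ only in the base and not in the exponent. Therefore, the concentration argument establishing \cref{bic-concentrate,bic-ic-concentrate} simultaneously over $M\cup\{\OPT_{\varepsilon}\}$ goes through with the same polynomial bound on $S$, and the structural results of \cref{bic-opteps-opt,bic-mu-opteps,bic-mueps-mu}---which only use that the returned $\mu$ is IR and (exactly) BIC for $\bigtimes_{i,j}W_{i,j}$, together with the uniform-convergence inequalities---apply unchanged.

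The only substantive modification is to the chain of inequalities bounding $\Rev_{\bigtimes_{i,j}V_{i,j}}(\mu^{\varepsilon})$: where the original proof invoked exact optimality of $\mu$ over all IR and BIC mechanisms for $\bigtimes_{i,j}W_{i,j}$, the approximation guarantee instead yields only $\Rev_{\bigtimes_{i,j}W_{i,j}}(\mu)\ge\tfrac{1}{C}\cdot\Rev_{\bigtimes_{i,j}W_{i,j}}(\mu^{\ast}_W)$, where $\mu^{\ast}_W$ is an IR and BIC revenue-optimal mechanism for $\bigtimes_{i,j}W_{i,j}$. Since \cref{bic-mu-opteps} asserts that $\OPT_{\varepsilon}$ is $(7m\varepsilon)$-BIC for $\bigtimes_{i,j}W_{i,j}$, invoking \cref{eps-bic-bic} lets me lower-bound $\Rev_{\bigtimes_{i,j}W_{i,j}}(\mu^{\ast}_W)$ by $\Rev_{\bigtimes_{i,j}W_{i,j}}(\OPT_{\varepsilon})-2nm\sqrt{7LH\varepsilon}$. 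Plugging this substitution into the original chain of inequalities yields
\[
\Rev_{\bigtimes_{i,j}V_{i,j}}(\mu^{\varepsilon})\ \ge\ \tfrac{1}{C}\cdot\Rev_{\bigtimes_{i,j}V_{i,j}}(\OPT)-\poly(n,m,L,H,C)\cdot\bigl(\varepsilon+\sqrt{\varepsilon}\bigr),
\]
and by choosing the internal rounding parameter to be a sufficiently small polynomial in the target additive error, the excess loss is absorbed into a single additive $\varepsilon$ in the final bound. The main obstacle I anticipate is essentially bookkeeping: it must be verified that the assumed approximation algorithm outputs a mechanism that is exactly (rather than merely approximately) BIC for $\bigtimes_{i,j}W_{i,j}$, as is standard for known constant-factor Bayesian revenue maximizers; should it instead be only approximately BIC, one carries an additional $O(\varepsilon)$ slack through \cref{bic-mu-opteps,bic-mueps-mu} and applies \cref{eps-bic-bic} one extra time, still closing out at the same order of additive loss.
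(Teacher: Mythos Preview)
Your proposal is correct and matches the paper's approach: the paper does not spell out a separate proof of this Meta Theorem but simply remarks that the structured argument of \cref{polynomial-bic} provides a black-box reduction, so one may ``plug in any such algorithm into our analysis (in lieu of the optimization oracle).'' Your proposal does exactly that, and your single substantive change---replacing the optimality step in the revenue chain by the $\tfrac{1}{C}$-approximation guarantee followed by \cref{eps-bic-bic}---is precisely the bookkeeping the paper leaves implicit.
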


\subsection{Partial Correlations}

In some settings, there could be partial correlations between the distributions of the values of each bidder for the various goods.
Our construction and analysis can also be modified to analyze such settings, to obtain sample bounds that are polynomial in the independent dimensions. To give a few examples:
\begin{itemize}
\item
If there are correlations across values of different goods for the same bidder, but different bidders' valuations are independent, then our upper bound for $|M|$ becomes $(S+1)^{n\cdot(\lceil\nicefrac{H}{\varepsilon}\rceil)^m}$, and so our analysis would yield sample complexity that is polynomial in the number of bidders (as our upper bound of $|M|$ in this case is still singly exponential in the number of bidders) and exponential in the number of goods (as our upper bound of $|M|$ in this case is doubly exponential in the number of goods).
\item
If for each bidder $i$ the values of every two goods $2j,2j\!+\!1\in[m]$ are correlated, but are independent of the values of these goods for any other bidder, and of the values of any other good for any bidder, then our upper bound for $|M|$ becomes $(S+1)^{n\cdot\nicefrac{m}{2}\cdot(\lceil\nicefrac{H}{\varepsilon}\rceil)^2}$, and so our analysis still yields sample complexity that is polynomial in the both the number of bidders and the number of goods (as our upper bound of $|M|$ in this case is still singly exponential in both parameters). This is an example for a weaker form of correlation for which our analysis can still yield sample complexity that is polynomial in all parameters.
\end{itemize}

\section*{Acknowledgments}
\begin{sloppypar}
Gonczarowski was supported in part by the Adams Fellowship Program of the Israel Academy of Sciences and Humanities.
His work was supported in part by ISF grant 1435/14 administered by the Israeli Academy of Sciences, by Israel-USA Bi-national Science Foundation (BSF) grant number 2014389, and
by the European Research Council (ERC) under the European Union's Horizon 2020 research and innovation programme (grant agreement No 740282).
Weinberg was supported by NSF CCF-1717899 and by NSF CAREER award CCF-1942497.
We thank Noam Nisan and Costis Daskalakis for helpful conversations, and thank anonymous referees for helpful feedback.
\end{sloppypar}

\bibliographystyle{abbrvnat}
\bibliography{sampling}

\begin{thebibliography}{45}
\providecommand{\natexlab}[1]{#1}
\providecommand{\url}[1]{\texttt{#1}}
\expandafter\ifx\csname urlstyle\endcsname\relax
  \providecommand{\doi}[1]{doi: #1}\else
  \providecommand{\doi}{doi: \begingroup \urlstyle{rm}\Url}\fi

\bibitem[Babaioff et~al.(2014)Babaioff, Immorlica, Lucier, and
  Weinberg]{bilw14}
M.~Babaioff, N.~Immorlica, B.~Lucier, and S.~M. Weinberg.
\newblock A simple and approximately optimal mechanism for an additive buyer.
\newblock In \emph{Proceedings of the 55th Annual IEEE Symposium on Foundations
  of Computer Science (FOCS)}, pages 21--30, 2014.

\bibitem[Babaioff et~al.(2017)Babaioff, Gonczarowski, and Nisan]{bgn17}
M.~Babaioff, Y.~A. Gonczarowski, and N.~Nisan.
\newblock The menu-size complexity of revenue approximation.
\newblock In \emph{Proceedings of the 49th Annual ACM Symposium on Theory of
  Computing (STOC)}, pages 869--877, 2017.

\bibitem[Babichenko et~al.(2017)Babichenko, Barman, and Peretz]{bbp17}
Y.~Babichenko, S.~Barman, and R.~Peretz.
\newblock Empirical distribution of equilibrium play and its testing
  application.
\newblock \emph{Mathematics of Operations Research}, 42\penalty0 (1):\penalty0
  15--29, 2017.
\newblock Preliminary version (``Simple approximate equilibria in large
  games'') in Proceedings of the 15th ACM Conference on Economics and
  Computation (EC), 2014.

\bibitem[Balcan et~al.(2005)Balcan, Blum, Hartline, and Mansour]{bbhm05}
M.-F. Balcan, A.~Blum, J.~D. Hartline, and Y.~Mansour.
\newblock Mechanism design via machine learning.
\newblock In \emph{Proceedings of the 46th Annual IEEE Symposium on the
  Foundations of Computer Science (FOCS)}, pages 426--439, 2005.
\newblock Full version: Technical Report CMU-CS-05-143.

\bibitem[Balcan et~al.(2016)Balcan, Sandholm, and Vitercik]{bsv16}
M.-F. Balcan, T.~Sandholm, and E.~Vitercik.
\newblock Sample complexity of automated mechanism design.
\newblock In \emph{Proceedings of the 30th Conference on Neural Information
  Processing Systems (NIPS)}, pages 2083--2091, 2016.

\bibitem[Balcan et~al.(2018)Balcan, Sandholm, and Vitercik]{bsv17}
M.-F. Balcan, T.~Sandholm, and E.~Vitercik.
\newblock A general theory of sample complexity for multi-item profit
  maximization.
\newblock In \emph{Proceedings of the 19th ACM Conference on Economics and
  Computation (EC)}, pages 173--174, 2018.

\bibitem[Bei and Huang(2011)]{bh11}
X.~Bei and Z.~Huang.
\newblock Bayesian incentive compatibility via fractional assignments.
\newblock In \emph{Proceedings of the 22nd Annual ACM-SIAM Symposium on
  Discrete Algorithms (SODA)}, pages 720--733, 2011.

\bibitem[Cai and Daskalakis(2017)]{cd17}
Y.~Cai and C.~Daskalakis.
\newblock Learning multi-item auctions with (or without) samples.
\newblock In \emph{Proceedings of the 58th IEEE Annual Symposium on Foundations
  of Computer Science (FOCS)}, pages 516--527, 2017.

\bibitem[Cai and Zhao(2017)]{cz17}
Y.~Cai and M.~Zhao.
\newblock Simple mechanisms for subadditive buyers via duality.
\newblock In \emph{Proceedings of the 49th Annual ACM Symposium on Theory of
  Computing (STOC)}, pages 170--183, 2017.

\bibitem[Cai et~al.(2012)Cai, Daskalakis, and Weinberg]{cdw12}
Y.~Cai, C.~Daskalakis, and S.~M. Weinberg.
\newblock Optimal multi-dimensional mechanism design: Reducing revenue to
  welfare maximization.
\newblock In \emph{Proceedings of the 53rd Annual IEEE Symposium on Foundations
  of Computer Science (FOCS)}, pages 130--139, 2012.

\bibitem[Cai et~al.(2016)Cai, Devanur, and Weinberg]{cdw16}
Y.~Cai, N.~Devanur, and S.~M. Weinberg.
\newblock A duality based unified approach to bayesian mechanism design.
\newblock In \emph{Proceedings of the 48th Annual ACM Symposium on Theory of
  Computing (STOC)}, pages 926--939, 2016.

\bibitem[Chawla and Miller(2016)]{cm16}
S.~Chawla and J.~B. Miller.
\newblock Mechanism design for subadditive agents via an ex ante relaxation.
\newblock In \emph{Proceedings of the 17th ACM Conference on Economics and
  Computation (EC)}, pages 579--596, 2016.

\bibitem[Chawla et~al.(2007)Chawla, Hartline, and Kleinberg]{chk07}
S.~Chawla, J.~D. Hartline, and R.~Kleinberg.
\newblock Algorithmic pricing via virtual valuations.
\newblock In \emph{Proceedings of the 8th ACM Conference on Electronic Commerce
  (EC)}, pages 243--251, 2007.

\bibitem[Chawla et~al.(2010)Chawla, Hartline, Malec, and Sivan]{chms10}
S.~Chawla, J.~D. Hartline, D.~L. Malec, and B.~Sivan.
\newblock Multi-parameter mechanism design and sequential posted pricing.
\newblock In \emph{Proceedings of the 42nd Annual ACM Symposium on Theory of
  Computing (STOC)}, pages 311--320, 2010.

\bibitem[Chawla et~al.(2015)Chawla, Malec, and Sivan]{cms15}
S.~Chawla, D.~L. Malec, and B.~Sivan.
\newblock The power of randomness in {B}ayesian optimal mechanism design.
\newblock \emph{Games and Economic Behavior}, 91:\penalty0 297--317, 2015.
\newblock Preliminary version in Proceedings of the 11th ACM Conference on
  Electronic Commerce (EC), 2010.

\bibitem[Cole and Roughgarden(2014)]{cr14}
R.~Cole and T.~Roughgarden.
\newblock The sample complexity of revenue maximization.
\newblock In \emph{Proceedings of the 46th Annual ACM Symposium on Theory of
  Computing (STOC)}, pages 243--252, 2014.

\bibitem[Daskalakis and Weinberg(2012)]{dw12}
C.~Daskalakis and S.~M. Weinberg.
\newblock Symmetries and optimal multi-dimensional mechanism design.
\newblock In \emph{Proceedings of the 13th ACM Conference on Electronic
  Commerce (EC)}, pages 370--387, 2012.

\bibitem[Daskalakis et~al.(2013)Daskalakis, Deckelbaum, and Tzamos]{ddt13}
C.~Daskalakis, A.~Deckelbaum, and C.~Tzamos.
\newblock Mechanism design via optimal transport.
\newblock In \emph{Proceedings of the 14th ACM Conference on Electronic
  Commerce (EC)}, pages 269--286, 2013.

\bibitem[Devanur et~al.(2016)Devanur, Huang, and Psomas]{dhp16}
N.~R. Devanur, Z.~Huang, and C.-A. Psomas.
\newblock The sample complexity of auctions with side information.
\newblock In \emph{Proceedings of the 48th Annual ACM Symposium on Theory of
  Computing (STOC)}, pages 426--439, 2016.

\bibitem[Dhangwatnotai et~al.(2015)Dhangwatnotai, Roughgarden, and Yan]{dry10}
P.~Dhangwatnotai, T.~Roughgarden, and Q.~Yan.
\newblock Revenue maximization with a single sample.
\newblock \emph{Games and Economic Behavior}, 91:\penalty0 318--333, 2015.

\bibitem[Dughmi et~al.(2014)Dughmi, Han, and Nisan]{dhn14}
S.~Dughmi, L.~Han, and N.~Nisan.
\newblock Sampling and representation complexity of revenue maximization.
\newblock In \emph{Proceedings of the 10th Conference on Web and Internet
  Economics (WINE)}, pages 277--291, 2014.

\bibitem[Dughmi et~al.(2017)Dughmi, Hartline, Kleinberg, and Niazadeh]{dhkn17}
S.~Dughmi, J.~D. Hartline, R.~Kleinberg, and R.~Niazadeh.
\newblock Bernoulli factories and black-box reductions in mechanism design.
\newblock In \emph{Proceedings of the 49th Annual ACM Symposium on Theory of
  Computing (STOC)}, pages 158--169, 2017.

\bibitem[Elkind(2007)]{e07}
E.~Elkind.
\newblock Designing and learning optimal finite support auctions.
\newblock In \emph{Proceedings of the 18th Annual ACM-SIAM Symposium on
  Discrete Algorithms (SODA)}, pages 736--745, 2007.

\bibitem[Gergatsouli et~al.(2019)Gergatsouli, Lucier, and
  Tzamos]{GergatsouliLT19}
E.~Gergatsouli, B.~Lucier, and C.~Tzamos.
\newblock The complexity of black-box mechanism design with priors.
\newblock In \emph{Proceedings of the 20th ACM Conference on Economics and
  Computation (EC)}, pages 869--883, 2019.

\bibitem[Gonczarowski(2018)]{g18}
Y.~A. Gonczarowski.
\newblock Bounding the menu-size of approximately optimal auctions via
  optimal-transport duality.
\newblock In \emph{Proceedings of the 50th Annual ACM Symposium on Theory of
  Computing (STOC)}, pages 123--131, 2018.

\bibitem[Gonczarowski and Nisan(2017)]{gn17}
Y.~A. Gonczarowski and N.~Nisan.
\newblock Efficient empirical revenue maximization in single-parameter auction
  environments.
\newblock In \emph{Proceedings of the 49th Annual ACM Symposium on Theory of
  Computing (STOC)}, pages 856--868, 2017.

\bibitem[Gonczarowski and Weinberg(2018)]{gw18}
Y.~A. Gonczarowski and S.~M. Weinberg.
\newblock The sample complexity of up-to-$\varepsilon$ multi-dimensional
  revenue maximization.
\newblock In \emph{Proceedings of the 59th Annual IEEE Symposium on Foundations
  of Computer Science (FOCS)}, pages 416--426, 2018.

\bibitem[Guo et~al.(2019)Guo, Huang, Tang, and Zhang]{GuoHTZ19}
C.~Guo, Z.~Huang, Z.~G. Tang, and X.~Zhang.
\newblock Generalizing complex hypotheses on product distributions: Auctions,
  prophet inequalities, and pandora's problem.
\newblock arXiv preprint arXiv:1911.11936, 2019.

\bibitem[Hart and Nisan(2012)]{hn12}
S.~Hart and N.~Nisan.
\newblock Approximate revenue maximization with multiple items.
\newblock In \emph{Proceedings of the 13th ACM Conference on Electronic
  Commerce (EC)}, page 656, 2012.

\bibitem[Hart and Nisan(2013)]{hn13}
S.~Hart and N.~Nisan.
\newblock The menu-size complexity of auctions.
\newblock In \emph{Proceedings of the 14th ACM Conference on Electronic
  Commerce (EC)}, page 565, 2013.

\bibitem[Hart and Reny(2015)]{hr15}
S.~Hart and P.~J. Reny.
\newblock Maximal revenue with multiple goods: nonmonotonicity and other
  observations.
\newblock \emph{Theoretical Economics}, 10\penalty0 (3):\penalty0 893--922,
  2015.

\bibitem[Hartline and Taggart(2019)]{ht16}
J.~Hartline and S.~Taggart.
\newblock Sample complexity for non-truthful mechanisms.
\newblock In \emph{Proceedings of the 20th ACM Conference on Economics and
  Computation (EC)}, pages 399--416, 2019.

\bibitem[Hartline et~al.(2011)Hartline, Kleinberg, and Malekian]{hkm11}
J.~D. Hartline, R.~Kleinberg, and A.~Malekian.
\newblock Bayesian incentive compatibility via matchings.
\newblock In \emph{Proceedings of the 22nd Annual ACM-SIAM Symposium on
  Discrete Algorithms (SODA)}, pages 734--747, 2011.

\bibitem[Huang et~al.(2015)Huang, Mansour, and Roughgarden]{hmr15}
Z.~Huang, Y.~Mansour, and T.~Roughgarden.
\newblock Making the most of your samples.
\newblock In \emph{Proceedings of the 16th ACM Conference on Economics and
  Computation (EC)}, pages 45--60, 2015.

\bibitem[Kothari et~al.(2019)Kothari, Mohan, Schvartzman, Singla, and
  Weinberg]{KothariMSSW19}
P.~Kothari, D.~Mohan, A.~Schvartzman, S.~Singla, and S.~M. Weinberg.
\newblock Approximation schemes for a buyer with independent items via
  symmetries.
\newblock In \emph{Proceedings of the 60th Annual IEEE Symposium on Foundations
  of Computer Science (FOCS)}, 2019.

\bibitem[Madar{\'a}sz and Prat(2017)]{mp17}
K.~Madar{\'a}sz and A.~Prat.
\newblock Sellers with misspecified models.
\newblock \emph{The Review of Economic Studies}, 84\penalty0 (2):\penalty0
  790--815, 2017.

\bibitem[Morgenstern and Roughgarden(2015)]{mr15}
J.~Morgenstern and T.~Roughgarden.
\newblock On the pseudo-dimension of nearly optimal auctions.
\newblock In \emph{Proceedings of the 29th Annual Conference on Neural
  Information Processing Systems (NIPS)}, pages 136--144, 2015.

\bibitem[Morgenstern and Roughgarden(2016)]{mr16}
J.~Morgenstern and T.~Roughgarden.
\newblock Learning simple auctions.
\newblock In \emph{Proceedings of the 29th Annual Conference on Learning Theory
  (COLT)}, pages 1298--1318, 2016.

\bibitem[Myerson(1981)]{m81}
R.~Myerson.
\newblock Optimal auction design.
\newblock \emph{Mathematics of Operations Research}, 6\penalty0 (1):\penalty0
  58--73, 1981.

\bibitem[Roughgarden and Schrijvers(2016)]{rs16}
T.~Roughgarden and O.~Schrijvers.
\newblock Ironing in the dark.
\newblock In \emph{Proceedings of the 17th ACM Conference on Economics and
  Computation (EC)}, pages 1--18, 2016.

\bibitem[Rubinstein and Weinberg(2015)]{rw15}
A.~Rubinstein and S.~M. Weinberg.
\newblock Simple mechanisms for a combinatorial buyer and applications to
  revenue monotonicity.
\newblock In \emph{Proceedings of the 16th ACM Conference on Economics and
  Computation (EC)}, pages 377--394, 2015.

\bibitem[Syrgkanis(2017)]{s17}
V.~Syrgkanis.
\newblock A sample complexity measure with applications to learning optimal
  auctions.
\newblock In \emph{Proceedings of the 31th Annual Conference on Neural
  Information Processing Systems (NIPS)}, pages 5352--5359, 2017.

\bibitem[Thanassoulis(2004)]{t04}
J.~Thanassoulis.
\newblock Haggling over substitutes.
\newblock \emph{Journal of Economic Theory}, 117\penalty0 (2):\penalty0
  217--245, 2004.

\bibitem[Yao(2015)]{y15}
A.~C.-C. Yao.
\newblock An $n$-to-$1$ bidder reduction for multi-item auctions and its
  applications.
\newblock In \emph{Proceedings of the 26th Annual ACM-SIAM Symposium on
  Discrete Algorithms (SODA)}, pages 92--109, 2015.

\bibitem[Yao(2017)]{y17}
A.~C.-C. Yao.
\newblock Dominant-strategy versus bayesian multi-item auctions: Maximum
  revenue determination and comparison.
\newblock In \emph{Proceedings of the 18th ACM Conference on Economics and
  Computation (EC)}, pages 3--20, 2017.

\end{thebibliography}

\appendix

\section{Dominant-Strategy Incentive Compatibility}\label{dsic}

In this \lcnamecref{dsic}, we demonstrate a somewhat less involved version of our analysis and prove the following result, which we find interesting in its own right --- a version of our main result for dominant strategy (rather than Bayesian) incentive compatibility. The proofs of supporting \lcnamecrefs{dsic-opteps-opt} are relegated to \cref{proofs}. We start by phrasing a weak technical assumption that this result additionally requires.

\begin{definition}[Weakly Downward Closed]
We say that the set of allowed allocations $X$ is \emph{weakly downward closed} if for every outcome $x\in X$ and for every $k\in [n]$, there exists~$x'\in X$ such that \ \ a)~$v_k(x')=v_k(x)$ for every $v_k=(v_{k,j})_{j\in[m]}$, and\ \ b)~$v_i(x')=0$ for every $i\ne k$ and $v_i=(v_{i,j})_{j\in[m]}$.\footnote{For example, for the multi-item setting this is satisfied if for every $x=(x_{i,j})_{i\in[n],j\in[m]}\in X$ and for every $i\in [n]$, we have that $(x_i,0_{-i})\in X$ as well.}
\end{definition}

\begin{theorem}[Main Result --- DSIC Variant]\label{polynomial-dsic}
If $X$ is weakly downward closed, then for every $\varepsilon,\delta>0$, the sample complexity of learning an IR and $\varepsilon$-DSIC auction whose revenue differs from that of the optimal IR and DSIC auction by less than an additive $\varepsilon$ is at most $\poly(n,m,L,H,\nicefrac{1}{\varepsilon},\log\nicefrac{1}{\delta})$. That is, there exists a deterministic algorithm\footnote{Once  again, recall that this result is information-theoretic and not computationally efficient (by necessity, without resolving major open problems), so our decision maker (seller) is computationally unbounded, and we allow the algorithm to make calls to any deterministic oracle that has no access to any $V_{i,j}$. In particular, we assume access to an oracle that can solve the revenue maximization problem on any precisely given $V'_{i,j}$ of finite support.} that given $\poly(n,m,L,H,\nicefrac{1}{\varepsilon},\log\nicefrac{1}{\delta})$ samples from each $V_{i,j}$, with probability $1\!-\!\delta$ outputs an IR and $\varepsilon$-DSIC auction that attains from $\bigtimes_{i,j} V_{i,j}$ expected revenue at most an additive~$\varepsilon$ smaller than any IR and DSIC auction.
\end{theorem}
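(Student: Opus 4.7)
The plan is to mirror the algorithm and analysis of \cref{polynomial-bic}, exploiting the key simplification that (approximate) DSIC is a distribution-agnostic property, so no interim-utility concentration will be needed. Given $S$ samples $(v_{i,j}^s)_{s=1}^S$ from each $V_{i,j}$, the algorithm computes $w_{i,j}^s\eqdef\epsfloor{v_{i,j}^s}$, forms the empirical distribution $W_{i,j}$ uniform over these, and invokes an oracle that returns an IR and $O(mL\varepsilon)$-DSIC mechanism $\mu$ on $\bigtimes_{i,j}\supp W_{i,j}$ maximizing revenue from $\bigtimes_{i,j}W_{i,j}$ among such mechanisms. Using the weakly-downward-closed hypothesis on $X$, $\mu$ is then extended to $[0,H]_\varepsilon^{n\cdot m}$ while preserving $O(m\varepsilon)$-DSIC, and the algorithm outputs $\mu^\varepsilon$, which on input $\vec{v}\in[0,H]^{n\cdot m}$ runs the extended $\mu$ on $\epsfloor{\vec{v}}$. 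Since DSIC is ex post in the other bidders' reports, the extended $\mu$ is automatically $O(m\varepsilon)$-DSIC under any distribution over the same support, obviating any analogue of the interim-utility concentration \cref{bic-ic-concentrate} from the BIC proof.

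I would import the counting and revenue-concentration steps from \cref{bic} essentially verbatim. The bound $|\mathcal{V}|\le(S+1)^{n\cdot m\cdot\lceil\nicefrac{H}{\varepsilon}\rceil}$ (with $S$ only in the base) still holds and passes to the set $M$ of possible oracle outputs; applying \cref{concentration} and union-bounding over $M\cup\{\OPT_\varepsilon\}$ then shows that, for an appropriate $S=\poly(n,m,L,H,\nicefrac{1}{\varepsilon},\log\nicefrac{1}{\delta})$ and with probability at least $1-\delta$, $|\Rev_{\bigtimes_{i,j}W_{i,j}}(\mu')-\Rev_{\bigtimes_{i,j}\epsfloor{V_{i,j}}}(\mu')|\le m\varepsilon$ holds simultaneously for every $\mu'\in M\cup\{\OPT_\varepsilon\}$. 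Here $\OPT_\varepsilon$ is the random-completion mechanism of \cref{bic}, built from the optimal IR and DSIC auction $\OPT$ for $\bigtimes_{i,j}V_{i,j}$ by drawing $v_{i,j}\sim V_{i,j}|_{[w_{i,j},w_{i,j}+\varepsilon)}$ on input $w$ and subtracting an $O(mL\varepsilon)$ term from each payment. An ex-post DSIC $+$ Lipschitz argument entirely parallel to \cref{bic-opteps-opt} shows that $\OPT_\varepsilon$ is $O(m\varepsilon)$-DSIC on $[0,H]_\varepsilon^{n\cdot m}$ and loses only $O(nmL\varepsilon)$ revenue relative to $\OPT$.

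Combining the uniform-convergence event with the optimality of $\mu$ among $O(m\varepsilon)$-DSIC mechanisms on $\bigtimes_{i,j}W_{i,j}$ (which contains $\OPT_\varepsilon$, by distribution-agnosticity of DSIC), I would chain
\begin{align*}
\Rev_{\bigtimes_{i,j}V_{i,j}}(\mu^{\varepsilon})
&=\Rev_{\bigtimes_{i,j}\epsfloor{V_{i,j}}}(\mu)\ge\Rev_{\bigtimes_{i,j}W_{i,j}}(\mu)-m\varepsilon\\
&\ge\Rev_{\bigtimes_{i,j}W_{i,j}}(\OPT_{\varepsilon})-m\varepsilon\ge\Rev_{\bigtimes_{i,j}\epsfloor{V_{i,j}}}(\OPT_{\varepsilon})-2m\varepsilon\\
&\ge\Rev_{\bigtimes_{i,j}V_{i,j}}(\OPT)-\poly(n,m,L)\cdot\varepsilon,
\end{align*}
and then rescale $\varepsilon$ by a $\poly(n,m,L,H)$ factor to turn the $\poly(\cdot)\cdot\varepsilon$ loss into the additive $\varepsilon$ promised by the theorem while leaving the sample complexity polynomial.

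The main obstacle, and the sole reason for the weakly-downward-closed assumption, is the extension of $\mu$ from $\bigtimes_{i,j}\supp W_{i,j}$ to $[0,H]_\varepsilon^{n\cdot m}$. The ``replace-by-best-reply'' extension used in \cref{bic-alg} preserves BIC because it exploits expectations over the other bidders' bids, but it cannot preserve ex-post DSIC. The weakly-downward-closed property supplies, for every outcome $x\in X$, an outcome in which a specified bidder is zeroed out; this enables a pointwise DSIC-preserving extension in which any bidder whose reported rounded vector falls outside $\bigtimes_{j}\supp W_{k,j}$ is simply excluded from participation at zero payment, so that the only way such a bidder could profitably misreport is to move into the support, which costs them at most $O(mL\varepsilon)$ by Lipschitz continuity. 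A further $mL\varepsilon$ slack arises from the outer rounding $\vec{v}\mapsto\epsfloor{\vec{v}}$, but since every additional IC and revenue loss in the argument is of the form $\poly(n,m,L,H)\cdot\varepsilon$, the initial rescaling of $\varepsilon$ absorbs all of them and yields the theorem exactly as stated.
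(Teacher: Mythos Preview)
Your overall plan matches the paper's: round the samples, optimize over the empirical product distribution, bound $|M|$ via the counting argument, apply \cref{concentration} to get uniform revenue convergence over $M\cup\{\OPT_\varepsilon\}$, and chain the inequalities. The simplification you single out---dropping any analogue of \cref{bic-ic-concentrate} because (approximate) DSIC is distribution-agnostic---is exactly the simplification the paper makes relative to the BIC proof, and your chain of revenue inequalities is the paper's chain verbatim.

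The gap is in your extension of $\mu$ from $\bigtimes_{i,j}\supp W_{i,j}$ to $[0,H]_\varepsilon^{n\cdot m}$. You paraphrase weakly-downward-closed as supplying ``an outcome in which a specified bidder is zeroed out,'' but the definition is the reverse: for each $x\in X$ and $k\in[n]$ it supplies $x'\in X$ with $v_k(x')=v_k(x)$ and $v_i(x')=0$ for all $i\ne k$, i.e., it lets you \emph{keep} one bidder and zero \emph{everyone else}. Accordingly, your proposed extension---excluding the out-of-support bidder at zero payment---does not preserve approximate DSIC: such a bidder gets utility $0$ under truthful reporting, but by misreporting to some $w'_k\in\bigtimes_j\supp W_{k,j}$ could obtain $u_k\bigl(v_k,\mu(w'_k,v_{-k})\bigr)$, which can be as large as $mLH$; Lipschitz continuity does not help because nothing forces an out-of-support rounded value to be within $\varepsilon$ of any in-support one. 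The paper's extension goes in the opposite direction: when exactly one bidder $k$ is out of support, it assigns \emph{that} bidder the allocation and payment from $\mu(v'_k,v_{-k})$ for the $v'_k\in\bigtimes_j\supp W_{k,j}$ maximizing $k$'s expected utility, and uses weakly-downward-closed to zero the \emph{in-support} bidders (when two or more bidders are out of support, everyone is zeroed). Under this extension the out-of-support bidder cannot gain by moving into the support (by choice of $v'_k$), and each in-support bidder sees utility $0$ regardless of their own report, so $(2m\varepsilon)$-DSIC is preserved. With this correction, the rest of your argument goes through.
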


\begin{proof}
We assume that for every $i\in[n]$ and $j\in[m]$, we have~$S$ (to be determined later) independent samples $(v_{i,j}^s)_{s=1}^S$ from $V_{i,j}$. \cref{dsic-alg} presents our learning \lcnamecref{dsic-alg}, which is similar in nature to the one presented in \citet{dhp16} for certain single-parameter environments, however the analysis that we will use to show that it does not overfit the samples is completely different (even for single-parameter environments, where our analysis holds for arbitrary weakly downward closed allocation constraints
\begin{algorithm}[t]
\small
\DontPrintSemicolon
\SetAlgoHangIndent{3.65em}
\SetKwProg{Fn}{Function}{:}{}
\SetKwFunction{EmpOptDSIC}{EmpiricalOptimizeDSIC}
\SetKwFunction{OptOrcDSIC}{OptimizationOracleDSIC}

\Fn{\EmpOptDSIC{$H,X,\varepsilon,\delta,(v^s_{i,j})_{i\in[n],j\in[m]}^{s\in[S]}$}}{
\KwIn{For every $i\in[n],j\in[m]$, $(v^s_{i,j})_{s=1}$ is a sequence of $S=\tilde{O}\left(\tfrac{H^2}{\varepsilon^2}\cdot\bigl(\log\nicefrac{1}{\delta}+\tfrac{nmH}{\varepsilon}\bigr)\right)$ samples from $V_{i,j}$}
\KwOut{With probability $1\!-\!\delta$, an IR and $(4mL\varepsilon)$-DSIC mechanism defined over $[0,H]^{n\cdot m}$ with allocations in $X$, whose expected revenue from $\bigtimes_{i,j}V_{i,j}$ is up to an additive $4nmL\varepsilon$ smaller than that of any IR and DSIC mechanism defined over $\bigtimes_{i,j}\supp W_{i,j}$ with allocations in $X$}
\For{$i\in[n],j\in[m]$}{
\For{$s\in[S]$}{
$w^s_{i,j}\longleftarrow\epsfloor{v^s_{i,j}}$\;
}
$W_{i,j}\longleftarrow$ \emph{the uniform distribution over $(w^s_{i,j})_{s=1}^S$}\;
}
$\mu\longleftarrow\OptOrcDSIC{$H,X,\varepsilon,(W_{i,j})_{i\in[n],j\in[m]}$}$\tcp*{See definition below}
\KwRet{The mechanism that for input $(v_{i,j})_{i\in[n],j\in[m]}$ outputs $\mu\Bigl(\bigl(\epsfloor{v_{i,j}}\bigr)_{i\in[n],j\in[m]}\Bigr)$, modified to charge each bidder $mL\varepsilon$ less.}
}

\vspace{1em}

\Fn{\OptOrcDSIC{$H,X,\varepsilon,(W_{i,j})_{i\in[n],j\in[m]}$}}{
\KwIn{For every $i\in[n],j\in[m]$, $W_{i,j}$ is a distribution over $[0,H]_{\varepsilon}$}
\KwOut{An IR and $(2mL\varepsilon)$-DSIC mechanism defined over $[0,H]_{\varepsilon}^{n\cdot m}$ with allocations in $X$, which maximizes the expected revenue from $\bigtimes_{i,j}W_{i,j}$ among all IR and $(2mL\varepsilon)$-DSIC mechanisms defined over $\bigtimes_{i,j}\supp W_{i,j}$ with allocations in $X$}
$\mu\longleftarrow$ \emph{an IR and $(2mL\varepsilon)$-DSIC mechanism defined over $\bigtimes_{i,j}\supp W_{i,j}$ with allocations \linebreak \vphantom{~}\hspace{-.8em}in $X$, which maximizes the expected revenue from $\bigtimes_{i,j}W_{i,j}$ among all such \linebreak \vphantom{~}\hspace{-.8em}mechanisms}\;
\KwRet{The mechanism obtained by extending $\mu$ to be defined over all $(v_{i,j})_{i,j}\in[0,H]_{\varepsilon}^{n\cdot m}$ as follows: if $v_{k,j}\notin\supp W_{k,j}$ for precisely one bidder $k$ (and one or more parameters $j$), then an outcome is chosen such that bidder~$k$'s valuation and payment are the same as in $\mu(v'_k,v_{-k})$, for $v'_k=(v'_{k,j})_{j\in[m]}\in\supp\bigtimes_jW_{k,j}$ that maximizes $\expect{}{u_k\bigl(v_k,\mu(v'_k,v_{-k})\bigr)}$, and such that all other bidders' valuations of the outcome are zero irrespective of their valuations, and all other bidders' payments are zero. If $v_{i,j}\notin\supp W_{i,j}$ for more than one bidder $i$, then an outcome is chosen such that all bidders' valuations are zero irrespective of their valuations and all payments are zero.}
}

\caption{Empirical Multi-Parameter Up-to-$\varepsilon$ DSIC Revenue Maximization.}\label{dsic-alg}
\end{algorithm}%
--- note that this assumption guarantees that the modification to the allocation rule of the mechanism returned by \OptOrcDSIC still results in allocations in $X$).
We now analyze \cref{dsic-alg}. Note that for every $i,j,s$, we have that $w_{i,j}^s\sim\epsfloor{V_{i,j}}$ independently.

Let $\mathcal{V}$ be the set of all product distributions $\bigtimes_{i,j} W'_{i,j}$ where each $W'_{i,j}$ is the uniform distribution over some multiset of $S$ values from $[0,H]_{\varepsilon}$. Let $M$ be the set of all mechanisms of the form \OptOrcDSIC{$H,X,\varepsilon,\bigtimes_{i,j} W'_{i,j}$} for all $\bigtimes_{i,j} W'_{i,j} \in \mathcal{V}$.
At the heart of our analysis is the observation that $|\mathcal{V}|<(S+1)^{n\cdot m\cdot\lceil\nicefrac{H}{\varepsilon}\rceil}$. (Crucially, this expression has~$S$ only in the base and not in the exponent!) Indeed, for every $\bigtimes_{i,j} W'_{i,j} \in \mathcal{V}$, for every~$i,j$, and for every integer multiples of $\varepsilon$ in $[0,H]$ (there are $\lceil\nicefrac{H}{\varepsilon}\rceil$ many such values), the probability of this value in $W'_{i,j}$ can be any of the $S+1$ values $0,\nicefrac{1}{S},\ldots,1$. (The inequality on $|\mathcal{V}|$ is strict since, for example, not all probabilities can be $0$ simultaneously.)
Therefore, $|M|<(S+1)^{n\cdot m\cdot\lceil\nicefrac{H}{\varepsilon}\rceil}$. 

Let $\OPT$ be the IR and DSIC auction defined over $\bigtimes_{i,j}\supp V_{i,j}$ that maximizes the expected revenue (among such auctions) from $\bigtimes_{i,j}V_{i,j}$ (our learning \lcnamecref{dsic-alg} cannot hope to find $\OPT$, but in our analysis we may carefully reason about it, as it is nonetheless well defined; in particular, it does not depend on $S$). Let $\OPT_{\varepsilon}$ be the (randomized) mechanism
defined over $\bigtimes_{i,j}\epsfloor{V_{i,j}}$ as follows: let $w=(w_{i,j})_{i\in[n],j\in[m]}$ be an input valuation; for each $i,j$, independently draw $v_{i,j}\sim V_{i,j}|_{[w_{i,j},w_{i,j}+\varepsilon)}$; let $(x,p)=\OPT\bigl((v_{i,j})_{i\in[n],j\in[m]}\bigr)$; the allocation of $\OPT_{\varepsilon}(w)$ is $x$, and the payment of each bidder $i$ is $p_i-mL\varepsilon$.

\begin{lemma}\label{dsic-opteps-opt}
$\OPT_{\varepsilon}$ is an IR and $(2mL\varepsilon)$-DSIC mechanism whose expected revenue from $\bigtimes_{i,j}\epsfloor{V_{i,j}}$ is $nmL\varepsilon$ smaller than the expected revenue of $\OPT$ from $\bigtimes_{i,j}V_{i,j}$.
\end{lemma}

We will choose $S$ so that with probability at least $1\!-\!\delta$, it simultaneously holds for all mechanisms $\mu\in M \cup \{\OPT_{\varepsilon}\}$ that
\begin{equation}\label{dsic-concentrate}
\bigl|\Rev_{\bigtimes_{i,j}W_{i,j}}(\mu)-\Rev_{\bigtimes_{i,j}\epsfloor{V_{i,j}}}(\mu)\bigr|\le nmL\varepsilon.
\end{equation}

By \cref{concentration} (with $\ell\eqdef n\cdot m$, and note that any mechanism's revenue is bounded by $nmLH$), we have that for each mechanism $\mu\in M \cup \{\OPT_{\varepsilon}\}$ separately, \cref{dsic-concentrate} holds with probability at least $1-\frac{4He^{-\varepsilon^2S/(8H^2)}}{\varepsilon}$. Choosing $S$ so that this probability is at least $1-\frac{\delta}{|M|+1}$, we obtain that \cref{dsic-concentrate} holds simultaneously for all mechanisms $\mu\in M \cup \{\OPT_{\varepsilon}\}$ with probability at least $1\!-\!\delta$. We now estimate $S$. Since $|M\cup\{OPT_{\varepsilon}\}|\leq(S+1)^{n\cdot m\cdot\lceil\nicefrac{H}{\varepsilon}\rceil}$, we have that it is enough to take $S$ such that
\[
S\ge\tfrac{8H^2}{\varepsilon^2}\cdot\bigl(\log\tfrac{4H}{\varepsilon}+\log\nicefrac{1}{\delta}+nm\lceil\nicefrac{H}{\varepsilon}\rceil\log(S+1)\bigr).
\]
Therefore,\footnote{See \cref{recursion-solution}.} there exists an appropriate
\[
S=O\left(\tfrac{H^2}{\varepsilon^2}\log\nicefrac{1}{\delta}+\tfrac{nmH^3}{\varepsilon^3}\log\tfrac{nmH^3}{\varepsilon^3}\right)=\tilde{O}\left(\tfrac{H^2}{\varepsilon^2}\cdot\bigl(\log\nicefrac{1}{\delta}+\tfrac{nmH}{\varepsilon}\bigr)\right).
\]

Let $\mu$ be the output of the call to $\OptOrcDSIC$ in \cref{dsic-alg} (note that it is well defined since $X$ is weakly downward closed), and let $\mu^{\varepsilon}$ be the final output of the \lcnamecref{dsic-alg} (the output of $\EmpOptDSIC$). 

\begin{lemma}\label{dsic-mueps-mu}
$\mu^{\varepsilon}$ is an IR and $(4mL\varepsilon$)-DSIC mechanism whose expected revenue from $\bigtimes_{i,j}V_{i,j}$ is $nmL\varepsilon$ smaller than the expected revenue of $\mu$ from $\bigtimes_{i,j}\epsfloor{V_{i,j}}$.
\end{lemma}

So, we have that $\mu^{\varepsilon}$ is an IR and $(4m\varepsilon)$-DSIC mechanism (by \cref{dsic-mueps-mu}) and that with probability at least $1\!-\!\delta$:
\begin{align*}
&\Rev_{\bigtimes_{i,j}V_{i,j}}(\mu^{\varepsilon})=\tag{by \cref{dsic-mueps-mu}}\\
=&\Rev_{\bigtimes_{i,j}\epsfloor{V_{i,j}}}(\mu)-nmL\varepsilon\ge\tag{by \cref{dsic-concentrate} and since $\mu\in M$}\\
\ge&\Rev_{\bigtimes_{i,j}W_{i,j}}(\mu)-2nmL\varepsilon\tag{since $\OPT_{\varepsilon}$ is IR and $(2m\varepsilon)$-DSIC and by optimality of $\mu$}\ge\\
\ge&\Rev_{\bigtimes_{i,j}W_{i,j}}(\OPT_{\varepsilon})-2nmL\varepsilon\tag{by \cref{dsic-concentrate} for $\OPT_{\varepsilon}$}\ge\\
\ge&\Rev_{\bigtimes_{i,j}\epsfloor{V_{i,j}}}(\OPT_{\varepsilon})-3nmL\varepsilon=\tag{by \cref{dsic-opteps-opt}}\\
=&\Rev_{\bigtimes_{i,j}V_{i,j}}(\OPT)-4nmL\varepsilon.\tag*{\qedhere}
\end{align*}
\end{proof}

Two main differences between the statements of \cref{polynomial-dsic} and of \cref{polynomial-bic} stand out: the technical requirement that $X$ be weakly downward closed, and the fact that the learned mechanism has weaker incentive properties than the benchmark. In \cref{bic} we have avoided the latter within the context of Bayesian incentive compatibility via \cref{eps-bic-bic}. It would be interesting to understand whether and to what extent a similar result also applies within the context of dominant-strategy incentive compatibility.

\begin{open-problem}\label{eps-dsic-dsic}
For $n\!>\!1$ bidders and $m\!>\!1$ items, even with additive bidders, is the maximum revenue attainable by any IR and $\varepsilon$-DSIC auction from a given product distribution at most negligibly ($\poly(\varepsilon)\!\cdot\!\poly(n,m,H)$) greater than the maximum revenue attainable by any IR and DSIC auction from the same distribution?
\end{open-problem}

Since the $\varepsilon$-dominant strategy incentive compatibility implies $\varepsilon$-Bayesian incentive compatibility, then the question in \cref{eps-dsic-dsic} holds whenever the optimal BIC auction is also DSIC. Recently, \citet{y17} has shown that in general the revenue obtained by the optimal IR and BIC auctions can be higher than that obtained by the optimal IR and DSIC auctions --- it is such cases in which \cref{eps-dsic-dsic} is open and interesting. An affirmative result for \cref{eps-dsic-dsic} would of course immediately imply an analogue of \cref{eps-bic-bic-samples} for DSIC auctions, which, if true, would allow \cref{polynomial-dsic} to be strengthened to allow for learning a precisely DSIC auction.

\begin{open-problem}
Given an IR and $\varepsilon$-DSIC auction and given some product distribution over $n\!>\!1$ bidders and $m\!>\!1$ items, even with additive bidders, is it possible to transform the given auction into an IR and (precisely) DSIC auction with negligible ($\poly(\varepsilon)\!\cdot\!\poly(n,m,H)$) revenue loss from that distribution using polynomially many samples from this product distribution?
\end{open-problem}

\section{Omitted Proofs}\label{proofs}

\subsection{Proofs Omitted from Section~\ref{bic}}

\begin{proof}[Proof of \cref{bic-opteps-opt}]

Individual Rationality: Let $k\in[n]$, let $w=(w_{i,j})_{i\in[n],j\in[m]}$ be a (rounded) valuation profile, and let $v=(v_{i,j})_{i\in[n],j\in[m]}$ be drawn as defined above. Then:
\begin{align*}
&\expect{}{u_k\bigl(w_k,\OPT_{\varepsilon}(w)\bigr)}=\\
=\;&\expect{}{u_k\bigl(w_k,\OPT(v)\bigr)+mL\varepsilon}\ge\tag{since valuations are Lipschitz}\\
\ge\;&\expect{}{u_k\bigl(v_k,\OPT(v)\bigr)-\sum_{j\in[m]}L(v_{k,j}-w_{k,j})+mL\varepsilon}\ge\tag{since $v_{k,j}<w_{k,j}+\varepsilon$}\\
\ge\;&\expect{}{u_k\bigl(v_k,\OPT(v)\bigr)}\ge0.\tag{since $\OPT$ is IR}
\end{align*}

Incentive Compatibility: Let $k\in[n]$, let $w_k=(w_{k,j})_{j\in[m]}$ and $w'_k=(w'_{k,j})_{j\in[m]}$ be (rounded) valuations, and let $v_k=(v_{k,j})_{j\in[m]}$ and $v'_k=(v'_{k,j})_{j\in[m]}$ be respectively drawn as defined above. Then:
\begin{align*}
&\expect{w_{-k}\sim\bigtimes_{\substack{i,j\\i\ne k}}\epsfloor{V_{i,j}}}{u_k\bigl(w_k,\OPT_{\varepsilon}(w_k,w_{-k})\bigr)}=\\
=\;&\expect{v_{-k}\sim\bigtimes_{\substack{i,j\\i\ne k}}V_{i,j}}{u_k\bigl(w_k,\OPT(v_k,v_{-k})\bigr)+mL\varepsilon}\ge\tag{since valuations are Lipschitz}\\
\ge\;&\expect{v_{-k}\sim\bigtimes_{\substack{i,j\\i\ne k}}V_{i,j}}{u_k\bigl(v_k,\OPT(v_k,v_{-k})\bigr)-\sum_{j\in[m]}L(v_{k,j}-w_{k,j})+mL\varepsilon}\ge\tag{since $v_{k,j}<w_{k,j}+\varepsilon$}\\
\ge\;&\expect{v_{-k}\sim\bigtimes_{\substack{i,j\\i\ne k}}V_{i,j}}{u_k\bigl(v_k,\OPT(v_k,v_{-k})\bigr)}\ge\tag{since $\OPT$ is $(4mL\varepsilon)$-BIC for $\bigtimes_{i,j}V_{i,j}$}\\
\ge\;&\expect{v_{-k}\sim\bigtimes_{\substack{i,j\\i\ne k}}V_{i,j}}{u_k\bigl(v_k,\OPT(v'_k,v_{-k})\bigr)}-4mL\varepsilon\ge\tag{since valuations are Lipschitz}\\
\ge\;&\expect{v_{-k}\sim\bigtimes_{\substack{i,j\\i\ne k}}V_{i,j}}{u_k\bigl(w_k,\OPT(v'_k,v_{-k})\bigr)-\sum_{j\in[m]}L(v_{k,j}-w_{k,j})}-4mL\varepsilon\ge\tag{since $v_{k,j}<w_{k,j}+\varepsilon$}\\
\ge\;&\expect{v_{-k}\sim\bigtimes_{\substack{i,j\\i\ne k}}V_{i,j}}{u_k\bigl(w_k,\OPT(v'_k,v_{-k})\bigr)}-5mL\varepsilon=\\
=\;&\expect{w_{-k}\sim\bigtimes_{\substack{i,j\\i\ne k}}\epsfloor{V_{i,j}}}{u_k\bigl(w_k,\OPT_{\varepsilon}(w'_k,w_{-k})\bigr)-mL\varepsilon}-5mL\varepsilon=\\
=\;&\expect{w_{-k}\sim\bigtimes_{\substack{i,j\\i\ne k}}\epsfloor{V_{i,j}}}{u_k\bigl(w_k,\OPT_{\varepsilon}(w'_k,w_{-k})\bigr)}-6mL\varepsilon.
\end{align*}

Revenue: Let $OPT_{\varepsilon}=(x_{\varepsilon},p_{\varepsilon})$ and let $\OPT=(x,p)$. Then:
\begin{align*}
\Rev_{\bigtimes_{i,j}\epsfloor{V_{i,j}}}(\OPT_{\varepsilon})=\;
&\expect{w\sim\bigtimes_{i,j}\epsfloor{V_{i,j}}}{p_{\varepsilon}(w)}=\\
=\;&\expect{v\sim\bigtimes_{i,j}V_{i,j}}{p(v)-\sum_{i\in[n]}mL\varepsilon}=\\
=\;&\expect{v\sim\bigtimes_{i,j}V_{i,j}}{p(v)}-nmL\varepsilon=\\
=\;&\Rev_{\bigtimes_{i,j}V_{i,j}}(\OPT)-nmL\varepsilon.\qedhere
\end{align*}
\end{proof}

\begin{proof}[Proof of \cref{bic-mu-opteps}]
Incentive Compatibility of $\OPT_{\varepsilon}$: Let $k\in[n]$, let $w_k=(w_{k,j})_{j\in[m]}$ and $w'_k=(w'_{k,j})_{j\in[m]}$ be (rounded) valuations. Then:
\begin{align*}
&\expect{w_{-k}\sim\bigtimes_{\substack{i,j\\i\ne k}}W_{i,j}}{u_k\bigl(w_k,\OPT_{\varepsilon}(w_k,w_{-k})\bigr)}\ge\tag{by \cref{bic-ic-concentrate} for $\OPT_{\varepsilon}$}\\
\ge\;&\expect{w_{-k}\sim\bigtimes_{\substack{i,j\\i\ne k}}\epsfloor{V_{i,j}}}{u_k\bigl(w_k,\OPT_{\varepsilon}(w_k,w_{-k})\bigr)}-mL\varepsilon\ge\tag{since $\OPT_{\varepsilon}$ is $(6mL\varepsilon)$-BIC for $\bigtimes_{i,j}\epsfloor{V_{i,j}}$}\\
\ge\;&\expect{w_{-k}\sim\bigtimes_{\substack{i,j\\i\ne k}}\epsfloor{V_{i,j}}}{u_k\bigl(w_k,\OPT_{\varepsilon}(w'_k,w_{-k})\bigr)}-7mL\varepsilon\ge\tag{by \cref{bic-ic-concentrate} for $\OPT_{\varepsilon}$}\\
\ge\;&\expect{w_{-k}\sim\bigtimes_{\substack{i,j\\i\ne k}}W_{i,j}}{u_k\bigl(w_k,\OPT_{\varepsilon}(w'_k,w_{-k})\bigr)}-8mL\varepsilon.
\end{align*}

Revenue: by \cref{eps-bic-bic} (for $\bigtimes_{i,j}W_{i,j}$), since $\OPT_{\varepsilon}$ is IR and $(8mL\varepsilon)$-BIC for $\bigtimes_{i,j}W_{i,j}$ and by optimality of $\mu$.

Incentive Compatibility of $\mu$: Let $k\in[n]$, let $w_k=(w_{k,j})_{j\in[m]}$ and $w'_k=(w'_{k,j})_{j\in[m]}$ be (rounded) valuations. Then:
\begin{align*}
&\expect{w_{-k}\sim\bigtimes_{\substack{i,j\\i\ne k}}\epsfloor{V_{i,j}}}{u_k\bigl(w_k,\mu(w_k,w_{-k})\bigr)}\ge\tag{by \cref{bic-ic-concentrate} and since $\mu\in M$}\\
\ge\;&\expect{w_{-k}\sim\bigtimes_{\substack{i,j\\i\ne k}}W_{i,j}}{u_k\bigl(w_k,\mu(w_k,w_{-k})\bigr)}-mL\varepsilon\ge\tag{since $\mu$ is BIC for $\bigtimes_{i,j}W_{i,j}$}\\
\ge\;&\expect{w_{-k}\sim\bigtimes_{\substack{i,j\\i\ne k}}W_{i,j}}{u_k\bigl(w_k,\mu(w'_k,w_{-k})\bigr)}-mL\varepsilon\ge\tag{by \cref{bic-ic-concentrate} since $\mu\in M$}\\
\ge\;&\expect{w_{-k}\sim\bigtimes_{\substack{i,j\\i\ne k}}\epsfloor{V_{i,j}}}{u_k\bigl(w_k,\mu(w'_k,w_{-k})\bigr)}-2mL\varepsilon.\tag*{\qedhere}
\end{align*}
\end{proof}

\begin{proof}[Proof of \cref{bic-mueps-mu}]
Individual Rationality: let $k\in[n]$ and let $v=(v_{i,j})_{i\in[n],j\in[m]}$ be a valuation profile. Then:
\begin{align*}
&\expect{}{u_k\bigl(v_k,\mu^{\varepsilon}(v)\bigr)}=\\
=\;&\expect{}{u_k\bigl(v_k,\mu\bigl((\epsfloor{v_{i,j}})_{i\in[n],j\in[m]}\bigr)\bigr)+mL\varepsilon}\ge\tag{since valuations are Lipschitz}\\
\ge\;&\expect{}{u_k\bigl((\epsfloor{v_{k,j}})_{j\in[m]},\mu\bigl((\epsfloor{v_{i,j}})_{i\in[n],j\in[m]}\bigr)\bigr)-\sum_{j\in[m]}L(v_{k,j}-\epsfloor{v_{k,j}})+mL\varepsilon}\ge\\
\ge\;&\expect{}{u_k\bigl((\epsfloor{v_{k,j}})_{j\in[m]},\mu\bigl((\epsfloor{v_{i,j}})_{i\in[n],j\in[m]}\bigr)\bigr)}\ge0.\tag{since $\mu$ is IR}
\end{align*}

Revenue: Let $\mu^{\varepsilon}=(x^{\varepsilon},p^{\varepsilon})$ and let $\mu=(x,p)$. Then:
\begin{align*}
\Rev_{\bigtimes_{i,j}V_{i,j}}(\mu^{\varepsilon})=\;
&\expect{v\sim\bigtimes_{i,j}V_{i,j}}{p^{\varepsilon}(v)}=\\
=\;&\expect{w\sim\bigtimes_{i,j}\epsfloor{V_{i,j}}}{p(w)-\sum_{i\in[n]}mL\varepsilon}=\\
=\;&\expect{w\sim\bigtimes_{i,j}\epsfloor{V_{i,j}}}{p(w)}-nmL\varepsilon=\\
=\;&\Rev_{\bigtimes_{i,j}\epsfloor{V_{i,j}}}(\mu)-nmL\varepsilon.
\end{align*}

Incentive Compatibility: Let $k\in[n]$, let $v_k=(v_{k,j})_{j\in[m]}$ and $v'_k=(v'_{k,j})_{j\in[m]}$ be valuations, and denote
$\epsfloor{v_k}=(\epsfloor{v_{k,j}})_{j\in[m]}$ and $\epsfloor{v'_k}=(\epsfloor{v'_{k,j}})_{j\in[m]}$. Then:
\begin{align*}
&\expect{v_{-k}\sim\bigtimes_{\substack{i,j\\i\ne k}}V_{i,j}}{u_k\bigl(v_k,\mu^{\varepsilon}(v_k,v_{-k})\bigr)}=\\
=\;&\expect{w_{-k}\sim\bigtimes_{\substack{i,j\\i\ne k}}\epsfloor{V_{i,j}}}{u_k\bigl(v_k,\mu(\epsfloor{v_k},w_{-k})\bigr)+mL\varepsilon}\ge\tag{since valuations are Lipschitz}\\
\ge\;&\expect{w_{-k}\sim\bigtimes_{\substack{i,j\\i\ne k}}\epsfloor{V_{i,j}}}{u_k\bigl(\epsfloor{v_k},\mu(\epsfloor{v_k},w_{-k})\bigr)-\sum_{j\in[m]}L\bigl(v_{k,j}-\epsfloor{v_{k,j}}\bigr)+mL\varepsilon}\ge\\
\ge\;&\expect{w_{-k}\sim\bigtimes_{\substack{i,j\\i\ne k}}\epsfloor{V_{i,j}}}{u_k\bigl(\epsfloor{v_k},\mu(\epsfloor{v_k},w_{-k})\bigr)}\ge\tag{since $\mu$ is $(2mL\varepsilon)$-BIC for $\bigtimes_{i,j}\epsfloor{V_{i,j}}$}\\
\ge\;&\expect{w_{-k}\sim\bigtimes_{\substack{i,j\\i\ne k}}\epsfloor{V_{i,j}}}{u_k\bigl(\epsfloor{v_k},\mu(\epsfloor{v'_k},w_{-k})\bigr)}-2mL\varepsilon\ge\tag{since valuations are Lipschitz}\\
\ge\;&\expect{w_{-k}\sim\bigtimes_{\substack{i,j\\i\ne k}}\epsfloor{V_{i,j}}}{u_k\bigl(v_k,\mu(\epsfloor{v'_k},w_{-k})\bigr)-\sum_{j\in[m]}L\bigl(v_{k,j}-\epsfloor{v_{k,j}}\bigr)}-2mL\varepsilon\ge\\
\ge\;&\expect{w_{-k}\sim\bigtimes_{\substack{i,j\\i\ne k}}\epsfloor{V_{i,j}}}{u_k\bigl(v_k,\mu(\epsfloor{v'_k},w_{-k})\bigr)}-3mL\varepsilon=\\
=\;&\expect{v_{-k}\sim\bigtimes_{\substack{i,j\\i\ne k}}V_{i,j}}{u_k\bigl(v_k,\mu^{\varepsilon}\bigl(v_k,v_{-k}\bigr)}-4mL\varepsilon.\tag*{\qedhere}
\end{align*}
\end{proof}

\subsection{Proofs Omitted from Appendix \ref{dsic}}

\begin{proof}[Proof of \cref{dsic-opteps-opt}]
Individual Rationality: Let $k\in[n]$, let $w=(w_{i,j})_{i\in[n],j\in[m]}$ be a (rounded) valuation profile, and let $v=(v_{i,j})_{i\in[n],j\in[m]}$ be drawn as defined above. Then:
\begin{align*}
&\expect{}{u_k\bigl(w_k,\OPT_{\varepsilon}(w)\bigr)}=\\
=\;&\expect{}{u_k\bigl(w_k,\OPT(v)\bigr)+mL\varepsilon}\ge\tag{since valuations are Lipschitz}\\
\ge\;&\expect{}{u_k\bigl(v_k,\OPT(v)\bigr)-\sum_{j\in[m]}L(v_{k,j}-w_{k,j})+mL\varepsilon}\ge\tag{since $v_{k,j}<w_{k,j}+\varepsilon$}\\
\ge\;&\expect{}{u_k\bigl(v_k,\OPT(v)\bigr)}\ge0.\tag{since $\OPT$ is IR}
\end{align*}

Incentive Compatibility: Let $k\in[n]$, let $w=(w_{i,j})_{i\in[n],j\in[m]}$ be a (rounded) valuation profile, let $w'_k=(w'_{k,j})_{j\in[m]}$ be a (rounded) valuation, and let $v=(v_{i,j})_{i\in[n],j\in[m]}$ and $v'_k=(v'_{k,j})_{j\in[m]}$ be respectively drawn as defined above. Then:
\begin{align*}
&\expect{}{u_k\bigl(w_k,\OPT_{\varepsilon}(w_k,w_{-k})\bigr)}=\\
=\;&\expect{}{u_k\bigl(w_k,\OPT(v_k,v_{-k})\bigr)+mL\varepsilon}\ge\tag{since valuations are Lipschitz}\\
\ge\;&\expect{}{u_k\bigl(v_k,\OPT(v_k,v_{-k})\bigr)-\sum_{j\in[m]}L(v_{k,j}-w_{k,j})+mL\varepsilon}\ge\tag{since $v_{k,j}<w_{k,j}+\varepsilon$}\\
\ge\;&\expect{}{u_k\bigl(v_k,\OPT(v_k,v_{-k})\bigr)}\ge\tag{since $\OPT$ is DSIC}\\
\ge\;&\expect{}{u_k\bigl(v_k,\OPT(v'_k,v_{-k})\bigr)}\ge\tag{since valuations are Lipschitz}\\
\ge\;&\expect{}{u_k\bigl(w_k,\OPT(v'_k,v_{-k})\bigr)-\sum_{j\in[m]}L(v_{k,j}-w_{k,j})}\ge\tag{since $v_{k,j}<w_{k,j}+\varepsilon$}\\
\ge\;&\expect{}{u_k\bigl(w_k,\OPT(v'_k,v_{-k})\bigr)}-mL\varepsilon=\\
=\;&\expect{}{u_k\bigl(w_k,\OPT_{\varepsilon}(w'_k,w_{-k})\bigr)-mL\varepsilon}-mL\varepsilon=\\
=\;&\expect{}{u_k\bigl(w_k,\OPT_{\varepsilon}(w'_k,w_{-k})\bigr)}-2mL\varepsilon.
\end{align*}

Revenue: Let $OPT_{\varepsilon}=(x_{\varepsilon},p_{\varepsilon})$ and let $\OPT=(x,p)$. Then:
\begin{align*}
\Rev_{\bigtimes_{i,j}\epsfloor{V_{i,j}}}(\OPT_{\varepsilon})=\;
&\expect{w\sim\bigtimes_{i,j}\epsfloor{V_{i,j}}}{p_{\varepsilon}(w)}=\\
=\;&\expect{v\sim\bigtimes_{i,j}V_{i,j}}{p(v)-\sum_{i\in[n]}mL\varepsilon}=\\
=\;&\expect{v\sim\bigtimes_{i,j}V_{i,j}}{p(v)}-nmL\varepsilon=\\
=\;&\Rev_{\bigtimes_{i,j}V_{i,j}}(\OPT)-nmL\varepsilon.\qedhere
\end{align*}
\end{proof}

\begin{proof}[Proof of \cref{dsic-mueps-mu}]
Individual Rationality: let $k\in[n]$ and let $v=(v_{i,j})_{i\in[n],j\in[m]}$ be a valuation profile. Then:
\begin{align*}
&\expect{}{u_k\bigl(v_k,\mu^{\varepsilon}(v)\bigr)}=\\
=\;&\expect{}{u_k\bigl(v_k,\mu\bigl((\epsfloor{v_{i,j}})_{i\in[n],j\in[m]}\bigr)\bigr)+mL\varepsilon}\ge\tag{since valuations are Lipschitz}\\
\ge\;&\expect{}{u_k\bigl((\epsfloor{v_{k,j}})_{j\in[m]},\mu\bigl((\epsfloor{v_{i,j}})_{i\in[n],j\in[m]}\bigr)\bigr)-\sum_{j\in[m]}L(v_{k,j}-\epsfloor{v_{k,j}})+mL\varepsilon}\ge\\
\ge\;&\expect{}{u_k\bigl((\epsfloor{v_{k,j}})_{j\in[m]},\mu\bigl((\epsfloor{v_{i,j}})_{i\in[n],j\in[m]}\bigr)\bigr)}\ge0.\tag{since $\mu$ is IR}
\end{align*}

Incentive Compatibility: Let $k\in[n]$, let $v=(v_{i,j})_{i\in[n],j\in[m]}$ be a valuation profile, let $v'_k=(v'_{k,j})_{j\in[m]}$ be a valuation, and denote
$\epsfloor{v_k}=(\epsfloor{v_{k,j}})_{j\in[m]}$, $\epsfloor{v'_k}=(\epsfloor{v'_{k,j}})_{j\in[m]}$ and $\epsfloor{v_{-k}}=(\epsfloor{v_{i,j}})_{i\in[n]\setminus\{k\},j\in[m]}$. Then:
\begin{align*}
&\expect{}{u_k\bigl(v_k,\mu^{\varepsilon}(v_k,v_{-k})\bigr)}=\\
=\;&\expect{}{u_k\bigl(v_k,\mu(\epsfloor{v_k},\epsfloor{v_{-k}})\bigr)+mL\varepsilon}\ge\tag{since valuations are Lipschitz}\\
\ge\;&\expect{}{u_k\bigl(\epsfloor{v_k},\mu(\epsfloor{v_k},\epsfloor{v_{-k}})\bigr)-\sum_{j\in[m]}L\bigl(v_{k,j}-\epsfloor{v_{k,j}}\bigr)+mL\varepsilon}\ge\\
\ge\;&\expect{}{u_k\bigl(\epsfloor{v_k},\mu(\epsfloor{v_k},\epsfloor{v_{-k}})\bigr)}\ge\tag{since $\mu$ is $(2mL\varepsilon)$-DSIC}\\
\ge\;&\expect{}{u_k\bigl(\epsfloor{v_k},\mu(\epsfloor{v'_k},\epsfloor{v_{-k}})\bigr)}-2mL\varepsilon\ge\tag{since valuations are Lipschitz}\\
\ge\;&\expect{}{u_k\bigl(v_k,\mu(\epsfloor{v'_k},\epsfloor{v_{-k}})\bigr)-\sum_{j\in[m]}L\bigl(v_{k,j}-\epsfloor{v_{k,j}}\bigr)}-2mL\varepsilon\ge\\
\ge\;&\expect{}{u_k\bigl(v_k,\mu(\epsfloor{v'_k},\epsfloor{v_{-k}})\bigr)}-3mL\varepsilon=\\
=\;&\expect{}{u_k\bigl(v_k,\mu^{\varepsilon}\bigl(v'_k,v_{-k}\bigr)}-4mL\varepsilon.
\end{align*}

Revenue: Let $\mu^{\varepsilon}=(x^{\varepsilon},p^{\varepsilon})$ and let $\mu=(x,p)$. Then:
\begin{align*}
\Rev_{\bigtimes_{i,j}V_{i,j}}(\mu^{\varepsilon})=\;
&\expect{v\sim\bigtimes_{i,j}V_{i,j}}{p^{\varepsilon}(v)}=\\
=\;&\expect{w\sim\bigtimes_{i,j}\epsfloor{V_{i,j}}}{p(w)-\sum_{i\in[n]}mL\varepsilon}=\\
=\;&\expect{w\sim\bigtimes_{i,j}\epsfloor{V_{i,j}}}{p(w)}-nmL\varepsilon=\\
=\;&\Rev_{\bigtimes_{i,j}\epsfloor{V_{i,j}}}(\mu)-nmL\varepsilon.\tag*{\qedhere}
\end{align*}
\end{proof}

\end{document}